\documentclass[11pt]{article}

\usepackage[a4paper,margin=0.85in]{geometry}
\usepackage[T1]{fontenc}
\usepackage{stix}
\usepackage[numbers,sort&compress]{natbib}
\usepackage{mathtools}
\usepackage{amsthm}
\usepackage{bm}
\usepackage{graphicx}
\usepackage{booktabs}
\usepackage{makecell}
\usepackage{array}
\usepackage{multirow}
\usepackage{subcaption}
\usepackage{placeins}
\usepackage[dvipsnames]{xcolor}
\usepackage[colorlinks=true,linkcolor=MidnightBlue,citecolor=MidnightBlue,urlcolor=MidnightBlue]{hyperref}
\usepackage[nameinlink,capitalize,noabbrev]{cleveref}
\graphicspath{{figs/}}

\newtheorem{theorem}{Theorem}
\newtheorem{lemma}[theorem]{Lemma}
\newtheorem{proposition}[theorem]{Proposition}
\newtheorem{corollary}[theorem]{Corollary}
\theoremstyle{definition}
\newtheorem{assumption}[theorem]{Assumption}
\newtheorem{definition}[theorem]{Definition}
\theoremstyle{remark}
\newtheorem{remark}[theorem]{Remark}
\crefname{assumption}{assumption}{assumptions}
\Crefname{assumption}{Assumption}{Assumptions}

\newcommand{\Uh}{\mathcal{U}}
\newcommand{\Uzero}{\mathcal{U}_0}

\newcommand{\rcut}{r_{\rm cut}}
\newcommand{\Scal}{\mathcal{S}}

\newcommand{\Hhom}{H^{\rm hom}}
\newcommand{\Hess}{H}
\newcommand{\Fhom}{F}
\newcommand{\Id}{I}

\newcommand{\Tr}{\operatorname{Tr}}
\newcommand{\dist}{\operatorname{dist}}

\newcommand{\BZ}{\mathcal{B}}
\newcommand{\dd}{\,{\rm d}}

\title{Local Surrogates for Harmonic Vibrational Entropy in Multilattices}

\author{%
Tina Torabi\\
Department of Mathematics\\
University of British Columbia\\
\texttt{ttorabi@math.ubc.ca}
\and
Jiale Linghu\\
School of Mathematics and Statistics\\
Xidian University
\and
Yangshuai Wang\\
Department of Mathematics\\
National University of Singapore\\
\texttt{yswang@nus.edu.sg}\\
\emph{Corresponding author}
}
\date{}

\begin{document}
\maketitle

\begin{abstract}
Harmonic vibrational entropy is a key finite-temperature contribution to defect thermodynamics, but direct evaluation by dense Hessian diagonalization scales cubically with atom count and is too costly for supercell convergence, migration-path sampling, and high-throughput defect studies. We develop local surrogate models for harmonic entropy in multilattices, including semiconductors, ordered alloys, and multispecies crystals with multi-atom bases and internal degrees of freedom. Unlike Bravais lattices, multilattices contain internal-shift degrees of freedom and optical phonon modes coupled to acoustic strain; entropy models must therefore resolve sublattice and species labels. For finite-range or screened atomistic models, we prove sublattice-resolved locality and cutoff-error estimates that justify replacing the global entropy calculation by a local, symmetry-respecting regression problem with controlled truncation error. This turns vibrational entropy from a global spectral calculation into a reusable local site model with linear evaluation cost at fixed cutoff. Numerical tests confirm the predicted locality behavior and show that sublattice/species-resolved surrogates achieve accurate regression, transfer across supercell sizes, and linear-scaling evaluation on Stillinger--Weber Si and CdTe benchmarks. The resulting method enables repeated harmonic-entropy evaluations in multispecies defect calculations while retaining explicit stability, truncation, and surrogate-error controls.
\end{abstract}

\noindent\textbf{Keywords:} Vibrational entropy, Multilattice, Crystalline defects, Atomic cluster expansion, Surrogate model, Thermodynamic limit

\bigskip
\section{Introduction}
\label{sec:introduction}

Vibrational entropy is a leading-order finite-temperature contribution to defect thermodynamics in crystals~\cite{Maradudin1971Lattice,Wallace1972Thermodynamics,Catlow1991Defects}. In the harmonic approximation it is obtained from the positive spectrum of a supercell Hessian, equivalently from a log-determinant after removing rigid translations. For one configuration this is routine; in practical defect studies it becomes a repeated global spectral calculation. Entropy differences enter formation free energies, equilibrium defect concentrations, transition-state prefactors for diffusion, migration barriers sampled by nudged-elastic-band images, and thermodynamic-integration or uncertainty-sampling calculations~\cite{Vineyard1957FrequencyFactors,Henkelman2000NEB,KirkwoodTI1935,Frenkel1984FreeEnergy,Grabowski2009AnharmonicFreeEnergy,Foiles1994VacancyEntropy,Glensk2014Vacancies}. The dense eigendecomposition scales as $O(N^3)$ in the number of atoms, so harmonic entropy can become the limiting cost in supercell convergence, alloy sampling, and high-throughput defect calculations.

This bottleneck is especially important for multilattices. Many technologically important crystals are multilattices rather than one-atom Bravais crystals: semiconductors, ordered intermetallics, B2 and zinc-blende compounds, wurtzite materials, and multispecies crystals with a basis. Their mechanical and thermodynamic response couples long-wavelength elastic strain to relative sublattice relaxations, often called internal shifts~\cite{KohlhoffGumbsch1991Multilattice,ELuYang2006Multilattice,OlsonOrtnerWangZhang2023MultilatticeDislocations}. In lattice dynamics these internal degrees of freedom produce optical phonon branches and acoustic--optical coupling~\cite{Maradudin1971Lattice,Wallace1972Thermodynamics}. A local entropy model for such materials therefore cannot treat all atoms as samples of a single Bravais environment. It must resolve chemical species, sublattice identity, and internal-shift contributions to the vibrational free energy.

Existing computational routes address different pieces of this problem but do not provide a reusable local entropy model with quantified cutoff and fitting errors. Direct phonon calculations and ab-initio defect thermodynamics extract entropies from supercell spectra~\cite{Foiles1994VacancyEntropy,Glensk2014Vacancies,Marinica2013ProvingFormationFiniteSize,Gillan1989Defects}; finite-temperature free-energy calculations for complex defect mechanisms face related sampling and anharmonicity costs~\cite{SwinburneMarinica2018FreeEnergyBarriers}; temperature-dependent effective potentials and self-consistent phonon methods renormalize force constants but retain a global spectral step~\cite{Esfarjani2008LatticeDynamics,Hellman2013SCAILD,Errea2014SCHA}; stochastic log-determinant methods accelerate individual trace or determinant estimates~\cite{Hutchinson1990Trace,Bai1996BoundsLogDet,Ubaru2017FastEstimation,HanMalioutov2017LogDetApprox,Higham2008Functions} but do not produce transferable site models. Machine-learning surrogates have been used successfully in point-defect and migration studies~\cite{Lapointe2020VibrationalEntropyML,Lapointe2022StrainDependentVibrational,Goryaeva2020GapForVacancies}; more broadly, active-learning, machine-learned MM, and hybrid ab-initio--machine-learning workflows now target large-scale defect and alloy simulations, adaptive QM/MM coupling, and fine-tuned universal interatomic potentials~\cite{HodappShapeev2020InOperando,HodappShapeev2021RandomAlloys,ChenOrtnerWang2022QMMMML,Grigorev2023WDislocationBinding,WangKermodeOrtnerZhang2024QMMM,LiuZengLuoWangZhaoXu2026FineTuningUMLIP}. Yet their locality is usually imposed by modelling choice rather than derived as a truncation-controlled target.

The contribution of this paper is to turn harmonic entropy in stable finite-range multilattices into such a local target. Building on the Bravais-lattice entropy locality theory of~\cite{TorabiWangOrtner2024Entropy} and the atomistic analysis of multilattice Green's functions~\cite{OlsonOrtnerWangZhang2023MultilatticeDislocations}, and in the broader context of lattice-Green-function and far-field boundary treatments of extended defects and general crystals~\cite{Trinkle2008LGF,YasiTrinkle2012LGF,BraunOrtnerWangZhang2025FarFieldBC}, we prove a sublattice-resolved locality theorem and a corresponding cutoff-error estimate under full block phonon stability and an admissible local stiffness perturbation. In computational terms, the result justifies replacing the global Hessian log-determinant by a symmetry-respecting local regression problem whose truncation error is controlled by the cutoff radius. This gives a route to repeated entropy evaluations with the same locality logic that makes finite-range defect mechanics computationally tractable. The analysis is stated for finite-range or screened effective models; long-range polar effects require a separate splitting or screening treatment.

This locality result is paired with a practical surrogate construction. We use the atomic cluster expansion (ACE) as a systematic basis for local atomistic functions~\cite{Drautz2019ACE,Bachmayr2022ACECompleteness,Bachmayr2020Symmetric,Hashemi2017FullProof}, alongside simpler local-polynomial tests. This places the construction within the broader local-environment representation framework used by SOAP, GAP, moment-tensor, and related machine-learning potentials~\cite{BartokKondorCsanyi2013SOAP,BartokPayneKondorCsanyi2010GAP,ShapeevMTP2016}. The descriptors are species- and sublattice-resolved, because the theory identifies the entropy contribution with inequivalent basis sites rather than with an undifferentiated local environment. The resulting error decomposition separates truncation, basis approximation, and statistical estimation, allowing the cutoff radius and regression complexity to be chosen together.

Numerical experiments test both the analytical rates and the resulting surrogate calculation. Synthetic multilattice models in one, two, and three dimensions show the predicted locality and truncation behaviour over the tested ranges. On diamond Si under the Stillinger--Weber potential, the symmetry-adapted ACE site model remains accurate under both randomized site splits and configuration-level splits, transfers to an independent larger supercell, and reduces repeated evaluation to a linear-scaling local calculation at fixed cutoff. On zinc-blende CdTe, a species-resolved local-polynomial ablation shows that retaining species/sublattice labels is essential for accuracy. Together these results demonstrate a route from global vibrational spectra to reusable local entropy surrogates for multispecies defect calculations.

This paper is structured as follows. \Cref{sec:setting} introduces the multilattice model class, with emphasis on the acoustic, optical, and internal-shift variables that distinguish multilattice defect mechanics from the Bravais setting. \Cref{sec:site-entropy} proves the site-entropy locality and truncation estimates that make a local entropy surrogate meaningful. \Cref{sec:ace} formulates the species- and sublattice-resolved surrogate and its error decomposition. \Cref{sec:numerics} reports the synthetic rate tests and the Si/CdTe benchmark calculations. \Cref{sec:conclusions} summarizes implications for repeated finite-temperature defect calculations and outlines the remaining limitations.

\section{Multilattice Setting}
\label{sec:setting}

This section specifies the atomistic model used in the analysis and in the surrogate construction. The goal is to keep the notation close to what is used in supercell calculations: a crystal is represented by primitive cells, each cell contains a fixed set of basis sites, and the Hessian is assembled from a finite local stencil. The only additional structure, compared with a Bravais lattice, is that the basis atoms can move relative to one another. These relative motions are the internal shifts that generate optical phonon modes and couple to acoustic strain. They are the reason that harmonic entropy in multilattices must be decomposed by species and sublattice.

We model a multilattice as a Bravais lattice decorated by a finite set of basis sites in each primitive cell. This setting covers the structures used throughout semiconductor, alloy, and intermetallic defect modelling: diamond, zinc-blende, rocksalt-type, B2, wurtzite, and more complex multi-sublattice motifs. \Cref{fig:multilattice-structures} displays four representative two-sublattice cases. The Bravais-lattice analysis of~\cite{TorabiWangOrtner2024Entropy} does not register the internal-shift variables; the multilattice setting below is the corresponding model class for finite-range or screened multilattice models.

\begin{figure}[t]
\centering
\includegraphics[width=\linewidth]{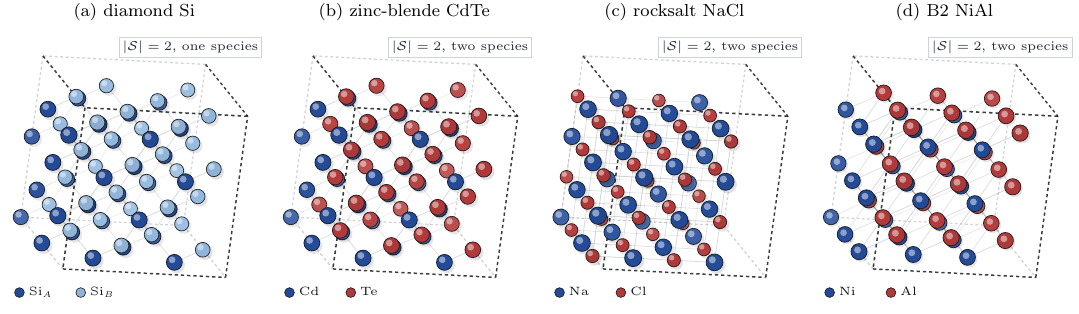}
\caption{Representative two-sublattice crystal structures ($|\Scal|=2$): (a)~diamond Si, with one chemical species on two FCC sublattices; (b)~zinc-blende CdTe, with chemically distinct cation and anion sublattices; (c)~rocksalt NaCl, shown as a structural motif whose unscreened Coulomb physics is outside the finite-range theorem; (d)~B2 (CsCl-type) NiAl, with two species on interpenetrating simple-cubic sublattices. Two shades of one colour mark the two Si sublattices in panel (a), and two colours mark chemically distinct species in panels (b)--(d). The dashed lines mark the conventional cubic unit cell.}
\label{fig:multilattice-structures}
\end{figure}

We now specify the geometric and operator notation. Throughout the paper $d\in\{1,2,3\}$ is the spatial dimension.

\subsection{Geometry and Displacement Variables}
\label{sec:geometry-displacements}

We first fix the reference crystal and the displacement variables. Let $A\in\mathbb{R}^{d\times d}$ be a non-singular matrix whose columns generate the Bravais lattice
\[
  \Lambda=A\mathbb{Z}^d\subset\mathbb{R}^d .
\]
The finite set $\Scal$ labels the inequivalent basis sites inside one primitive cell. When the basis atoms have the same chemical species, as in diamond Si, the label $\alpha\in\Scal$ is a sublattice label; when they are chemically distinct, as in CdTe or NiAl, it is also a species label. For each $\alpha\in\Scal$, the vector $p_\alpha^{\rm ref}\in\mathbb{R}^d$ is the reference position of that basis site inside the primitive cell. One of these reference shifts may be chosen as zero by translating the origin. The multilattice is the disjoint union
\begin{equation}
  \mathcal{M} = \bigcup_{\alpha \in \mathcal{S}} \left( \Lambda + p_\alpha^{\rm ref} \right),
  \label{eq:multilattice-def}
\end{equation}
Every atom is therefore indexed by a pair $(\ell,\alpha)\in\Lambda\times\Scal$. Here $\ell$ identifies the primitive cell and $\alpha$ identifies the basis site in that cell. The number $|\Scal|$ is fixed throughout the analysis: $|\Scal|=1$ recovers the Bravais case, $|\Scal|=2$ covers the examples in \cref{fig:multilattice-structures}, and larger values cover multi-sublattice structures such as wurtzite, perovskites, fluorites, and Heusler alloys.

A deformed configuration is described by a displacement field
\[
  u:\Lambda\times\Scal\to\mathbb{R}^d,
  \qquad
  (\ell,\alpha)\mapsto u_\alpha(\ell),
\]
where $u_\alpha(\ell)$ is the displacement of the atom with cell index $\ell$ and basis label $\alpha$. The deformed atomic position is
\begin{equation}
  y_\alpha(\ell) = \ell + p_\alpha^{\rm ref} + u_\alpha(\ell),
  \qquad \ell\in\Lambda,\ \alpha\in\mathcal{S}.
  \label{eq:deformation}
\end{equation}
In computations, the same field contains both the long-wavelength elastic displacement and the relative shifts between sublattices. We separate these two contributions in \cref{sec:block-structure}; for the moment, $u$ denotes the full multilattice displacement field $\{u_\alpha(\ell)\}_{(\ell,\alpha)\in\Lambda\times\Scal}$.

The basic displacement space is
\begin{equation}
  \Uh:=\{u:\Lambda\times\Scal\to\mathbb{R}^d:\,\|u\|_{L^\infty}<\infty\},
  \qquad
  \|u\|_{L^\infty}:=\sup_{\ell,\alpha}|u_\alpha(\ell)|,
\end{equation}
the space of bounded multilattice displacements. We also use the compactly supported subspace
\begin{equation}
  \Uzero:=\{u\in\Uh:\,\mathrm{supp}\,u\text{ is finite}\}.
\end{equation}
Compact-core perturbations model deliberately truncated defect cores or strictly localized changes of the stiffness tensor. Relaxed point defects such as vacancies, interstitials, and anti-sites generally induce elastic far-fields, while dislocations or inclusions generate algebraically decaying fields that lie in $\Uh$ but not in $\Uzero$. The sharp locality theorem below is stated for compact-core perturbations and for coefficient fields satisfying the admissibility condition in \cref{def:admissible-perturbation}; algebraic far-fields are recorded separately through the explicit convolution bound of \cref{lem:localized-perturbation}.

\subsection{Finite-Range Energy and Difference Stencils}
\label{sec:finite-range-structure}

The harmonic entropy depends on the Hessian of the atomistic energy. We assume that this Hessian is assembled from a finite stencil of local differences, as in standard finite-range empirical or screened effective models. Two types of differences are needed in a multilattice. The first is the \emph{intra-sublattice difference} along a lattice vector $j\in\Lambda$,
\begin{equation}
  (D_j u)_\alpha(\ell):=u_\alpha(\ell+j)-u_\alpha(\ell),
\end{equation}
which compares two atoms on the same sublattice and detects the usual elastic strain. The second is the \emph{sublattice-shift difference} between sublattices $\alpha$ and $\beta$ at lattice offset $j$,
\begin{equation}
  (\widetilde D_{j;\alpha\beta} u)(\ell):=u_\beta(\ell+j)-u_\alpha(\ell),
\end{equation}
which compares atoms on different sublattices and detects the internal shift. We write $\mathcal{D}_q$ for one finite-range bond leg of the Hessian stencil. Each $\mathcal{D}_q$ is a finite linear combination of operators of the form $D_j$ and $\widetilde D_{j;\alpha\beta}$; the index $q$ simply enumerates this finite list of bond legs. The bond-difference seminorm used below is
\begin{equation}
  \|Du\|_{\ell^\infty}
  :=\sup_{\mathcal{D}_q}\sup_{\ell\in\Lambda}|\mathcal{D}_q u(\ell)|,
  \label{eq:bond-inf-norm}
\end{equation}
where the first supremum is over the finite stencil. This norm measures local strain and internal shift rather than absolute displacement. Rigid translations have zero bond seminorm and therefore do not affect the Hessian or the entropy.

The atomistic energy is written as a sum of finite-range site contributions,
\begin{equation}
  \mathcal{E}^{\rm a}(u)
  :=\sum_{(\ell,\alpha)\in\Lambda\times\Scal}V_{\ell,\alpha}\!\left(\big\{y_\beta(m)-y_\alpha(\ell):(m,\beta)\in\Lambda\times\Scal,\,|m-\ell|\leq r_V\big\}\right).
  \label{eq:atomistic-energy}
\end{equation}
Here $r_V>0$ is the interaction radius, and $V_{\ell,\alpha}$ is the site potential associated with the atom $(\ell,\alpha)$. The argument of $V_{\ell,\alpha}$ is the finite list of relative position vectors from this atom to its neighbours within distance $r_V$. In a homogeneous crystal the site potential is periodic in the cell index and the resulting Hessian is translation invariant on $\Lambda$.

The second variation of $\mathcal{E}^{\rm a}$ at displacement $u$ defines the harmonic Hessian
\[
  \langle \Hess(u)v,w\rangle:=\delta^2\mathcal{E}^{\rm a}(u)[v,w],
  \qquad v,w\in\Uzero .
\]
We identify $\Hess(u)$ with the corresponding bounded symmetric operator on $\ell^2(\Lambda\times\Scal;\mathbb{R}^d)$. All Hessians entering the entropy formula are understood in mass-weighted coordinates; using an equivalent unweighted acoustic/internal-shift basis changes only fixed block constants and does not change the decay estimates. The homogeneous Hessian is
\[
  \Hhom:=\Hess(0),
\]
the Hessian at the reference multilattice. By finite range, the kernel of $\Hess(u)$ is supported on bond pairs within the interaction range and is generated by the finite set of difference operators introduced above.

\begin{assumption}[Finite range and smoothness]
\label{ass:smoothness}
The atomistic energy is generated by a finite-range site potential with bounded derivatives up to fourth order in a neighborhood of the homogeneous multilattice state. The interaction range and the derivative bounds are independent of the supercell size.
\end{assumption}

\Cref{ass:smoothness} is the standard regularity hypothesis used throughout the atomistic-to-continuum literature~\cite{HudsonOrtner2014A2C,Hudson2014Stable,LuskinOrtner2013Acta}. The empirical Stillinger--Weber potentials~\cite{StillingerWeber1985Silicon} used in \cref{sec:numerics} serve as finite-range benchmark models on the sampled displacement geometries, which stay away from cutoff and instability events; the same local interpretation applies to EAM and Tersoff-type benchmarks~\cite{DawBaskes1984EAM,Tersoff1989Multicomponent}. We write
\begin{equation}
  \|\partial^{(\leq k)}V\|:=\max_{0\leq j\leq k}\sup_{y}\big|\partial^{j}V(y)\big|
  \label{eq:dV-norm}
\end{equation}
for the maximum supremum norm of the partial derivatives of the site potential up to order $k$, where the supremum is taken over an open neighbourhood of the homogeneous configuration.

Let
\begin{equation}
  \mathcal{R}:=\{j\in\Lambda:\Hhom(j)\neq 0\},
  \qquad
  \rcut^{(\rm hom)}:=\max\{|j|:j\in\mathcal{R}\},
  \label{eq:rcut-hom-def}
\end{equation}
denote the finite stencil of the homogeneous Hessian and its diameter. Both are finite by \cref{ass:smoothness} and independent of the supercell size. The constant $\rcut^{(\rm hom)}$ enters only through fixed stencil-dependent prefactors. We also use the discrete $H^1$ seminorm
\begin{equation}
  \|v\|_{a1}^2
  :=\sum_{j\in\mathcal{R}}\sum_{\ell\in\Lambda}\sum_{\alpha\in\Scal}|(D_j v)_\alpha(\ell)|^2,
  \qquad v\in\Uzero,
  \label{eq:a1-seminorm}
\end{equation}
which is finite for every compactly supported test displacement $v\in\Uzero$.

\subsection{Acoustic--Optical Block Structure}
\label{sec:block-structure}

The key multilattice feature is the separation between cell-average motion and internal-shift motion. For a displacement $u$, define its cell average
\[
  \bar u(\ell):=|\Scal|^{-1}\sum_{\alpha\in\Scal}u_\alpha(\ell).
\]
The decomposition $u=\bar u+(u-\bar u)$ identifies $\Uh$ with the direct sum $\Uh^{(0)}\oplus\Uh^{(p)}$, where the superscript $0$ denotes the acoustic or cell-average component and $p$ denotes the internal-shift or optical component. The unweighted average is used as a convenient fixed projector; the analogous mass-weighted decomposition has the same algebraic form with rescaled constants. A symmetric operator $K$ on $\Uh$ then takes the $2\times2$ block form
\begin{equation}
  K=\begin{pmatrix} K_{00} & K_{0p}\\ K_{p0} & K_{pp}\end{pmatrix},
  \qquad K_{0p}=K_{p0}^*,
  \label{eq:block-decomp}
\end{equation}
where $K_{00}$ couples acoustic components, $K_{pp}$ couples internal-shift components, and $K_{0p}=K_{p0}^*$ is the acoustic--optical coupling. This is the block notation of~\cite{OlsonOrtnerWangZhang2023MultilatticeDislocations}. The site index $(\ell,\alpha)$ and the block index $(0,p)$ are related by a fixed cell-internal change of basis. We use the site representation for the ACE descriptors in \cref{sec:ace}, and the block representation for the Green-function estimates.

Let $P_0$ be the cell-average projector and let $P_p:=\Id-P_0$ be the internal-shift projector. In Fourier variables, an intra-sublattice difference has symbol $(e^{-i\xi\cdot j}-1)(P_0+P_p)$ and therefore contributes an $O(|\xi|)$ factor near $\xi=0$. A sublattice-shift difference has symbol $E_{\alpha\beta}e^{-i\xi\cdot j}-E_{\alpha\alpha}$ in the original sublattice basis, where $E_{\alpha\beta}$ is the matrix unit mapping the $\beta$ sublattice component to the $\alpha$ component. On the acoustic subspace, where all sublattices move together, this symbol again contains the factor $e^{-i\xi\cdot j}-1=O(|\xi|)$; on the internal-shift subspace it remains uniformly bounded. This elementary calculation explains why finite differences regularise the acoustic singularity but must still track the acoustic, mixed, and optical blocks separately.

For a translation-invariant operator $K$ with kernel $K(\ell,m)=K(\ell-m)$, define its lattice Fourier symbol by
\begin{equation}
  \widehat K(\xi)=\sum_{j\in\Lambda}e^{-i\xi\cdot j}K(j),\qquad \xi\in\BZ,
  \label{eq:fourier-symbol}
\end{equation}
where $\BZ$ is the Brillouin zone of $\Lambda$. Applied to the homogeneous Hessian $\Hhom$, the symbol $\widehat\Hhom(\xi)$ is a Hermitian, positive-semidefinite $|\Scal|d\times|\Scal|d$ matrix and has the block form
\begin{equation}
  \widehat\Hhom(\xi)=\begin{pmatrix}\widehat{\mathbb{C}}_{\rm a}(\xi) & \widehat{\mathbb{C}}_{\rm ap}(\xi)\\ \widehat{\mathbb{C}}_{\rm ap}(\xi)^* & \widehat{\mathbb{C}}_{\rm p}(\xi)\end{pmatrix}.
  \label{eq:hom-symbol-block}
\end{equation}
Here $\widehat{\mathbb{C}}_{\rm a}$ is the acoustic block, $\widehat{\mathbb{C}}_{\rm p}$ is the optical block, and $\widehat{\mathbb{C}}_{\rm ap}$ is the acoustic--optical coupling. Near $\xi=0$, the acoustic block behaves like $|\xi|^2$ times an elastic stiffness tensor, the optical block remains uniformly positive for a stable multilattice, and the coupling vanishes linearly in $|\xi|$. This three-rate structure is the source of the multilattice Green-function hierarchy of~\cite{OlsonOrtnerWangZhang2023MultilatticeDislocations} and is the structural reason the Bravais entropy-locality proof does not transfer without modification.

Throughout the paper an operator $K$ on $\ell^2(\Lambda\times\Scal;\mathbb{R}^d)$ is identified with its kernel $\big(K_{(\ell,\alpha),(m,\gamma)}\big)_{\ell,m\in\Lambda,\,\alpha,\gamma\in\Scal}$, where each entry $K_{(\ell,\alpha),(m,\gamma)}\in\mathbb{R}^{d\times d}$ is the Cartesian-component block coupling the displacement at site $(\ell,\alpha)$ to the displacement at site $(m,\gamma)$. The notation $|K_{(\ell,\alpha),(m,\gamma)}|$ denotes the operator (spectral) norm of this block; equivalent estimates hold for any other matrix norm by norm equivalence in fixed dimension $d$. All kernel-decay statements in \cref{lem:hom-kernel,lem:localized-perturbation,lem:block-resolvent} and in \cref{thm:locality} are understood in this convention, with hidden constants depending on $d$ but not on $\ell,m,\alpha,\gamma$.

\subsection{Stability and Admissible Perturbations}
\label{sec:stability-admissibility}

The locality theorem is a statement about stable configurations near a stable reference crystal. In computational terms, this means that the homogeneous phonon spectrum must have the expected acoustic zero modes and a positive optical gap, and that the defect or distortion must not move the Hessian out of this stable phonon basin. We express this requirement in the acoustic--optical block variables introduced above. The condition is stronger than positivity of the acoustic and optical diagonal blocks separately, because the acoustic--optical coupling can otherwise close the acoustic Schur complement.

\begin{assumption}[Homogeneous multilattice stability]
\label{ass:stability}
There exist constants $\gamma_a>0$ and $\gamma_{\rm p}>0$ such that the full homogeneous symbol satisfies the block-coercivity bound
\begin{equation}
  \widehat\Hhom(\xi)
  \succeq
  \gamma_a\big(|\xi|^2P_0+P_p\big),
  \qquad \xi\in\BZ,
  \label{eq:acoustic-stab}
\end{equation}
and its optical-block Fourier symbol $\widehat{\mathbb{C}}_{\rm p}$ in~\eqref{eq:hom-symbol-block} satisfies
\begin{equation}
  \widehat{\mathbb{C}}_{\rm p}(\xi)\succeq\gamma_{\rm p}\Id,\qquad\xi\in\BZ,
  \label{eq:optical-stab}
\end{equation}
where $\succeq$ denotes the partial order on Hermitian matrices.
\end{assumption}

\Cref{ass:stability} is the multilattice analogue of the phonon stability condition that already appears in the Bravais theory and in multilattice elasticity~\cite{KohlhoffGumbsch1991Multilattice,OlsonOrtnerWangZhang2023MultilatticeDislocations}. The form \eqref{eq:acoustic-stab} is deliberately a full-block condition: it is equivalent to positivity of the Schur complement of the optical block and prevents the $O(|\xi|)$ acoustic--optical coupling from destroying the acoustic lower bound. The optical constant $\gamma_{\rm p}$ in~\eqref{eq:optical-stab} controls the energy cost of internal shifts. The constants in \cref{thm:locality,thm:truncation} depend on the full stability margin, the optical gap, acoustic--optical coupling bounds, the conditioning of the block projectors, the finite stencil, and $|\Scal|$; for any fixed positive margins these parameters affect the prefactor but not the algebraic decay exponent.

\begin{remark}
The locality bound holds uniformly for fixed positive stability margins, but the prefactor deteriorates near acoustic loss of stability, optical softening, or ill-conditioned acoustic--optical splitting. Thus the truncation radius required for a fixed target accuracy may grow as $\gamma_{\rm p}\to0^+$ or as the Schur-complement margin closes. The analysis is most informative away from soft-mode boundaries; the examples in \cref{fig:multilattice-structures} should be interpreted as representative stable multilattices rather than as a statement that every empirical parameterisation is uniformly far from such boundaries.
\end{remark}

We next identify the class of distorted configurations for which the same stable linearization remains valid. The neighbourhood is written in terms of the bond-difference seminorm \eqref{eq:bond-inf-norm}, since the Hessian responds to local strains and internal shifts rather than to rigid translations.

\begin{definition}[Stability neighborhood]
\label{def:stab-nbhd}
For $\delta>0$, let $\mathcal{B}_\delta:=\{u\in\Uh:\|Du\|_{\ell^\infty}\leq\delta\}$. A sufficient stability neighborhood associated with \cref{ass:stability} is the set $\mathcal{B}_{\delta_*}$ where
\begin{equation}
  \delta_* :=
  c_*\,\frac{\gamma_{\rm st}}{\|\partial^{(\leq 4)}V\|}
  \label{eq:delta-star}
\end{equation}
and $c_*>0$ is a fixed dimension- and stencil-dependent constant. Here $\gamma_{\rm st}$ denotes the combined stability margin in \cref{ass:stability}, including the full-block coercivity and optical-gap margins.
\end{definition}

For computation, \cref{def:stab-nbhd} says that all configurations used to define site entropies or train the surrogate stay in the same stable harmonic regime as the homogeneous reference. Mathematically, $\mathcal{B}_{\delta_*}$ is a conservative bond-strain neighbourhood in which the normalized Hessian perturbation remains small. In particular, $u\in\mathcal{B}_{\delta_*}$ implies
\[
  \|(\Hhom)^{-1/2}(\Hess(u)-\Hhom)(\Hhom)^{-1/2}\|_{\ell^2\to\ell^2}\leq 1/2,
\]
so the matrix logarithm used in \cref{sec:site-entropy} is uniformly controlled. The constant $c_*$ is chosen after the homogeneous stencil, stability margins, and admissibility constants have been fixed. If $A(u):=\max_q\|A_q(u)\|_{\ell^2\to\ell^2}$ denotes the largest bond-stiffness perturbation size, the same choice also ensures the kernel-majorant smallness
\begin{equation}
  C_{\rm hom}A(u)\leq \theta r_0,\qquad 0<\theta<1,
  \label{eq:kernel-smallness}
\end{equation}
for the contour radius $r_0=3/4$ used in \cref{thm:locality}. This last inequality is what makes the kernel estimates summable in the resolvent expansion. The radius $\delta_*$ given by~\eqref{eq:delta-star} is sufficient rather than maximal.

\begin{definition}[Admissible stiffness perturbations]
\label{def:admissible-perturbation}
For a displacement $u$, write the Hessian perturbation relative to the homogeneous reference as
\[
  \Hess(u)-\Hhom=\sum_q \mathcal{D}_q^*\,A_q(u)\,\mathcal{D}_q ,
\]
where the sum is over the finite set of Hessian bond legs introduced in \cref{sec:finite-range-structure}. The matrix-valued coefficient $A_q(u)$ is the local bond-stiffness perturbation for leg $q$, namely the relevant second-derivative block of the site potential at displacement $u$ minus its value at the homogeneous reference. The sharp locality estimate below is not claimed for arbitrary site-to-site oscillatory stiffness changes. We call $u$ \emph{admissible} if, uniformly over all bond legs $q$, the coefficients $A_q(u)$ have bounded Schur and $\ell^2$ norms and either are supported in a fixed compact core or satisfy the dyadic multiplier estimate
\begin{equation}
  \|K_i A_q(u) K_j'\|_{\ell^1\to\ell^\infty}
  \leq C_{\rm ad}\,\|A_q(u)\|_{\rm Schur}\,
  2^{-c_{\rm ad}|i-j|}\,2^{-d\max\{i,j\}},
  \label{eq:dyadic-admissibility}
\end{equation}
for all dyadic shell pieces $K_i,K_j'$ of the regularized homogeneous kernels appearing in \cref{lem:chain-algebra}. The constants $C_{\rm ad},c_{\rm ad}>0$ are independent of $i,j$, the supercell size, and the chain length. This condition is a compact way of saying that the stiffness perturbation does not destroy the cancellation of the homogeneous Green-function shells. Compact-core perturbations with fixed core radius and sufficiently regular slowly varying coefficient fields satisfying the shell properties used below are checkable subclasses.
\end{definition}

\begin{proposition}[Sufficient admissibility criteria]
\label{prop:checkable-admissibility}
The admissible class in \cref{def:admissible-perturbation} contains the two cases used as model perturbations in this paper. The compact-core case covers localized stiffness changes. The slowly varying case covers smooth strain or far-field coefficient variations whose length scale is not collapsed to the lattice spacing. The second criterion uses the standard shell properties of the regularized homogeneous kernels: zero moment, the $L^1$/$L^\infty$ annular envelope of \eqref{eq:dyadic-cz}, and the Holder commutator estimate generated by the displayed slow-variation bound.
\begin{enumerate}
  \item If every bond-stiffness perturbation $A_q(u)$ is supported in a defect core of radius $R_{\rm core}$ independent of the supercell size and satisfies uniform Schur and $\ell^2$ bounds, then $u$ is admissible. In this compact-core branch the chain estimate of \cref{lem:chain-algebra} is obtained from finite-dimensional core summation and the endpoint decay of the homogeneous kernels, with constants depending on $R_{\rm core}$ but not on the periodic cell.
  \item If $A_q(u)$ is a multiplication kernel whose entries extend from $\Lambda$ to uniformly bounded $C^{1,\alpha}$ functions on $\mathbb{R}^d$ and satisfy the dyadic slow-variation bound
  \begin{equation}
    \sup_{p\in\Lambda}\sup_{0<|h|\leq 2^j}
    \frac{\|A_q(u;p+h)-A_q(u;p)\|}{\|A_q(u)\|_{\rm Schur}}
    \leq C_{\rm sv}\,2^{\alpha j}L_{\rm sv}^{-\alpha},
    \qquad 2^j\leq L_{\rm sv},
    \label{eq:slow-variation}
  \end{equation}
  with a slow-variation length $L_{\rm sv}$ bounded below by the stencil scale, and with the analogous large-scale bound obtained by the Schur norm for $2^j>L_{\rm sv}$, then the standard shell argument summarized below gives \eqref{eq:dyadic-admissibility}, with constants depending on $C_{\rm sv},\alpha$, the stencil, and the homogeneous multiplier constants. In particular, smooth elastic coefficients whose variation length is not collapsed to the lattice scale fall in the sharp theorem after excluding any non-admissible core into the compact part.
\end{enumerate}
Both criteria are stable under convex interpolation of the displacement field, provided the same core radius or the same slow-variation constants are retained along the path.
\end{proposition}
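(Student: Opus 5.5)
We verify the two sufficient criteria separately against the dyadic multiplier estimate \eqref{eq:dyadic-admissibility}, and then check stability under convex interpolation. Throughout, $K_i$ and $K_j'$ denote the dyadic shell pieces of the regularized homogeneous kernels. These are supported (up to the fixed stencil) in annuli $\{2^{i-1}\le|x|\le 2^{i+1}\}$. We use three properties: an $L^\infty$ envelope of size $2^{-di}$ (after the finite differences in $\mathcal{D}_q$ have removed the acoustic singularity, using the block structure \eqref{eq:hom-symbol-block} and \cref{ass:stability}), an $L^1$ bound of order one, and vanishing zeroth moment. We treat these as the shell properties referred to in the statement, recorded in \eqref{eq:dyadic-cz}.

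\textbf{Compact-core case.} If $A_q(u)$ is supported in a ball $B_{R_{\rm core}}$, then $K_iA_qK_j'$ has kernel
\[
(x,y)\mapsto\sum_{p,p'\in B_{R_{\rm core}}}K_i(x-p)A_q(p,p')K_j'(p'-y).
\]
The sum ranges over a set of at most $C R_{\rm core}^{2d}$ pairs. Bounding each factor pointwise gives
\[
\|K_iA_qK_j'\|_{\ell^1\to\ell^\infty}\le C(R_{\rm core})\,\|A_q\|_{\rm Schur}\,2^{-di}2^{-dj}.
\]
Since $2^{-d\min\{i,j\}}\le 2^{-d|i-j|}$ whenever $\min\{i,j\}\ge|i-j|$, we obtain
\[
2^{-di}2^{-dj}=2^{-d\max\{i,j\}}2^{-d\min\{i,j\}}\le 2^{-d\max\{i,j\}}\,2^{-d|i-j|}.
\]
If instead $\min\{i,j\}<|i-j|$, the smaller scale is at most $\max\{i,j\}/2$. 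In that range $2^{-d\min}\ge 2^{-d|i-j|}$ can fail, but only for $\min\{i,j\}\lesssim\log R_{\rm core}$. These finitely many small shells are absorbed into $C_{\rm ad}$, with $c_{\rm ad}=d$ and a constant depending on $R_{\rm core}$ only. None of these bounds refers to the periodic cell, which gives the claimed uniformity.

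\textbf{Slowly varying case.} This is the main obstacle, since no support restriction is available. We write $A_q(u)$ as a multiplication operator and, for $i\le j$ (the case $i>j$ is symmetric by the adjoint), freeze the coefficient at the output point $x$:
\[
K_iA_qK_j'=A_q(x)\,K_iK_j'+K_i\,[A_q-A_q(x)]\,K_j'.
\]
\emph{Frozen term.} $K_iK_j'$ is a convolution. We expand $K_j'(z-y)$ in Taylor form around $x-y$ and use the zero moment of $K_i$. The $C^{1}$-regularity of the smooth shell $K_j'$ at scale $2^j$ then gives $|K_iK_j'|\lesssim 2^{i}\,2^{-j}\,2^{-dj}$, which is $2^{-|i-j|}2^{-d\max\{i,j\}}$.

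\emph{Commutator term.} Here $|z-x|\lesssim 2^i$ on the support of $K_i$. The slow-variation bound \eqref{eq:slow-variation} then gives $\|A_q(z)-A_q(x)\|\le C_{\rm sv}2^{\alpha i}L_{\rm sv}^{-\alpha}\|A_q\|_{\rm Schur}$ for $2^i\le L_{\rm sv}$, with the Schur bound used above that scale. Combining this with the $L^1$ bound on $K_i$ and the $L^\infty$ bound on $K_j'$ yields $2^{-dj}\min\{1,(2^i/L_{\rm sv})^\alpha\}$.

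The commutator bound alone does not decay in $|i-j|$. So for $i\ll j$ we must instead exploit that $A_q$ is nearly constant on the larger scale $2^j$ and move the freezing point to the scale-$2^j$ side. This is the delicate step. I would first try interpolating the two freezings, using the Hölder exponent $\alpha$ against the moment cancellation, to extract $2^{-\alpha|i-j|}$. Failing that, I would absorb the non-decaying remainder into the Schur-norm large-scale branch, using $L_{\rm sv}$ bounded below by the stencil scale. This should give \eqref{eq:dyadic-admissibility} with $c_{\rm ad}=\min\{1,\alpha\}$ and constants depending on $C_{\rm sv}$, $\alpha$, the stencil, and the multiplier constants.

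\textbf{Convex interpolation.} Finally, let $u_t=(1-t)u_0+tu_1$. The coefficients $A_q(u_t)$ depend smoothly on $t$ by \cref{ass:smoothness}. Their supports stay in the union of the two cores, and their Hölder and slow-variation constants are controlled by those of the endpoints via the mean value theorem on the site-potential derivatives. So both criteria persist along the path with the same constants, as claimed.
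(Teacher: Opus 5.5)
\textbf{Compact-core case.} Your step from $2^{-di}2^{-dj}$ to $2^{-c_{\rm ad}|i-j|}2^{-d\max\{i,j\}}$ is false. The inequality you need, $2^{-d\min\{i,j\}}\lesssim 2^{-c|i-j|}$, fails on an unbounded set of shell pairs, not on finitely many shells with $\min\{i,j\}\lesssim\log R_{\rm core}$. For your choice $c_{\rm ad}=d$ it fails on essentially all of $2\min\{i,j\}<\max\{i,j\}$. For any $c>0$ it already fails along $i=0$, $j\to\infty$, where the ratio is $2^{cj}$. So nothing can be absorbed into $C_{\rm ad}$.

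A sharper estimate does not rescue this. For a single-site coefficient the kernel of $K_iAK_j'$ is $K_i(x-p)\,A\,K_j'(p-y)$, and its supremum really is of order $2^{-di}2^{-dj}$. Compact-core coefficients are therefore in general not dyadically admissible. That is exactly why \cref{def:admissible-perturbation} lists compact support as a separate alternative: the admissibility claimed in item~1 holds by definition.

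The real content of item~1 is the chain bound of \cref{lem:chain-algebra}, which the paper proves without any shell decomposition. All coefficient insertions lie in the fixed core. The endpoint kernels (\cref{lem:hom-kernel} and \eqref{eq:DGD-rate}) give $(1+\dist(\ell,{\rm core}))^{-d}(1+\dist(m,{\rm core}))^{-d}\lesssim(1+|\ell-m|)^{-d}$. The finite core-to-core matrix powers then contribute only a geometric factor controlled by \eqref{eq:kernel-smallness}. Your proposal never makes this argument.

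\textbf{Slowly varying case.} Freezing the coefficient and using the zero moment of $K_i$ is the paper's route. However, you stop at the decisive step, asserting that the commutator bound $2^{-dj}(2^i/L_{\rm sv})^\alpha$ does not decay in $|i-j|$. Where \eqref{eq:slow-variation} applies, namely $2^j\le L_{\rm sv}$, it does decay: $L_{\rm sv}^{-\alpha}\le 2^{-\alpha j}$ gives $(2^i/L_{\rm sv})^\alpha\le 2^{-\alpha(j-i)}$. Together with the Schur-plus-envelope bound for $|i-j|\le 2$, this is the whole of the paper's argument, and no interpolation between freezing points is needed.

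Your fallback for $2^j>L_{\rm sv}$ (absorbing the remainder into the Schur-norm branch) cannot produce a factor decaying in $|i-j|$ while $j$ is unbounded. In fact no such factor is available in general. Take $A$ to be multiplication by $\cos(\omega\cdot z)$ with $|\omega|=L_{\rm sv}^{-1}$, and $2^i\sim L_{\rm sv}$. Then $K_iAK_j'(x,y)=\tfrac12\bigl(\widehat{K_i}(\omega)e^{i\omega\cdot x}+\widehat{K_i}(-\omega)e^{-i\omega\cdot x}\bigr)K_j'(x-y)+O(2^{i-j}2^{-dj})$, which is of size $2^{-dj}$ with no gain. The paper's sketch also extracts $(2^i/2^j)^\alpha$ only from \eqref{eq:slow-variation}, so it too covers only $2^{\max\{i,j\}}\le L_{\rm sv}$. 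You should state the result in that range rather than try to close the other one.

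Two smaller corrections. First, the shells are frequency-localized with the tails in \eqref{eq:dyadic-cz}, not supported in spatial annuli. So ``$|z-x|\lesssim 2^i$ on the support of $K_i$'' must be replaced by summing that tail against $|z-x|^\alpha$, which converges since $\alpha<1$. Second, for $|i-j|\ge 2$ the frozen term vanishes identically because the Fourier supports are disjoint, so no Taylor expansion is needed.
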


\emph{Idea of proof.} In the compact-core case all coefficient insertions in a chain are restricted to a fixed finite set, so the two endpoint kernels give
\[
  (1+\dist(\ell,{\rm core}))^{-d}(1+\dist(m,{\rm core}))^{-d}
  \lesssim (1+|\ell-m|)^{-d},
\]
and the finite core matrix powers contribute only a geometric factor controlled by \eqref{eq:kernel-smallness}. For the slowly varying case, write the regularized homogeneous kernels in dyadic shells. If $|i-j|\leq2$, the Schur bound and the shell envelope give
\[
  \|K_iAK'_j\|_{\ell^1\to\ell^\infty}
  \lesssim \|A\|_{\rm Schur}\,2^{-d\max\{i,j\}},
\]
which is \eqref{eq:dyadic-admissibility} without a scale-separation gain. If $i\ll j$, the shell $K'_j$ varies on the larger spatial scale. Use the zero moment of $K_i$ and write the coefficient in the form $A(y)=A(x)+[A(y)-A(x)]$ inside the convolution. The constant part cancels, and the slow-variation estimate gives a factor comparable to $(2^i/2^j)^\alpha$ after summing the annular tails of $K_i$ and $K'_j$; this is $2^{-c|i-j|}$ with $c\leq\alpha$. The amplitude of the broader shell supplies the factor $2^{-d\max\{i,j\}}$. The case $i\gg j$ is identical after exchanging the two kernels and using the zero moment of $K'_j$. These three cases prove \eqref{eq:dyadic-admissibility}. Convex interpolation preserves support and preserves the displayed slow-variation seminorms by convexity of the coefficient differences.

\section{Site Entropy on Multilattices}
\label{sec:site-entropy}

This section turns the finite-cell harmonic entropy into the site-resolved quantity used later as the target of a local surrogate. We begin with the periodic supercell calculation, because this is how entropy labels are produced in atomistic simulations. We then pass to finite-cell limits to obtain supercell-size-independent site labels where that limit is justified. The main estimates show why these labels can be learned from local environments: distant displacements have an algebraically small effect on a site's entropy contribution, and truncating the environment gives a controlled cutoff error. The section ends by translating these estimates into the error decomposition used by the surrogate model.

\subsection{Finite-Cell Site Entropy}
\label{sec:finite-cell}

We first work on a finite periodic supercell, where the harmonic vibrational entropy is the logarithm of a determinant of a finite-dimensional symmetric positive operator and is therefore well defined without auxiliary regularization. The infinite-lattice version is introduced below as the limit of these finite-cell quantities.

Fix a periodic supercell $\Lambda_N:=\Lambda/L_N\Lambda$ of side $L_N$ (equivalently, the quotient torus $\Lambda/L_N\mathbb{Z}^d$ when $\Lambda=A\mathbb{Z}^d$), and write $\mathcal{M}_N:=\{(\ell,\alpha):\ell\in\Lambda_N,\,\alpha\in\Scal\}$ for its multilattice extension. Denote by $\Hess_N(u)$ the finite-cell harmonic Hessian induced by the same site potential as on $\Lambda$ but periodised over $\Lambda_N$, and by $\Hhom_N:=\Hess_N(0)$ its homogeneous reference. The kernel of $\Hhom_N$ has a $d$-dimensional null space corresponding to rigid translations $u_\alpha(\ell)\equiv\mathrm{const}$; let $P_N$ denote the orthogonal projection onto its complement, the \emph{stable subspace} of $\Hhom_N$. All determinants and inverse square roots below are understood on $\operatorname{ran}P_N$, and $\det_+$ denotes the determinant on this subspace (equivalently, the product of the strictly positive eigenvalues of $\Hhom_N$). The finite-cell homogeneous normalization is
\begin{equation}
  \Fhom_N := (\Hhom_N)^{-1/2}\quad\text{on }\operatorname{ran}P_N,
  \qquad
  \Fhom_N:=0\quad\text{on }\ker P_N,
\end{equation}
extended by zero on the rigid-translation modes so that $\Fhom_N$ is a bounded symmetric operator on the full space. The relative harmonic entropy in the cell is
\begin{equation}
  S_N(u)
  :=
  -\frac12\log\det_+\!\left(\Fhom_N^*\Hess_N(u)\Fhom_N\right),
  \label{eq:finite-cell-entropy}
\end{equation}
which assigns zero entropy to the homogeneous reference: $\Fhom_N^*\Hhom_N\Fhom_N=\Id$ on $\operatorname{ran}P_N$ by construction, so $S_N(0)=0$. The finite-cell site contribution is the corresponding diagonal entry of the matrix logarithm,
\begin{equation}
  S^N_{\ell,\alpha}(u)
  :=
  -\frac12\Tr\!\left[\log_+\!\left(\Fhom_N^*\Hess_N(u)\Fhom_N\right)\right]_{(\ell,\alpha),(\ell,\alpha)},
  \qquad(\ell,\alpha)\in\mathcal{M}_N,
  \label{eq:finite-cell-site-entropy}
\end{equation}
where $\log_+(A):=\log(A)P_+$ is the principal-branch logarithm composed with the projection $P_+$ onto the strictly positive part of the spectrum of $A$. The decomposition $S_N(u)=\sum_{(\ell,\alpha)\in\mathcal{M}_N}S^N_{\ell,\alpha}(u)$ holds by construction.
Here and below, the bracketed block is the Cartesian $d\times d$ matrix at the site $(\ell,\alpha)$, and $\Tr$ in \eqref{eq:finite-cell-site-entropy} denotes the trace over those Cartesian components, not a second trace over the full supercell.
Thus $S^N_{\ell,\alpha}(u)$ is the site label associated with the basis atom $(\ell,\alpha)$ in the finite supercell. These labels are the quantities fitted by the local surrogate in \cref{sec:ace}: their definition uses a global matrix logarithm, but the locality estimate below explains why a truncated local environment can carry the relevant information.

\subsection{Finite-Cell Limits and Infinite-Volume Site Labels}
\label{sec:infinite-definition}

The finite-cell definition above also provides the route to a supercell-size-independent site label. We define the infinite-volume label by finite-cell approximation rather than by postulating a diagonal entry of an infinite matrix logarithm. Let $u_N$ be a sequence of periodic representatives that agree with $u$ on balls whose radii tend to infinity. For compact-core admissible $u$ in the stability neighbourhood, \cref{prop:thermodynamic-limit} below proves that the limit
\begin{equation}
  S_{\ell,\alpha}(u)
  :=
  \lim_{N\to\infty}S^N_{\ell,\alpha}(u_N)
  \label{eq:Sla-infinite}
\end{equation}
exists for each fixed site $(\ell,\alpha)$ and is independent of the approximating sequence. For non-compact dyadically admissible coefficient fields, the locality estimates remain pointwise kernel estimates once the required periodic kernel/resolvent convergence is available, but the present thermodynamic-limit construction is fully proved only in the compact-core regime. General elastic far-fields are therefore used below through the explicit convolution bound in \cref{lem:localized-perturbation}, not as an unconditional sharp-rate infinite-volume entropy theorem.

Let $\Hess(u)$ be the harmonic Hessian on the infinite multilattice $\Lambda\times\Scal$ at displacement $u\in\Uh$, identified with a bounded symmetric operator on $\ell^2(\Lambda\times\Scal;\mathbb{R}^d)$ as in \cref{sec:setting}, and let $\Hhom:=\Hess(0)$. On the infinite $\ell^2$ space the constant rigid translations are not square-summable modes; the relevant zero-mode removal is therefore understood as the thermodynamic limit of the finite-cell projections $P_N$ in \cref{sec:finite-cell}. Equivalently, in the Fourier representation of the translation-invariant operator $\Hhom$, we remove the acoustic point $\xi=0$ before taking the inverse square root. This point has measure zero in the infinite-lattice multiplier, but its removal records the same normalization as the finite periodic determinant and makes the finite-cell approximations uniform.

By \cref{ass:stability}, $\widehat\Hhom(\xi)$ is positive definite for $\xi\neq0$, with the acoustic eigenvalues behaving like $|\xi|^2$ near the origin and the optical eigenvalues bounded below by $\gamma_{\rm p}$. We define
\begin{equation}
  \widehat\Fhom(\xi):=\widehat\Hhom(\xi)^{-1/2}\quad(\xi\in\BZ\setminus\{0\}),
  \qquad
  \widehat\Fhom(0):=0
  \quad\text{in finite-cell regularizations},
  \label{eq:Fhom-def}
\end{equation}
and let $\Fhom$ be the corresponding positive self-adjoint Fourier multiplier. It is unbounded on $\ell^2$ because of the acoustic singularity $|\xi|^{-1}$ described in \eqref{eq:Fhat-block}, but the difference-regularized objects $\mathcal{D}\Fhom$ used below are bounded by \cref{lem:hom-kernel}. The formal infinite-volume normalized Hessian is the quadratic form
\begin{equation}
  T(u) := \Fhom^* \Hess(u) \Fhom,
  \label{eq:T-def}
\end{equation}
defined a priori on $\operatorname{dom}\Fhom$. At the homogeneous reference, $T(0)=\Id$ on the zero-frequency-regularized sector. \Cref{lem:localized-perturbation} below shows that for every admissible $u\in\mathcal{B}_{\delta_*}$ the perturbation $T(u)-\Id$ admits the bounded representation $\sum_q(\mathcal{D}_q\Fhom)^*A_q(u)(\mathcal{D}_q\Fhom)$; consequently $T(u)$ extends to a bounded symmetric operator with $\|T(u)-\Id\|\leq1/2$. After the decay estimates have been proved, the finite-cell limit \eqref{eq:Sla-infinite} is equivalently represented by
\begin{equation}
  S_{\ell,\alpha}(u)
  =
  -\frac12
  \Tr\!\left[\log_+\!\left(T(u)\right)\right]_{(\ell,\alpha),(\ell,\alpha)}
  \label{eq:Sla-log-kernel}
\end{equation}
whenever the right-hand side is interpreted as the limit of zero-frequency-regularized finite-cell kernels. The total vibrational entropy is the formal sum
\begin{equation}
  S(u) = \sum_{\ell\in\Lambda}\sum_{\alpha\in\Scal} S_{\ell,\alpha}(u),
\end{equation}
which converges absolutely in the compact-core regime. For non-compact far fields we work pointwise with $S_{\ell,\alpha}$ and use the far-field convolution bound only as a finite-cell consistency check.

\subsection{Locality Mechanism}
\label{sec:locality}

The estimates below explain why the finite-cell site labels can be treated as local quantities for regression. In a multilattice, this locality is less direct than in a Bravais lattice because the homogeneous Hessian contains acoustic modes, optical internal-shift modes, and coupling between them. The lemmas in this subsection record the kernel estimates needed to propagate this block structure through the homogeneous normalization and the perturbed resolvent. These estimates feed into the locality and truncation results stated in \cref{sec:locality-thm}; full proof details are collected in \cref{app:proofs}.

\subsubsection{Why Multilattice Locality Is Nontrivial}
\label{sec:proof-architecture}

The proof follows the Bravais-lattice entropy-locality strategy of~\cite{TorabiWangOrtner2024Entropy}, but the physical content of a multilattice changes the analysis. The homogeneous Hessian contains acoustic translations, optical internal shifts, and mixed acoustic--optical couplings; the entropy label is built from a normalized Hessian and a matrix logarithm; and the defect enters through local stiffness changes. The estimates below keep these ingredients in the operator order needed by the entropy derivative.

\begin{remark}[What changes from the Bravais case]
\label{rem:new-vs-bravais}
The Bravais ingredients that transfer directly are the contour representation of $\log_+$, differentiation under the contour integral, the finite-range support of $\partial\Hess(u)$, and the final convolution that turns two rate-$d$ factors into the entropy-gradient rate. The multilattice-specific work is concentrated in three mechanisms:
\begin{enumerate}
  \item The homogeneous symbol has an acoustic singularity and optical internal-shift modes. The inverse square root $\widehat\Fhom(\xi)$ therefore has different orders in the acoustic, mixed, and optical blocks, rather than a single scalar order.
  \item The finite-difference legs from the Hessian regularize the acoustic singularity. After this regularization, the kernels used by the entropy derivative have the uniform real-space decay needed for local regression.
  \item The perturbed resolvent must be estimated in the non-commuted order that appears in the differentiated entropy formula. The block resolvent estimate transfers the homogeneous acoustic--optical decay hierarchy through this resolvent without changing the final algebraic rate.
\end{enumerate}
A more detailed layer-by-layer comparison is collected in \cref{tab:bravais-vs-multi} of \cref{app:proofs}. The optical gap, full block-coercivity margin, acoustic--optical coupling bounds, and admissibility constants enter the prefactors, while the algebraic exponent is preserved for the admissible perturbation class.
\end{remark}

\subsubsection{Supporting Kernel Estimates}
\label{sec:block-lemmas}

The supporting estimates have distinct roles in the computational argument. \Cref{lem:block-sqrt-symbol,lem:hom-kernel} show that the homogeneous normalization has cutoff-compatible kernels after the Hessian difference legs are applied. \Cref{lem:block-conv,lem:chain-algebra} explain why the borderline kernel products must be handled with cancellation rather than by absolute convolution alone. \Cref{lem:localized-perturbation,lem:block-resolvent} then pass these estimates from the homogeneous reference to the perturbed finite-cell Hessian. Readers focused on the surrogate calculation can view these lemmas as the analytical support for \cref{thm:locality}; the detailed estimates are given in \cref{app:proofs}.

\begin{lemma}[Block square-root multiplier bounds]
\label{lem:block-sqrt-symbol}
Under \cref{ass:smoothness,ass:stability}, the zero-frequency-regularized multiplier $\widehat\Fhom(\xi)=\widehat\Hhom(\xi)^{-1/2}$ satisfies the acoustic/optical block expansion
\begin{equation}
  \widehat\Fhom(\xi)
  =
  \begin{pmatrix}
    |\xi|^{-1}B_{00}(\hat\xi)+O(1) & B_{0p}(\hat\xi)+O(|\xi|)\\
    B_{p0}(\hat\xi)+O(|\xi|) & B_{pp}(\hat\xi)+O(|\xi|)
  \end{pmatrix},
  \qquad \hat\xi:=\xi/|\xi|,
  \label{eq:block-sqrt-expansion}
\end{equation}
on $\BZ\setminus\{0\}$, where the angular matrices are smooth on the unit sphere away from the usual coordinate-chart seams and their norms are controlled by the full stability margin, the optical gap, and the acoustic--optical coupling bounds. For every multi-index $\rho$,
\begin{equation}
  |\partial_\xi^\rho \widehat F_{00}(\xi)|
  \leq C_\rho|\xi|^{-1-|\rho|},
  \qquad
  |\partial_\xi^\rho \widehat F_{0p}(\xi)|
  +|\partial_\xi^\rho \widehat F_{p0}(\xi)|
  +|\partial_\xi^\rho \widehat F_{pp}(\xi)|
  \leq C_\rho|\xi|^{-|\rho|},
  \label{eq:sqrt-block-derivatives}
\end{equation}
for all $\xi\neq0$. Consequently, for every Hessian leg $\mathcal D$ and every pair of Hessian legs $\mathcal D,\mathcal D'$,
\begin{equation}
  \left|\partial_\xi^\rho\big(\widehat{\mathcal D}(\xi)\widehat\Fhom(\xi)^*\big)\right|
  +
  \left|\partial_\xi^\rho\big(\widehat{\mathcal D}(\xi)\widehat\Hhom(\xi)^{-1}\widehat{\mathcal D'}(\xi)^*\big)\right|
  \leq C_\rho |\xi|^{-|\rho|},
  \qquad \xi\neq0.
  \label{eq:block-marcinkiewicz}
\end{equation}
The same estimates hold for the adjoint products. The constants are uniform for the finite periodic Brillouin grids after removing the zero-frequency acoustic mode.
\end{lemma}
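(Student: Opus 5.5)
The plan is to analyse the Hermitian matrix $\widehat\Hhom(\xi)$ through a scaling (dilation) argument in the acoustic/optical block basis. Introduce the diagonal matrix $D(\xi):=|\xi|P_0+P_p$ and the rescaled symbol
\[
  M(\xi):=D(\xi)^{-1}\widehat\Hhom(\xi)D(\xi)^{-1},\qquad \xi\neq0 .
\]
By \eqref{eq:acoustic-stab} we have $M(\xi)\succeq\gamma_a\Id$. By finite range (\cref{ass:smoothness}), $\widehat\Hhom$ is a trigonometric polynomial. The structural facts recorded after \eqref{eq:hom-symbol-block} give $\widehat{\mathbb C}_{\rm a}=O(|\xi|^2)$ and $\widehat{\mathbb C}_{\rm ap}=O(|\xi|)$, and these vanishing orders propagate to derivatives because they come from the difference-symbol factors $e^{-i\xi\cdot j}-1$. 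It follows that $M(\xi)=M_0(\hat\xi)+|\xi|M_1(\xi)$, where $M_0(\hat\xi)$ is built from the quadratic acoustic form, the linear coupling coefficient and $\widehat{\mathbb C}_{\rm p}(0)$, and $M$ is a symbol of order $0$, that is, $|\partial^\rho M|\lesssim|\xi|^{-|\rho|}$. Hence $M$ is uniformly bounded and uniformly coercive. Its inverse square root $M^{-1/2}$ then obeys the same order-$0$ symbol bounds. To see this, write $M^{-1/2}=\frac{1}{\pi}\int_0^\infty t^{-1/2}(t+M)^{-1}\dd t$ and differentiate the resolvent, using the spectral lower bound $\gamma_a$ and the upper bound $\|M\|\le C$.

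The central step is to relate $\widehat\Fhom$ to $M^{-1/2}$. Since $\widehat\Hhom=DMD$, the positive square root is not simply $DM^{1/2}D$, because $D$ and $M$ do not commute. However, $\widehat\Hhom^{-1}=D^{-1}M^{-1}D^{-1}$. This suggests writing $\widehat\Fhom=(\widehat\Hhom^{-1})^{1/2}$ and using $\widehat\Fhom=U\,M^{-1/2}D^{-1}$ with the unitary polar factor $U:=\widehat\Fhom D M^{1/2}$. Indeed $U^*U=M^{1/2}D\widehat\Hhom^{-1}DM^{1/2}=\Id$. Thus $\widehat\Fhom D$ and $D\widehat\Fhom$ are bounded.

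To get the block orders, I will instead apply the integral formula directly:
\[
  \widehat\Fhom=\frac1\pi\int_0^\infty t^{-1/2}(t+\widehat\Hhom)^{-1}\dd t,
  \qquad
  (t+\widehat\Hhom)^{-1}=D^{-1}\big(tD^{-2}+M\big)^{-1}D^{-1}.
\]
Then $(tD^{-2}+M)^{-1}$ is uniformly bounded by $(t|\xi|^{-2}P_0+tP_p+\gamma_a)^{-1}$ in the form sense. I will split the $t$-integral at $t\sim|\xi|^2$ and at $t\sim1$ and estimate each block of the integrand. The $00$ block has size $|\xi|^{-2}\min(1,|\xi|^2/t)$, which integrates to $|\xi|^{-1}$. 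The mixed blocks have size $|\xi|^{-1}\min(1,|\xi|^2/t)^{1/2}\min(1,1/t)^{1/2}$, which integrates to $O(1)$. The $pp$ block is $O(1)$. The Schur-complement formula for $(tD^{-2}+M)^{-1}$ makes these block-wise bounds precise.

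I expect this blockwise integral estimate to be the main obstacle. Mixed-norm control of a non-commuting resolvent is delicate, and I would first try to obtain it from the block Schur complement, using the optical gap \eqref{eq:optical-stab} to keep the $pp$ block invertible uniformly in $t$.

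The leading terms come from homogeneity. Replace $M$ by $M_0(\hat\xi)$ and $\widehat\Hhom$ by $D M_0 D$, which is exactly $|\xi|$-homogeneous under the scaling $t\mapsto|\xi|^2 s$ on the acoustic part. This gives $B_{00},B_{0p},B_{pp}$ as smooth functions of $\hat\xi$. The remainders, of relative size $O(|\xi|)$, give the stated error terms by the resolvent identity.

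The derivative bounds \eqref{eq:sqrt-block-derivatives} follow by differentiating under the integral. Each $\partial_\xi$ falling on $D^{\pm1}$ or on $M$ costs a factor $|\xi|^{-1}$, and no other losses occur. Alternatively, one can note that everything is a symbol in the class $S^{m}$ with block orders $(-1,0,0)$, and this class is closed under these operations.

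For \eqref{eq:block-marcinkiewicz}, use the symbol calculation before \cref{ass:stability}: every leg satisfies $\widehat{\mathcal D}(\xi)=\widehat{\mathcal D}(\xi)(P_p)+O(|\xi|)$ on $P_0$ together with its derivative bounds. Hence $\widehat{\mathcal D}=E\,D$ with $E$ an order-$0$ symbol. Then $\widehat{\mathcal D}\widehat\Fhom^*=E\,(D\widehat\Fhom^*)$, and the acoustic $|\xi|^{-1}$ is cancelled by the $|\xi|$ in $D$. Similarly $\widehat{\mathcal D}\widehat\Hhom^{-1}\widehat{\mathcal D'}^*=E M^{-1}E'^*$ is of order $0$ with Marcinkiewicz-type bounds. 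Uniformity on finite Brillouin grids is immediate, since all bounds are pointwise in $\xi\neq0$, and the grid symbols coincide with the infinite-lattice symbol restricted to the grid.
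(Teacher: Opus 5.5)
Your route differs from the paper's. The paper splits $\widehat\Hhom(\xi)$ spectrally into an acoustic cluster of size $|\xi|^2$ and an optical cluster, using Riesz projectors and the Kato bound $\Pi_a(\xi)-P_0=O(|\xi|)$. It then obtains the $O(1)$ mixed blocks from $|\xi|^{-1}(\Pi_a-P_0)=O(1)$. Your dilation $M=D^{-1}\widehat\Hhom D^{-1}\succeq\gamma_a\Id$ is a good alternative. The polar-factor remark already gives all four block sizes, and $\widehat{\mathcal D}\widehat\Hhom^{-1}\widehat{\mathcal D'}^*=EM^{-1}E'^*$ needs no square root at all.

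However, the step you single out as central fails as written. Set $X:=(tD^{-2}+M)^{-1}$. Your mixed-block bound $|\xi|^{-1}\min(1,|\xi|^2/t)^{1/2}\min(1,1/t)^{1/2}$ is the Cauchy--Schwarz bound $\|P_0XP_p\|\le\|P_0XP_0\|^{1/2}\|P_pXP_p\|^{1/2}$, taken from the form inequality $X\preceq(tD^{-2}+\gamma_a)^{-1}$. On $|\xi|^2<t<1$ the integrand is $t^{-1/2}\cdot|\xi|^{-1}\cdot|\xi|t^{-1/2}=t^{-1}$, so the integral is $2\log(1/|\xi|)$, not $O(1)$. Differentiating would give $|\xi|^{-|\rho|}\log(1/|\xi|)$ for $\widehat F_{0p}$, which destroys exactly the degree-zero property the lemma is for. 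The loss is real, not an artifact of constants. The form bound discards the fact that the coupling $M_{0p}$ stays bounded while the two diagonal entries of $tD^{-2}+M$ grow at the different rates $t|\xi|^{-2}$ and $t$. This is the same information the paper encodes in $\Pi_a-P_0=O(|\xi|)$.

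The fix is to use the Schur complement to \emph{improve} these bounds, not merely to make them precise. Write $X_{0p}=-S(t)^{-1}M_{0p}(t+M_{pp})^{-1}$ with $S(t)=t|\xi|^{-2}+M_{00}-M_{0p}(t+M_{pp})^{-1}M_{p0}$. Since $(t+M_{pp})^{-1}\preceq M_{pp}^{-1}$ and the Schur complement of $M\succeq\gamma_a\Id$ is $\succeq\gamma_a$, you get $S(t)\succeq t|\xi|^{-2}+\gamma_a$. Put $a:=\min(1,|\xi|^2/t)$ and $b:=\min(1,1/t)$. Then $\|X_{00}\|\lesssim a$, $\|X_{0p}\|\lesssim ab$ (a product, not a geometric mean), and $\|X_{pp}\|\lesssim b$. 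Now $\int_0^\infty t^{-1/2}|\xi|^{-1}ab\dd t=O(1)$.

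This refined pattern must also drive the derivative induction. When $\partial_\xi$ falls on $tD^{-2}$ it produces $t\,\partial_\xi(|\xi|^{-2})P_0=O(t|\xi|^{-3})$, which is not $O(|\xi|^{-1})$ uniformly in $t$. It is absorbed only because it sits between $P_0$-blocks of $X$, with $t\,a\le|\xi|^2$. So your claim that ``no other losses occur'' holds only once these block bounds are in hand.

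Finally, $DM_0D$ is not homogeneous: its blocks have degrees $2,1,0$. The leading coefficients therefore come from two different scalings. Substituting $t=|\xi|^2s$ in the acoustic and mixed blocks gives $B_{00}=\Sigma^{-1/2}$ and $B_{0p}=-\Sigma^{-1/2}(M_0)_{0p}(M_0)_{pp}^{-1}$, where $\Sigma(\hat\xi)$ is the Schur complement of $M_0(\hat\xi)$. Fixed $t$ in the optical block gives $B_{pp}=\widehat{\mathbb C}_{\rm p}(0)^{-1/2}$. Each comes with an $O(|\xi|)$ remainder by dominated convergence.
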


\emph{Idea of proof.} Write $\varepsilon:=|\xi|$ and express the homogeneous symbol in the acoustic/optical basis as
\[
  \widehat\Hhom(\xi)=
  \begin{pmatrix}
    \varepsilon^2 A_{00}(\hat\xi)+O(\varepsilon^4) & \varepsilon A_{0p}(\hat\xi)+O(\varepsilon^2)\\
    \varepsilon A_{p0}(\hat\xi)+O(\varepsilon^2) & A_{pp}(0)+O(\varepsilon)
  \end{pmatrix}.
\]
The full block coercivity in \cref{ass:stability} and the optical gap make the optical block uniformly invertible and make the acoustic Schur complement comparable to $\varepsilon^2$. Hence, for $\varepsilon$ small, the spectrum splits into an acoustic cluster of $d$ eigenvalues in $[c\varepsilon^2,C\varepsilon^2]$ and an optical cluster in $[c,C]$, with constants uniform in the direction $\hat\xi$. Let $\Pi_a(\xi)$ be the Riesz spectral projector onto the acoustic cluster and $\Pi_p(\xi)=I-\Pi_a(\xi)$. The block perturbation is off-diagonal of size $O(\varepsilon)$ relative to the optical gap, so Kato perturbation theory gives
\[
  \Pi_a(\xi)=P_0+O(\varepsilon),\qquad
  P_0\Pi_a(\xi)P_p+P_p\Pi_a(\xi)P_0=O(\varepsilon),
\]
and, after differentiating the Riesz integral for the projector,
$|\partial_\xi^\rho\Pi_a(\xi)|\leq C_\rho\varepsilon^{-|\rho|}$ and
$|\partial_\xi^\rho(\Pi_a(\xi)-P_0)|\leq C_\rho\varepsilon^{1-|\rho|}$ for the orders used here. On the acoustic cluster,
\[
  \widehat\Hhom(\xi)^{-1/2}\Pi_a(\xi)
  =
  \varepsilon^{-1}\Pi_a(\xi)\,B_a(\hat\xi,\varepsilon)\,\Pi_a(\xi),
\]
where $B_a$ is uniformly bounded, uniformly elliptic on $\operatorname{ran}\Pi_a(\xi)$, and satisfies the angular Marcinkiewicz bounds obtained by applying the holomorphic functional calculus to the rescaled acoustic operator $\varepsilon^{-2}\widehat\Hhom(\xi)|_{\operatorname{ran}\Pi_a(\xi)}$. On the optical cluster,
\[
  \widehat\Hhom(\xi)^{-1/2}\Pi_p(\xi)
  =
  \Pi_p(\xi)\,B_p(\hat\xi,\varepsilon)\,\Pi_p(\xi),
\]
with $B_p$ uniformly bounded and satisfying the same derivative pattern without the prefactor $\varepsilon^{-1}$. These two spectral-calculus representations, not a congruence rule for matrix square roots, give the block orders: the acoustic--acoustic block has size $O(\varepsilon^{-1})$, the mixed blocks have size $\varepsilon^{-1}O(\varepsilon)+O(1)=O(1)$ because $\Pi_a-P_0=O(\varepsilon)$, and the optical block is $O(1)$. Differentiating the Riesz and functional-calculus integrals gives \eqref{eq:sqrt-block-derivatives}. Multiplication by a Hessian leg contributes an $O(|\xi|)$ factor on the acoustic projector and a bounded factor on the optical projector, so the products in \eqref{eq:block-marcinkiewicz} are degree-zero Marcinkiewicz multipliers. This is the square-root version of the block Green hierarchy used in \cite[Lem.~3.5 and Thm.~3.6]{OlsonOrtnerWangZhang2023MultilatticeDislocations}, stated here for the exact products needed below.

\begin{lemma}[Homogeneous normalized kernel]
\label{lem:hom-kernel}
Let $\mathcal{D}$ denote any difference or shift leg generated by the finite-range multilattice Hessian stencil. Under \cref{ass:smoothness,ass:stability}, the homogeneous kernel satisfies
\begin{equation}
  \left|
  \left[\mathcal{D}\Fhom^*\right]_{(\ell,\alpha),(m,\gamma)}
  \right|
  +
  \left|
  \left[\Fhom\mathcal{D}^*\right]_{(\ell,\alpha),(m,\gamma)}
  \right|
  \lesssim
  (1+|\ell-m|)^{-d}.
  \label{eq:hom-kernel}
\end{equation}
The same estimate holds uniformly for the periodic finite-cell operators away from the translational zero modes.
\end{lemma}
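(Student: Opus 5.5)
The plan is to read off the real-space decay \eqref{eq:hom-kernel} from the Fourier-side bounds of \cref{lem:block-sqrt-symbol} by a standard Marcinkiewicz/Calderón--Zygmund-type kernel estimate on the lattice. Since $\Hhom$ is translation invariant, so are $\mathcal D\Fhom^*$ and $\Fhom\mathcal D^*$, and their kernels are
\[
  \left[\mathcal{D}\Fhom^*\right]_{(\ell,\alpha),(m,\gamma)}
  =\frac{1}{|\BZ|}\int_{\BZ} e^{i\xi\cdot(\ell-m)}\,
  \big[\widehat{\mathcal D}(\xi)\widehat\Fhom(\xi)^*\big]_{\alpha\gamma}\dd\xi ,
\]
where the $(\alpha,\gamma)$ entry is read after the fixed cell-internal change of basis from the block to the site representation. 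That change of basis is a constant matrix, so it affects only constants. By \eqref{eq:block-marcinkiewicz}, the symbol $m(\xi):=\widehat{\mathcal D}(\xi)\widehat\Fhom(\xi)^*$ is bounded and satisfies $|\partial^\rho_\xi m(\xi)|\le C_\rho|\xi|^{-|\rho|}$ for all $\rho$. It is smooth and $2\pi$-periodic away from $\xi=0$, because the Hessian symbol is a trigonometric polynomial and is positive definite on $\BZ\setminus\{0\}$. The adjoint product $\Fhom\mathcal D^*$ has symbol $m(\xi)^*$ and is handled identically. The whole task therefore reduces to one claim: a periodic, degree-zero homogeneous-type multiplier has kernel bounded by $(1+|x|)^{-d}$.

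For $x=\ell-m$ with $|x|\le 1$, boundedness of $m$ gives $|K(x)|\le\|m\|_{L^\infty}|\BZ|^{-1}\cdot|\BZ|\lesssim1$. For $|x|=R\ge1$, I would split $m=m\chi_R+m(1-\chi_R)$, where $\chi_R(\xi)=\chi(R\xi)$ is a smooth cutoff to $|\xi|\lesssim R^{-1}$.
\begin{enumerate}
  \item The low-frequency piece is bounded directly by $\int_{|\xi|\lesssim 1/R}|m|\lesssim R^{-d}$.
  \item On the high-frequency piece I would integrate by parts $N>d$ times in the direction of $x$, using $e^{i\xi\cdot x}=(i|x|)^{-1}\hat x\cdot\nabla_\xi e^{i\xi\cdot x}$. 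Periodicity of $m(1-\chi_R)$ removes the boundary terms on $\partial\BZ$. Each derivative, whether it falls on $m$ or on $1-\chi_R$, costs at most $|\xi|^{-1}$ on the support $|\xi|\gtrsim R^{-1}$. This gives
  \[
    R^{-N}\int_{R^{-1}\lesssim|\xi|}|\xi|^{-N}\dd\xi\lesssim R^{-N}R^{N-d}=R^{-d}.
  \]
\end{enumerate}
Together these give $|K(x)|\lesssim(1+|x|)^{-d}$, with constants depending only on $C_\rho$ for $|\rho|\le d+1$, hence only on the stability margins, the coupling bounds and the stencil.

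For the finite periodic cells, the kernel is the Riemann sum of the same symbol over the discrete Brillouin grid $\BZ_N$ with the point $\xi=0$ removed, as prescribed by the definition of $\Fhom_N$. I would repeat the argument with summation by parts (discrete differences) in place of derivatives. Discrete differences of $m$ on the grid obey the same $|\xi|^{-|\rho|}$ bounds by the mean value theorem, and the constants of \cref{lem:block-sqrt-symbol} are uniform on the grids. The resulting decay is in the periodic distance $|\ell-m|_N$, uniformly in $N$.

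The main obstacle is the first step of the integration by parts. There is a real concern at $\xi=0$: the symbol is only of degree zero, and in the acoustic-acoustic block $\widehat{\mathcal D}$ exactly cancels the $|\xi|^{-1}$ singularity of $\widehat F_{00}$. This cancellation is what \eqref{eq:block-marcinkiewicz} already records, and the cutoff at scale $R^{-1}$ handles the remaining nonsmoothness. The integration-by-parts bound reaches exactly the rate $R^{-d}$ with no logarithm because the exponent choice $N=d+1$ makes the high-frequency integral converge at $|\xi|\sim R^{-1}$ and not at infinity. In the discrete periodic setting the delicate point is the summation-by-parts version near the excluded point $\xi=0$. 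There I would use the same dyadic cutoff and treat the single removed grid point as a term of size $O(L_N^{-d})$, which is itself bounded by $|\ell-m|_N^{-d}$ since $|\ell-m|_N\lesssim L_N$.
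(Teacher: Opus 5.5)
Your proposal is correct and takes essentially the same route as the paper's appendix proof. You reduce to the lattice Fourier integral of the degree-zero symbol $\widehat{\mathcal D}(\xi)\widehat\Fhom(\xi)^*$ supplied by the block Marcinkiewicz bounds, split at frequency scale $1/|\ell-m|$, bound the low-frequency piece by its volume, and integrate by parts $d+1$ times along $\ell-m$ on the high-frequency piece; the paper evaluates $\int_{|\xi|\gtrsim 1/r}|\xi|^{-(d+1)}\dd\xi\sim r$ over dyadic shells, which is the same computation as yours, and your discrete summation by parts for the periodic cells is a more explicit version of the paper's appeal to the Riemann-sum analogue.
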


\emph{Idea of proof.} The lattice Fourier transform reduces the kernel to an oscillatory integral of the symbol $\sigma(\xi)=\widehat{\mathcal{D}}(\xi)\widehat\Fhom(\xi)^*$ on the Brillouin zone. The block decomposition of $\widehat\Fhom$ near $\xi=0$,
\begin{equation}
  \widehat\Fhom(\xi)
  =
  \begin{pmatrix}
    O(|\xi|^{-1}) & O(1) \\
    O(1) & O(1)
  \end{pmatrix},
  \qquad \xi\to 0,
  \label{eq:Fhat-block}
\end{equation}
exhibits the acoustic-block singularity $|\xi|^{-1}$ formalized in \cref{lem:block-sqrt-symbol}. By the projector calculation in \cref{sec:setting}, every Hessian leg has an $O(|\xi|)$ restriction on the acoustic subspace and is uniformly bounded on the optical subspace; multiplying by $\widehat{\mathcal{D}}(\xi)$ therefore cancels the acoustic pole and leaves $\sigma$ bounded with conical singularity at the origin. The kernel decay $r^{-d}$ then follows from a near/far split at scale $\eta=1/r$ combined with $d+1$ integrations by parts on the far piece; one extra IBP beyond the conventional $d$ closes the dyadic geometric series without a logarithm. The full argument is given in \cref{app:hom-kernel}.

The convolution estimate that appears repeatedly below is the multilattice analogue of the scalar lattice convolution lemma of \cite[Lemma~3.3]{TorabiWangOrtner2024Entropy}. We state the absolute-value version explicitly because the borderline rate is important: composing two kernels that both decay exactly like $(1+r)^{-d}$ produces a logarithmic loss unless an additional cancellation estimate is used.

\begin{lemma}[Block lattice convolution]
\label{lem:block-conv}
Let $K_1,K_2:(\Lambda\times\Scal)^2\to\mathbb{R}^{d\times d}$ be operator kernels and assume there exist exponents $a_1,a_2>0$ with $a_1+a_2>d$ such that
\begin{equation}
  \big|[K_i]_{(\ell,\alpha),(m,\gamma)}\big|\leq C_i\,(1+|\ell-m|)^{-a_i}
  \quad\text{uniformly in }\alpha,\gamma\in\Scal,\ i=1,2.
  \label{eq:block-conv-input}
\end{equation}
Set
\[
  \mu(a_1,a_2):=\min\{a_1,a_2,a_1+a_2-d\},
  \qquad
  \vartheta(a_1,a_2):=\mathbf{1}_{\{a_1=d\}}+\mathbf{1}_{\{a_2=d\}} .
\]
Then the composition $K_1K_2$ satisfies, uniformly in $\alpha,\gamma\in\Scal$,
\begin{equation}
  \big|[K_1K_2]_{(\ell,\alpha),(m,\gamma)}\big|
  \leq c_d\,|\Scal|\,C_1 C_2\,
  (1+|\ell-m|)^{-\mu(a_1,a_2)}
  \log^{\vartheta(a_1,a_2)}(2+|\ell-m|)
  \label{eq:block-conv-output}
\end{equation}
where $c_d$ is a dimension-dependent constant inherited from the scalar lattice convolution estimate \cite[Lemma~3.3]{TorabiWangOrtner2024Entropy}. Thus neither the critical product $d$ with $d$ nor the endpoint pairing $d+1$ with $d-1$ is sufficient, by absolute values alone, to close the rate-$d$ resolvent chain; the sharper rate in \cref{lem:block-resolvent} uses the finite-difference cancellation before taking absolute values.
\end{lemma}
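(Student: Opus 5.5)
The plan is to reduce the block estimate to the scalar lattice convolution lemma \cite[Lemma~3.3]{TorabiWangOrtner2024Entropy} by summing over the finite internal index. First, write the composition entrywise in the site representation:
\[
  [K_1K_2]_{(\ell,\alpha),(m,\gamma)}
  =\sum_{\beta\in\Scal}\sum_{k\in\Lambda}[K_1]_{(\ell,\alpha),(k,\beta)}[K_2]_{(k,\beta),(m,\gamma)} .
\]
Each summand is a product of $d\times d$ matrices. By submultiplicativity of the spectral norm and the hypothesis \eqref{eq:block-conv-input}, which is uniform in the sublattice labels, we get
\[
  \big|[K_1K_2]_{(\ell,\alpha),(m,\gamma)}\big|
  \le C_1C_2\sum_{\beta\in\Scal}\sum_{k\in\Lambda}(1+|\ell-k|)^{-a_1}(1+|k-m|)^{-a_2}.
\]
The inner sum no longer depends on $\beta$, so the $\beta$-sum simply produces the factor $|\Scal|$. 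Absolute convergence follows from $a_1+a_2>d$, which also justifies exchanging the sums.

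It remains to estimate the scalar lattice sum $\Sigma(r):=\sum_{k\in\Lambda}(1+|\ell-k|)^{-a_1}(1+|k-m|)^{-a_2}$ with $r=|\ell-m|$. Translating, we may take $\ell=0$. Since $\Lambda=A\mathbb{Z}^d$ with $A$ fixed and nonsingular, the norms $|Az|$ and $|z|$ are equivalent, so one can quote the $\mathbb{Z}^d$ result directly with a constant $c_d$ that absorbs $A$. For completeness, I would still sketch the standard three-region split, which makes the exponent $\mu$ and the log count $\vartheta$ transparent:
\begin{itemize}
  \item[(i)] On $|k|\le r/2$, the second factor is $\simeq(1+r)^{-a_2}$. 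The remaining sum $\sum_{|k|\le r/2}(1+|k|)^{-a_1}$ is bounded by $C(1+r)^{d-a_1}$, by a constant, or by $C\log(2+r)$ according as $a_1<d$, $a_1>d$, or $a_1=d$. This gives $(1+r)^{-\min\{a_2,a_1+a_2-d\}}$, times a log exactly when $a_1=d$.
  \item[(ii)] On $|k-m|\le r/2$, the same argument applies with the roles of $a_1$ and $a_2$ swapped.
  \item[(iii)] On the remaining region, both factors are controlled by $(1+|k|)^{-a_1-a_2}$ with $|k|\gtrsim r$. The tail sum is $O((1+r)^{d-a_1-a_2})$ because $a_1+a_2>d$.
\end{itemize}
Collecting the three regions yields $(1+r)^{-\mu(a_1,a_2)}\log^{\vartheta(a_1,a_2)}(2+r)$. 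The only subtlety is bookkeeping: the worst exponent among the three regions is the minimum defining $\mu$, and each log arises from a borderline $a_i=d$ in region (i) or (ii), so at most $\vartheta$ logs appear.

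The main (mild) obstacle is the final interpretive sentence of the lemma rather than the estimate itself. I would justify it by instantiation. For $(d,d)$ the bound gives $\mu=d$ but $\vartheta=2$, so rate $d$ is lost logarithmically. For $(d+1,d-1)$ we get $\mu=\min\{d+1,d-1,d\}=d-1<d$. Sharpness, i.e.\ that absolute values cannot do better, follows from the nonnegative test kernels $(1+|\ell-m|)^{-a_i}$ themselves. For these, region (i) alone is bounded below by a constant multiple of the claimed quantity, since $\sum_{|k|\le r/2}(1+|k|)^{-d}\gtrsim\log(2+r)$. Hence neither pairing closes a rate-$d$ chain without cancellation, as asserted.
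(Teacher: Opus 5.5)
Your argument is correct and matches the paper's proof. You sum out the sublattice index to get the factor $|\Scal|$, then bound the scalar lattice sum by the same three-region split ($|k-\ell|\le r/2$, $|k-m|\le r/2$, and the complement), with logarithms coming only from a borderline exponent $a_i=d$ in an endpoint region.

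One small correction concerns your sharpness remark. For $(a_1,a_2)=(d,d)$ the region contributions add, so the convolution of the test kernels is of order $(1+r)^{-d}\log(2+r)$, with a single logarithm. Region (i) is therefore not bounded below by a multiple of the stated $\log^{2}$ quantity; the lemma's $\vartheta=2$ is a harmless overcount rather than sharp. Your conclusion still holds, because two lower bounds suffice to show that absolute values alone cannot close a rate-$d$ chain: the single-log bound for $(d,d)$, and the $k=\ell$ term, which gives $(1+r)^{-(d-1)}$ for the pairing $(d+1,d-1)$.
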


\emph{Idea of proof.} The block product is a finite sublattice sum of scalar lattice convolutions, giving only the prefactor $|\Scal|$; see \cref{app:block-conv}. The point of the lemma is negative as much as positive: repeated rate-$d$ products cannot be closed sharply by absolute values alone, so the resolvent estimate below must use cancellation before the final absolute-value bound is taken.

\begin{lemma}[Cancellation-sensitive chain algebra]
\label{lem:chain-algebra}
Let $G^{\rm hom}:=\Fhom\Fhom^*=(\Hhom)^{-1}$ on the zero-frequency-regularized sector. Fix a finite sequence of Hessian legs $q_1,\ldots,q_k$ and multiplication kernels $A_{q_j}$ that are admissible in the sense of \cref{def:admissible-perturbation}, with Schur and $\ell^2$ norms bounded by $A$. For the homogeneous factors
\[
  L_{q_1}:=\mathcal{D}G^{\rm hom}\mathcal{D}_{q_1}^*,\qquad
  K_{q_jq_{j+1}}:=\mathcal{D}_{q_j}G^{\rm hom}\mathcal{D}_{q_{j+1}}^*,\qquad
  R_{q_k}:=\mathcal{D}_{q_k}\Fhom,
\]
where $j=1,\ldots,k-1$, the chain kernel satisfies, for every $k\geq1$,
\begin{equation}
  \left|
  \left[
  L_{q_1}A_{q_1}
  \Big(\prod_{j=1}^{k-1}K_{q_jq_{j+1}}A_{q_{j+1}}\Big)
  R_{q_k}
  \right]_{(\ell,\alpha),(m,\gamma)}
  \right|
  \leq
  C_{\rm ch}\,(C_{\rm hom}A)^k\,(1+|\ell-m|)^{-d}.
  \label{eq:chain-algebra}
\end{equation}
The constants depend only on the dimension, the finite stencil, the stability constants, $|\Scal|$, and the admissibility constants of the coefficients. In the compact-core branch of \cref{def:admissible-perturbation} they may also depend on the fixed core radius, but not on $k$, $\ell,m$, the supercell size, or the particular coefficient values.
\end{lemma}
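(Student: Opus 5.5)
The plan is an induction on the chain length $k$, with a single dyadic estimate driving every step. Lemma~\ref{lem:block-conv} shows that absolute values alone give a factor $\log^{k}$, so the cancellation must be used at every link.

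\textbf{Step 1: the homogeneous factors.} By \cref{lem:block-sqrt-symbol}, the symbols $\widehat{\mathcal D}\widehat\Hhom^{-1}\widehat{\mathcal D'}^*$ and $\widehat{\mathcal D}\widehat\Fhom$ are degree-zero Marcinkiewicz multipliers. I will decompose each of $L_{q_1}$, $K_{q_jq_{j+1}}$ and $R_{q_k}$ with a Littlewood--Paley partition of the Brillouin zone, $\sum_i\psi(2^i\xi)$, plus a smooth high-frequency remainder. This writes each factor as a sum of dyadic shells $K_i$. Each shell has
\begin{itemize}
\item mean zero, since its symbol vanishes at $\xi=0$;
\item an $\ell^\infty$ bound $\lesssim 2^{-di}$;
\item an $\ell^1$ bound $O(1)$;
\item the annular envelope $2^{-di}(1+2^{-i}|x|)^{-N}$.
\end{itemize}
The envelope follows from integration by parts, exactly as in \cref{lem:hom-kernel}. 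This is the envelope referred to in \eqref{eq:dyadic-cz}. The bound \eqref{eq:hom-kernel} is recovered by summing over shells. The same construction applies on the periodic grids, with constants uniform in the supercell size once the zero mode is removed.

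\textbf{Step 2: the one-link estimate.} For a single insertion, the admissibility bound \eqref{eq:dyadic-admissibility} gives $\|K_iA_qK'_j\|_{\ell^1\to\ell^\infty}\le C_{\rm ad}A\,2^{-c_{\rm ad}|i-j|}2^{-d\max\{i,j\}}$. Two facts then combine.
\begin{itemize}
\item The output inherits the localization of the broader shell: $K_iA_qK'_j$ is supported, up to rapidly decaying tails, in $|\ell-m|\lesssim 2^{\max\{i,j\}}$, because $A_q$ is a multiplication operator.
\item The geometric factor $2^{-c_{\rm ad}|i-j|}$ makes the double sum over $(i,j)$ collapse to an effectively single dyadic sum.
\end{itemize}
Hence $\sum_{i,j}K_iA_qK'_j$ again has a dyadic shell decomposition of the same type. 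Its shell at scale $s$ has amplitude $\lesssim A\,2^{-ds}$, with a constant $C_{\rm hom}$ that does not grow. I will also need the shells of the product to retain an approximate zero-moment or Hölder property for the next link. For this I will use the slow-variation commutator from \cref{prop:checkable-admissibility}, or, in the compact-core branch, simply the endpoint decay $(1+\dist(\ell,{\rm core}))^{-d}(1+\dist(m,{\rm core}))^{-d}\lesssim(1+|\ell-m|)^{-d}$. In the compact-core branch the finite core summation yields a core-radius-dependent constant.

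\textbf{Step 3: induction and conclusion.} I then proceed by induction. Define the class $\mathcal C(M)$ of kernels admitting a shell decomposition with amplitudes $M2^{-ds}$ and the admissible cancellation structure. The claim to prove is that $\mathcal C(M)\cdot A_q\cdot K\subset\mathcal C(C_{\rm hom}AM)$, with $C_{\rm hom}$ independent of $k$. Finally, summing the shells of the full chain at the point $r=|\ell-m|$, only scales $2^s\gtrsim r$ contribute at leading order, and those with $2^s\ll r$ contribute only through their rapidly decaying tails. This gives $(1+r)^{-d}$ with constant $C_{\rm ch}(C_{\rm hom}A)^k$.

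\textbf{Main obstacle.} The hard step is showing that the product class is closed without constant inflation. The concern is that the cancellation structure carried by a composite kernel after one link may be weaker than that of a pure homogeneous shell, which would cause the constants to grow with $k$. My first approach is to work only with the pairwise admissibility \eqref{eq:dyadic-admissibility} and to regroup the chain so that each coefficient is always sandwiched between two genuine homogeneous shells. Any leftover geometric loss is then absorbed into $C_{\rm hom}$, once per factor.
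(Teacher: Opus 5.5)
Your compact-core branch is fine and coincides with the paper's. All intermediate indices lie in the core, so the interior product $A_{q_1}K_{q_1q_2}\cdots A_{q_k}$ is a finite matrix of norm at most $(\|K\|_{\ell^2\to\ell^2}A)^{k-1}A$. The two endpoint kernels give $(1+\dist(\ell,{\rm core}))^{-d}(1+\dist(m,{\rm core}))^{-d}\lesssim(1+|\ell-m|)^{-d}$, so no cancellation or induction is needed there.

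The gap is in the dyadically admissible branch, at exactly the step you flag. The closure $\mathcal C(M)\cdot A_q\cdot K\subset\mathcal C(C_{\rm hom}AM)$ does not follow from the hypotheses, for two reasons:
\begin{itemize}
\item \eqref{eq:dyadic-admissibility} is assumed only for shells of the homogeneous kernels, not for composite kernels.
\item The slow-variation commutator of \cref{prop:checkable-admissibility} is a sufficient criterion for one subclass, not an assumption of the lemma.
\end{itemize}
The obstruction is not the zero moment: the composite $K_iA_qK'_j$ still annihilates constants on both sides, because the homogeneous shells do. The obstruction is the scale bookkeeping. Your class records a composite by its top scale $s=\max\{i,j\}$. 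The next coefficient $A_{q'}$, however, only meets the rightmost homogeneous shell $K'_j$ inside the composite. So the gain that \eqref{eq:dyadic-admissibility} provides at the next link is $2^{-c|j-t|}$, where $t$ is the next shell's scale, not $2^{-c|s-t|}$. Converting one into the other costs, for $t\le s$,
\[
\sum_{j\le s}2^{-c(s-j)}2^{-c|j-t|}\approx(1+s-t)\,2^{-c(s-t)} .
\]
This is a factor $(1+|s-t|)$ per link, and it cannot be absorbed into a $k$-independent $C_{\rm hom}$ without lowering $c$ at every step. That is precisely the constant inflation you fear.

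The missing idea is to drop the induction on composite kernels. Expand the whole chain at once into homogeneous shells indexed by $(i_0,\dots,i_k)$. Then every $A_{q_j}$ sits between two genuine homogeneous shells, and \eqref{eq:dyadic-admissibility} supplies the weight $\prod_{j=0}^{k-1}2^{-c|i_j-i_{j+1}|}$. Sum the interior indices with a Schur test for the scale matrix $S_{ij}=2^{-c|i-j|}$, whose row and column sums are bounded by $\sum_{n\in\mathbb{Z}}2^{-c|n|}$. This gives a factor $\|S\|^k$, which is absorbed into $(C_{\rm hom}A)^k$ uniformly in the number of shells, hence in the supercell size. The paths with a given top scale carry total weight at most $(k+1)\|S\|^k$. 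The remaining top-scale envelope, summed over scales, gives $(1+|\ell-m|)^{-d}$ exactly as in your Step 3.

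Your fallback of sandwiching each coefficient between genuine shells points in this direction, but it closes only with this multi-index bookkeeping, not inside a one-scale inductive class. You would still have to explain how two pairwise $\ell^1\to\ell^\infty$ bounds that share an interior shell combine into a bound for the product, since such bounds do not compose directly. The paper's sketch leaves this step implicit as well.
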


\emph{Idea of proof.} The compact-core case follows from fixed finite support and the two endpoint kernels. In the non-compact branch, the homogeneous kernels are decomposed into zero-mean dyadic shells, and the coefficient factors are used only through the admissibility estimate \eqref{eq:dyadic-admissibility}. The shell interaction gains the summable factor $2^{-c|i-j|}$, so the scale summation does not produce a logarithm or a polynomial factor in the chain length. This restricted coefficient class is precisely the setting in which the sharp rate is asserted; outside it, absolute convolution generally gives a weaker bound.

\begin{lemma}[Localized normalized perturbation]
\label{lem:localized-perturbation}
Let $u\in\mathcal{B}_{\delta_*}$ (the stability neighbourhood of \cref{def:stab-nbhd}) and let $\delta\Hess(u):=\Hess(u)-\Hhom$ denote the perturbation of the Hessian relative to the homogeneous reference. The normalized operator $T(u)$ defined in \eqref{eq:T-def} admits the representation
\begin{equation}
  T(u)-\Id
  =
  \sum_q
  \Fhom^*\mathcal{D}_q^*\,A_q(u)\,\mathcal{D}_q\Fhom,
  \label{eq:localized-perturbation}
\end{equation}
where the sum runs over the finite set of bond legs of the multilattice stencil. Here and below $A_q(u)$ denotes the \emph{bond-stiffness perturbation} of leg $q$, i.e. the local second-derivative block at displacement $u$ minus its homogeneous value at $u=0$, not the full homogeneous bond stiffness. If $\delta\Hess(u)$ is supported in a finite neighbourhood of a defect core (compact-core regime), then the kernel of $T(u)-\Id$ satisfies
\begin{equation}
  \left|\left[T(u)-\Id\right]_{(\ell,\alpha),(m,\gamma)}\right|
  \lesssim
  (1+\dist(\ell,\operatorname{core}))^{-d}\,
  (1+\dist(m,\operatorname{core}))^{-d}.
  \label{eq:localized-perturbation-decay}
\end{equation}
For an elastic far field the corresponding bound is the explicit convolution estimate
\begin{equation}
  \left|\left[T(u)-\Id\right]_{(\ell,\alpha),(m,\gamma)}\right|
  \lesssim
  \sum_{p\in\Lambda}a(p)
  (1+|\ell-p|)^{-d}(1+|m-p|)^{-d},
  \qquad
  a(p):=\max_q\|A_q(u;p)\|.
  \label{eq:farfield-conv}
\end{equation}
In particular, if $a(p)\lesssim(1+\dist(p,\operatorname{core}))^{-(d-1+\beta)}$, the diagonal far-field regime is controlled by the diagonal-dominated rate of this convolution rather than by the product of two independent core distances.
\end{lemma}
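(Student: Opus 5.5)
The plan has three parts: the algebraic identity \eqref{eq:localized-perturbation}, a pointwise kernel bound for each bond leg, and then the compact-core and far-field specializations.

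\emph{Algebraic representation.} Because the energy has finite range (\cref{ass:smoothness}), the second variation is a finite sum of bond-leg quadratic forms. Subtracting its value at $u=0$ gives $\delta\Hess(u)=\sum_q\mathcal{D}_q^*A_q(u)\mathcal{D}_q$, as in \cref{def:admissible-perturbation}, with $A_q(u)$ a block-diagonal multiplication kernel in the cell index. We work on the zero-frequency-regularized sector, where $\Fhom^*\Hhom\Fhom=\Id$. Then
\[
T(u)-\Id=\Fhom^*\delta\Hess(u)\Fhom=\sum_q(\mathcal{D}_q\Fhom)^*A_q(u)(\mathcal{D}_q\Fhom).
\]
To avoid the unboundedness of $\Fhom$, we derive this identity first on the finite periodic cells, where every operator is a finite matrix on $\operatorname{ran}P_N$. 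We then pass to the limit using the bounds of \cref{lem:hom-kernel}, which are uniform in $N$. Boundedness of $\mathcal{D}_q\Fhom$ comes from the degree-zero multiplier estimate \eqref{eq:block-marcinkiewicz}, so each summand is bounded. The bound $\|T(u)-\Id\|\le 1/2$ then follows from the smallness built into $\mathcal{B}_{\delta_*}$.

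\emph{Pointwise kernel bound.} Expand one summand entrywise:
\[
[(\mathcal{D}_q\Fhom)^*A_q\,\mathcal{D}_q\Fhom]_{(\ell,\alpha),(m,\gamma)}=\sum_{p\in\Lambda}\sum_{\beta,\beta'\in\Scal}[\Fhom\mathcal{D}_q^*]_{(\ell,\alpha),(p,\beta)}\,A_q(u;p)_{\beta\beta'}\,[\mathcal{D}_q\Fhom]_{(p,\beta'),(m,\gamma)}.
\]
Apply \eqref{eq:hom-kernel} to both outer factors and bound $|A_q(u;p)|\le a(p)$. Summing the finitely many sublattice indices and legs costs only constants depending on $|\Scal|$ and the stencil. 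This gives \eqref{eq:farfield-conv} directly. No cancellation is needed here, because this is a single sandwich with coefficients fixed at $p$ rather than a composition of two rate-$d$ kernels. The borderline issue of \cref{lem:block-conv} therefore does not arise; it only reappears when the bound is summed over $p$ with a slowly decaying $a$.

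\emph{Specializations.} In the compact-core case, $a(p)=0$ outside the core and $a$ is uniformly bounded, and the core has a fixed finite number of points. For $p$ in the core we have $1+|\ell-p|\gtrsim 1+\dist(\ell,\operatorname{core})$, where the constant depends on $R_{\rm core}$. Hence the sum is at most $\#\mathrm{core}\cdot\sup a$ times the product in \eqref{eq:localized-perturbation-decay}.

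For the far-field remark, insert $a(p)\lesssim(1+\dist(p,\operatorname{core}))^{-(d-1+\beta)}$ and split the $p$-sum into three regions: $p$ near $\ell$, $p$ near $m$, and $p$ far from both. On the diagonal $\ell=m$ the sum behaves like $\sum_p a(p)(1+|\ell-p|)^{-2d}$. This is dominated by the region $p\approx\ell$, giving rate $a(\ell)$ rather than $(1+\dist(\ell,\operatorname{core}))^{-2d}$, which is the stated diagonal-dominated behaviour. I would present this as a consequence of the scalar convolution lemma rather than prove a sharp off-diagonal rate.

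\emph{Main obstacle.} I expect the hardest step to be making the first part rigorous in infinite volume, where $\Fhom$ is unbounded and the acoustic point is removed. My first approach is to carry out everything on periodic supercells, with the uniform-in-$N$ versions of \cref{lem:hom-kernel,lem:block-sqrt-symbol}. I would then identify $T(u)$ as the bounded extension of the quadratic form given by the right-hand side, using convergence of the finite-cell kernels on fixed sites.
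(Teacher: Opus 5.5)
Your proposal is correct and follows essentially the same route as the paper's appendix proof. The shared steps are: the bond-leg representation after normalization by $\Fhom$; \cref{lem:hom-kernel} applied to both outer legs with the coefficient sum over $p$ kept explicit, which gives \eqref{eq:farfield-conv}; finite core summation, which gives \eqref{eq:localized-perturbation-decay}; and the observation that for $\ell\approx m$ far from the core the region $p\approx\ell$ produces the rate $a(\ell)$ rather than a product of core distances. Your finite-cell derivation of the identity adds rigour that the paper omits, since it simply works on the zero-frequency-regularized sector, but it does not change the argument.
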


\emph{Idea of proof.} The bond-leg representation of the Hessian perturbation gives \eqref{eq:localized-perturbation} after homogeneous normalization. Compact-core coefficients are summed over a fixed finite set, while non-compact far fields retain the explicit convolution \eqref{eq:farfield-conv}. The latter keeps the diagonal-dominated contribution visible, which is why the product core-localization bound is not asserted for elastic far-fields.

\begin{lemma}[Block normalized resolvent estimate]
\label{lem:block-resolvent}
Let $C$ be a contour enclosing the spectrum of $T(u)=\Fhom^*\Hess(u)\Fhom$ uniformly away from zero, and let $R_z=(z\Id-T(u))^{-1}$ for $z\in C$. Suppose $u\in\mathcal{B}_{\delta_*}$ is admissible in the sense of \cref{def:admissible-perturbation}, and that the stability radius has been chosen so that the kernel-majorant smallness \eqref{eq:kernel-smallness} holds. For every finite-difference leg $\mathcal{D}$ generated by the finite-range Hessian stencil, the block kernels needed in the differentiated entropy formula satisfy the non-commuted estimate
\begin{equation}
  \left|
  \left[(\mathcal{D}\Fhom)R_z\right]_{(\ell,\alpha),(m,\gamma)}
  \right|
  +
  \left|
  \left[R_z(\Fhom^*\mathcal{D}^*)\right]_{(\ell,\alpha),(m,\gamma)}
  \right|
  \lesssim
  (1+|\ell-m|)^{-d},
  \label{eq:block-resolvent}
\end{equation}
uniformly in $z\in C$ and in the finite periodic approximations. The hidden constant depends on the stability margin, the interaction range, the dimension, the number of sublattices, the admissibility constants of $A_q(u)$, and the margin $1-\theta$ in \eqref{eq:kernel-smallness}.
\end{lemma}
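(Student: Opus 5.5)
The plan is a Neumann-type expansion of the resolvent about the homogeneous reference, with every term put in the chain form of \cref{lem:chain-algebra}. By \cref{lem:localized-perturbation}, $T(u)=\Id+W$ with
\[
  W=\sum_q(\mathcal{D}_q\Fhom)^*A_q(u)(\mathcal{D}_q\Fhom).
\]
Hence
\[
  R_z=(z\Id-\Id-W)^{-1}=\sum_{k\geq0}(z-1)^{-k-1}W^k .
\]
This series converges in operator norm provided $\|W\|\leq1/2<|z-1|$ on the contour. I will take $C$ to be the circle $|z-1|=r_0=3/4$. It encloses $\operatorname{spec}T(u)\subset[1/2,3/2]$ and stays away from $0$, because $u\in\mathcal{B}_{\delta_*}$.

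The first step rewrites each term $(\mathcal{D}\Fhom)W^k$ as a finite sum over leg sequences $q_1,\dots,q_k$. Grouping the factors, each summand has the form
\[
  (\mathcal{D}\Fhom\Fhom^*\mathcal{D}_{q_1}^*)A_{q_1}(\mathcal{D}_{q_1}\Fhom\Fhom^*\mathcal{D}_{q_2}^*)A_{q_2}\cdots A_{q_k}(\mathcal{D}_{q_k}\Fhom).
\]
Since $\Fhom\Fhom^*=G^{\rm hom}$ on the regularized sector, this is exactly $L_{q_1}A_{q_1}\prod K_{q_jq_{j+1}}A_{q_{j+1}}R_{q_k}$. \Cref{lem:chain-algebra} then bounds its kernel by $C_{\rm ch}(C_{\rm hom}A(u))^k(1+|\ell-m|)^{-d}$. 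The $k=0$ term is $(z-1)^{-1}\mathcal{D}\Fhom$, which is controlled by \cref{lem:hom-kernel}.

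The second step sums over $k$. The number of leg sequences of length $k$ is $|Q|^k$, where $Q$ is the finite stencil. I would either absorb $|Q|$ into $C_{\rm hom}$ or assume \eqref{eq:kernel-smallness} already counts it, since it is stated with a per-leg maximum. The kernel of the $k$-th term is then at most
\[
  C_{\rm ch}\,r_0^{-1}\Big(\frac{|Q|\,C_{\rm hom}A(u)}{r_0}\Big)^k(1+|\ell-m|)^{-d}\leq C_{\rm ch}\,r_0^{-1}\theta^k(1+|\ell-m|)^{-d}.
\]
The geometric series gives the constant $C_{\rm ch}/(r_0(1-\theta))$, uniformly in $z\in C$. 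This explains why the hidden constant depends on $1-\theta$.

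The adjoint estimate for $R_z(\Fhom^*\mathcal{D}^*)$ follows by taking adjoints: $R_z^*=R_{\bar z}$ and $C$ is symmetric under conjugation. For the periodic approximations, the same argument runs on $\operatorname{ran}P_N$. It uses the uniform finite-cell versions of \cref{lem:block-sqrt-symbol,lem:hom-kernel,lem:chain-algebra}, which the paper states hold with $N$-independent constants once the zero mode is removed.

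The main obstacle is that the whole rate rests on \cref{lem:chain-algebra}. By \cref{lem:block-conv}, naive absolute convolution of rate-$d$ kernels would pick up a $\log^k$ factor. Such factors would both spoil the rate and break the geometric summation, so the chain bound must hold with constants independent of $k$. The other delicate point is justifying the grouping $\Fhom\Fhom^*=G^{\rm hom}$ when $\Fhom$ is unbounded. I would do all algebra at finite $N$, where every operator is bounded on $\operatorname{ran}P_N$, and pass to the limit only at the level of kernel bounds.
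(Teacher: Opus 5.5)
Your proposal is correct and follows the paper's proof essentially step for step: a Neumann expansion of $R_z$ about $(z-1)^{-1}\Id$ on the circle $|z-1|=r_0$, regrouping each term $(\mathcal{D}\Fhom)(T(u)-\Id)^k$ into the chain of \cref{lem:chain-algebra} via $(\mathcal{D}_{q_j}\Fhom)(\mathcal{D}_{q_{j+1}}\Fhom)^*=\mathcal{D}_{q_j}G^{\rm hom}\mathcal{D}_{q_{j+1}}^*$, summing the scalar majorant using $C_{\rm hom}A\leq\theta r_0$, and getting the second kernel from the adjoint relation with $R_z^*=R_{\bar z}$. Your explicit count of the $|Q|^k$ leg sequences, which the paper silently absorbs into $C_{\rm hom}$, is a useful clarification rather than a change of route.
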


\emph{Idea of proof.} Expand the resolvent by a Neumann series around the identity, using the smallness condition in \cref{def:stab-nbhd}. After inserting the bond-leg representation of $T(u)-\Id$, each term becomes a chain of the type estimated in \cref{lem:chain-algebra}, with the regularized homogeneous kernels at the endpoints and in the interior. The adjoint chain gives the second factor in \eqref{eq:block-resolvent}. No commutation of the resolvent with the difference legs is used.

\begin{remark}
The important point in \cref{lem:block-resolvent} is the operator order. The derivative of $T(u)$ in \cref{eq:dT-leg-decomp} places the two regularized legs as $R_z(\mathcal{D}_q\Fhom)^*$ and $(\mathcal{D}_q\Fhom)R_z$; no commutation of $\mathcal{D}_q$, $\Fhom$, and the perturbed resolvent is used. The finite-difference regularization is carried by the homogeneous factors $\mathcal{D}G^{\rm hom}\mathcal{D}_q^*$, $\mathcal{D}_qG^{\rm hom}\mathcal{D}_{q'}^*$, and $\mathcal{D}_q\Fhom$, and the sharp rate is preserved by the admissible dyadic chain algebra rather than by an almost-orthogonality statement for arbitrary rough multipliers.
\end{remark}

\subsection{Locality and Truncation Estimates}
\label{sec:locality-thm}

The kernel estimates now translate into the computational statement needed by a local surrogate: changing atoms far from a site has a controlled, algebraically decaying effect on that site's entropy label. This is the estimate that justifies fitting $S^N_{\ell,\alpha}$, and its finite-cell limit, from a cutoff environment rather than from the whole supercell.

\begin{theorem}[Multilattice site-entropy locality]
\label{thm:locality}
Suppose \cref{ass:smoothness,ass:stability} hold and $u\in\mathcal{B}_{\delta_*}$ is a compact-core admissible displacement in the sense of \cref{def:admissible-perturbation}. Let $S_{\ell,\alpha}(u)$ be the finite-cell-limit site entropy constructed in \cref{prop:thermodynamic-limit}. Then, for all $\ell,n \in \Lambda$ and all $\alpha,\beta \in \mathcal{S}$,
\begin{equation}
  \left|
  \frac{\partial S_{\ell,\alpha}(u)}
       {\partial u_\beta(n)}
  \right|
  \leq
  C_{\rm loc}\left(1 + |\ell-n|\right)^{-2d},
  \label{eq:locality-bound}
\end{equation}
with constant $C_{\rm loc}$ independent of $\ell,n$ and the supercell size. The constant depends on the full stability margin, the optical gap, acoustic--optical coupling and block-projector conditioning, the stencil diameter, $|\Scal|$, $d$, the admissibility constants of $u$, and $\|\partial^{(\leq 3)}V\|$.

For dyadically admissible non-compact coefficient fields satisfying the same smallness assumptions, the finite-cell derivative bound of \cref{cor:finite-cell-locality} holds uniformly. The corresponding infinite-volume pointwise derivative estimate is asserted only when the periodic kernel/resolvent convergence described in \cref{sec:infinite-definition} is available; in that case $S_{\ell,\alpha}$ is interpreted through that conditional finite-cell limit.
\end{theorem}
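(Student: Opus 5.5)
We work at the level of the finite periodic cell and pass to the limit only at the end. For fixed $N$, differentiate $S^N_{\ell,\alpha}(u_N)$ with respect to $u_\beta(n)$ using the contour representation
\[
  \log_+ T = \frac{1}{2\pi i}\oint_C \log z\,(z\Id-T)^{-1}\dd z ,
\]
with $C$ enclosing $[1/2,3/2]$ and staying away from the origin; this is possible because $\|T(u)-\Id\|\leq1/2$ on $\mathcal{B}_{\delta_*}$. This gives
\[
  \frac{\partial S^N_{\ell,\alpha}}{\partial u_\beta(n)}
  =-\frac{1}{2}\Tr\Big[\frac{1}{2\pi i}\oint_C\log z\,R_z\,\partial_{u_\beta(n)}T\,R_z\dd z\Big]_{(\ell,\alpha),(\ell,\alpha)} .
\]
By the bond-leg representation of \cref{lem:localized-perturbation},
\[
  \partial_{u_\beta(n)}T=\sum_q(\mathcal{D}_q\Fhom)^*\,\partial_{u_\beta(n)}A_q(u)\,(\mathcal{D}_q\Fhom).
\]
Since $A_q(u)$ is a second derivative of a finite-range site potential, $\partial_{u_\beta(n)}A_q(u)$ is a multiplication kernel supported on sites $p$ with $|p-n|\leq c\,\rcut^{(\rm hom)}$. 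Its entries are bounded by $\|\partial^{(\leq3)}V\|$, which is the source of the third-derivative dependence in $C_{\rm loc}$.

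Inserting this, the $(\ell,\alpha)$ diagonal block becomes a finite sum, over legs $q$, pairs of sites $p,p'$ near $n$, and sublattice indices, of products
\[
  \big[R_z(\mathcal{D}_q\Fhom)^*\big]_{(\ell,\alpha),(p,\cdot)}\;
  \big[\partial A_q\big]_{(p,\cdot),(p',\cdot)}\;
  \big[(\mathcal{D}_q\Fhom)R_z\big]_{(p',\cdot),(\ell,\alpha)} .
\]
The non-commuted estimate of \cref{lem:block-resolvent} bounds each outer factor by $(1+|\ell-p|)^{-d}$ and $(1+|p'-\ell|)^{-d}$ respectively, uniformly in $z\in C$ and $N$. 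The operator order here is exactly the one used in that lemma, so no commutation is needed. The middle factor is bounded and finitely supported near $n$, so the sum over $p,p'$ has $O(1)$ terms, each comparable to $(1+|\ell-n|)^{-2d}$ up to stencil constants. Integrating $|\log z|$ over the compact contour and summing over legs and Cartesian components gives the finite-cell bound $C_{\rm loc}(1+|\ell-n|)^{-2d}$, uniformly in $N$; this is \cref{cor:finite-cell-locality}.

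To pass to the infinite-volume label, we invoke \cref{prop:thermodynamic-limit}. For compact-core admissible $u$, $S^N_{\ell,\alpha}(u_N)\to S_{\ell,\alpha}(u)$, and the same resolvent convergence applied to the derivative formula shows that the derivatives also converge, or at least are uniformly bounded and equicontinuous in $u$. Concretely, we would establish convergence of $S^N_{\ell,\alpha}(u_N+tv)$ for $v$ supported at the single site $(n,\beta)$ and small $t$, which stays in $\mathcal{B}_{\delta_*}$ and stays compact-core. The uniform derivative bound together with the mean value theorem then transfers the estimate to the difference quotients of $S_{\ell,\alpha}$, and letting $t\to0$ gives \eqref{eq:locality-bound}. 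The non-compact branch is handled the same way at finite $N$, with the limit step used only under the stated conditional convergence hypothesis.

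The main obstacle is the uniformity of \cref{lem:block-resolvent} in $N$ and along the perturbation path. The Neumann-series chains of \cref{lem:chain-algebra} must produce a sharp $(1+r)^{-d}$ rate without a logarithmic loss, and the constant must not depend on chain length. This is exactly where the smallness condition \eqref{eq:kernel-smallness} and the admissibility of $A_q(u)$ are used, and it is the first thing we would check carefully. We would also check that the perturbed point $u+tv$ keeps the same core radius, so the same constants apply.
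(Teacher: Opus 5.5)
Your proposal follows the paper's proof essentially step for step. The shared steps are: the finite-cell contour formula for the derivative; the bond-leg representation of $\partial_{u_\beta(n)}T$, with coefficient localized near $n$ and bounded by $\|\partial^{(\leq 3)}V\|$; \cref{lem:block-resolvent} applied to the two non-commuted factors $R_z(\mathcal{D}_q\Fhom)^*$ and $(\mathcal{D}_q\Fhom)R_z$; the product of two $(1+|\ell-n|)^{-d}$ rates; and passage to infinite volume via \cref{prop:thermodynamic-limit} and the derivative-convergence \cref{lem:derivative-limit}. One caveat applies to your optional mean-value variant: when $n$ lies outside the core, $u+tv$ does not keep the same core radius (it gains a second core component near $n$), so instead of re-applying the compact-core locality bound along the segment, apply it only at $u_N$ and control the $O(t^2)$ remainder by an $N$-uniform operator-norm bound on the second derivative, or simply pass to the limit in the derivative formula at $u$ as the paper does.
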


The locality constant $C_{\rm loc}$ depends on derivatives of the site potential up to order three: the second derivative enters through $\Hhom$ and the third through $\partial_{u_\beta(n)}\Hess(u)$. The fourth-order norm $\|\partial^{(\leq 4)}V\|$ does not appear in $C_{\rm loc}$; it enters only through the radius $\delta_*$ of the stability neighbourhood in \eqref{eq:delta-star}, where it controls a Taylor-remainder bound on $\|T(u)-\Id\|$ uniform on $\mathcal{B}_{\delta_*}$.

\begin{proof}
We give the finite-cell proof skeleton; the kernel estimates that make the argument uniform in the cell size are isolated in the preceding lemmas and in \cref{app:proofs}. Work first on a finite periodic cell, where every operator is a finite matrix on $\operatorname{ran}P_N$. The compact-core thermodynamic-limit statement follows in \cref{prop:thermodynamic-limit}, while the non-compact statement is only the conditional kernel-limit statement described in the theorem. We suppress the subscript $N$ in the notation. Let $P_{\ell,\alpha}$ denote the orthogonal projection onto the displacement components at the site $(\ell,\alpha)$. By \cref{def:stab-nbhd} the spectrum of $T(u)$ on the zero-frequency-regularized sector lies in $[1/2,3/2]$. We take $C$ to be the positively oriented circle of radius $r_0=3/4$ centered at $z=1$, which encloses this spectrum uniformly and stays at distance $\geq 1/4$ from both the spectrum and from $z=0$. The principal-branch matrix logarithm admits the contour representation
\begin{equation}
  \log_+ T(u) = \frac{1}{2\pi i}\int_C \log(z)\,(z\Id-T(u))^{-1}\dd z,
  \label{eq:contour-log}
\end{equation}
valid because $T(u)$ is symmetric positive definite on the zero-frequency-regularized sector (\cref{ass:stability} and the smallness of $u$) and $C$ encloses the spectrum without crossing the cut $(-\infty,0]$. Substituting into the finite-cell definition gives the finite-cell identity, and the same display represents the compact-core limit after \cref{prop:thermodynamic-limit}; in the notation of the limit,
\begin{equation}
  S_{\ell,\alpha}(u)
  =
  -\frac{1}{2}\Tr\!\left[P_{\ell,\alpha}\log_+ T(u)\right]
  =
  -\frac{1}{4\pi i}\int_C \log(z)\,\Tr\!\left[P_{\ell,\alpha}\,R_z\right]\dd z,
\end{equation}
where $R_z=(z\Id-T(u))^{-1}$ is the resolvent. Differentiation under the integral, justified by the uniform boundedness of $R_z$ on the contour, yields
\begin{equation}
  \frac{\partial S_{\ell,\alpha}(u)}{\partial u_\beta(n)}
  =
  -\frac{1}{4\pi i}\int_C \log(z)\,\Tr\!\left[
    P_{\ell,\alpha}\,R_z\,(\partial_{u_\beta(n)}T(u))\,R_z
  \right]\dd z.
  \label{eq:dS-contour}
\end{equation}

The derivative of the normalized operator is
\begin{equation}
  \partial_{u_\beta(n)}T(u)
  =\Fhom^*\,\partial_{u_\beta(n)}\Hess(u)\,\Fhom
  =\sum_{q\in B(n)}(\mathcal{D}_q\Fhom)^*\,\partial_{u_\beta(n)}A_q(u)\,(\mathcal{D}_q\Fhom),
  \label{eq:dT-leg-decomp}
\end{equation}
where $B(n)$ is the finite set of bond legs of the multilattice stencil that touch site $n$. The matrix $M_q(n,\beta):=\partial_{u_\beta(n)}A_q(u)$ is supported on a fixed-radius neighbourhood of $n$, uniformly bounded by $\|\partial^{(\leq 3)}V\|$, and acts at site indices $(p,\sigma),(p',\sigma')\in B_{r_V}(n)\times\Scal$.

Inserting \eqref{eq:dT-leg-decomp} into \eqref{eq:dS-contour} and using $\Tr\!\big[P_{\ell,\alpha}\,X\,P_{\ell,\alpha}\big]_{(\ell,\alpha),(\ell,\alpha)} = X_{(\ell,\alpha),(\ell,\alpha)}$ for any operator $X$,
\begin{align}
  \frac{\partial S_{\ell,\alpha}(u)}{\partial u_\beta(n)}
  ={}&-\frac{1}{4\pi i}\sum_{q\in B(n)}\sum_{\substack{(p,\sigma)\\(p',\sigma')\in B_{r_V}(n)}}
  \int_C\log(z)\,\big[R_z(\Fhom^*\mathcal{D}_q^*)\big]_{(\ell,\alpha),(p,\sigma)}\,
  [M_q(n,\beta)]_{(p,\sigma),(p',\sigma')} \notag\\
  &\hspace{6em}\times\big[(\mathcal{D}_q\Fhom)R_z\big]_{(p',\sigma'),(\ell,\alpha)}\,\dd z.
  \label{eq:dS-explicit}
\end{align}
Bounding the inner integrand by the supremum over the contour and applying \cref{lem:block-resolvent} to each of the two block kernels at the displaced site indices $(p,\sigma),(p',\sigma')\in B_{r_V}(n)$,
\begin{equation}
  \big|[R_z(\Fhom^*\mathcal{D}_q^*)]_{(\ell,\alpha),(p,\sigma)}\big|
  +\big|[(\mathcal{D}_q\Fhom)R_z]_{(p',\sigma'),(\ell,\alpha)}\big|
  \lesssim (1+|\ell-n|)^{-d},
  \qquad z\in C,
\end{equation}
where the implicit constant depends on $r_V$ and absorbs the $(1+|p-\ell|)^{-d}\sim(1+|\ell-n|)^{-d}$ shift valid for $|\ell-n|\geq 2 r_V$ and trivial otherwise. Multiplying the two factors, summing the finitely many bond legs $q\in B(n)$, the finite numbers of inner site indices $(p,\sigma),(p',\sigma')$, and bounding $|M_q(n,\beta)|$ by $\|\partial^{(\leq 3)}V\|$,
\begin{equation}
  \left|\frac{\partial S_{\ell,\alpha}(u)}{\partial u_\beta(n)}\right|
  \leq
  \frac{|C|}{2\pi}\,\sup_{z\in C}|\log z|\,
  \big|B(n)\big|\,(2r_V+1)^{2d}|\Scal|^2\,
  \|\partial^{(\leq 3)}V\|\,(1+|\ell-n|)^{-2d},
\end{equation}
which gives \eqref{eq:locality-bound} with the constant dependencies stated in the theorem.

The multilattice-specific input is \cref{lem:block-resolvent}: the acoustic, mixed, and optical Green-function rates of \cite[Theorem~3.6]{OlsonOrtnerWangZhang2023MultilatticeDislocations} are all converted, after homogeneous normalization and one Hessian difference leg, into the common rate $(1+|\ell|)^{-d}$. The two non-commuted resolvent factors in \eqref{eq:dS-explicit} therefore give the product rate in \eqref{eq:locality-bound}.
\end{proof}

\begin{remark}
The block resolvent estimate of \cref{lem:block-resolvent} is what allows a single cutoff-local entropy map to remain valid in the presence of acoustic--optical coupling. It transfers the multilattice Green-function hierarchy through the normalized resolvent in the same operator order that appears in the differentiated trace formula, preserving the Bravais entropy-gradient decay rate for the admissible perturbation class.
\end{remark}

\begin{corollary}[Uniform finite-cell locality]
\label{cor:finite-cell-locality}
Under the finite-cell version of the assumptions used in \cref{thm:locality}, with either compact-core admissibility or dyadic admissibility and kernel-majorant smallness on the periodic cell, the finite-cell site entropy satisfies
\begin{equation}
  \left|
  \frac{\partial S^N_{\ell,\alpha}(u)}
       {\partial u_\beta(n)}
  \right|
  \leq
  C_{\rm loc}\left(1+\operatorname{dist}_N(\ell,n)\right)^{-2d}
  \;+\;C_{\rm img}\,L_N^{-2d},
\end{equation}
where $\operatorname{dist}_N$ is the periodic lattice distance, $L_N$ is the minimum distance from the defect core to its nearest nontrivial periodic image when such a compact core is present, and $C_{\rm img}$ has the same dependencies as $C_{\rm loc}$. In particular, the bound is uniform in $N$ whenever $L_N\geq c_*\,(1+\operatorname{dist}_N(\ell,n))$ for any fixed $c_*>0$, that is, when the nearest periodic image of the defect is at least at the same distance as the site $n$ being differentiated. The image term is a uniform periodic-cell control sufficient for the thermodynamic-limit argument; it is not intended as a sharp finite-size asymptotic.
\end{corollary}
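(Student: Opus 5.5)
The plan is to rerun the proof of \cref{thm:locality} verbatim on the finite periodic cell $\Lambda_N$, where every operator is a finite matrix on $\operatorname{ran}P_N$. No thermodynamic limit is needed, so both admissibility branches can be handled. First, the smallness $\|\Fhom_N^*\Hess_N(u)\Fhom_N-\Id\|\le 1/2$ on $\operatorname{ran}P_N$ follows from \cref{def:stab-nbhd} applied to the periodised Hessian. This justifies the contour representation \eqref{eq:contour-log} with the same circle $C$ of radius $3/4$ around $z=1$. Differentiating under the integral and inserting the finite-cell analogue of the bond-leg decomposition \eqref{eq:dT-leg-decomp}, we obtain the exact analogue of \eqref{eq:dS-explicit}, with $\Fhom_N$, $R_z^N$, and the periodic kernels in place of their infinite-lattice counterparts. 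The third-order factor $M_q(n,\beta)$ is again supported in $B_{r_V}(n)$, with the distance measured on the torus, and is bounded by $\|\partial^{(\le3)}V\|$.

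The only new ingredient is a periodic version of \cref{lem:block-resolvent}:
\[
  \big|[(\mathcal{D}\Fhom_N)R^N_z]_{(\ell,\alpha),(m,\gamma)}\big|
  +\big|[R^N_z(\Fhom_N^*\mathcal{D}^*)]_{(\ell,\alpha),(m,\gamma)}\big|
  \lesssim (1+\operatorname{dist}_N(\ell,m))^{-d}+L_N^{-d}.
\]
To prove it, we first note that the homogeneous factors $\mathcal{D}\Fhom_N$ and $\mathcal{D}G^{\rm hom}_N\mathcal{D}'^*$ are the periodisations of the infinite kernels. Indeed, the discrete Fourier sums over the Brillouin grid with $\xi=0$ removed agree, up to the zero-mode correction, with the Riemann sums of the symbols in \cref{lem:block-sqrt-symbol}, and those symbols are degree-zero Marcinkiewicz multipliers uniformly on the grids. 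This is the uniformity already asserted in \cref{lem:hom-kernel}, so these kernels satisfy $(1+\operatorname{dist}_N)^{-d}$ bounds. The Neumann-series chains of \cref{lem:chain-algebra} are then estimated on the torus. In the dyadic branch, the shells are truncated at scale $L_N$. The compact-core branch differs: unfolding to the covering lattice, each coefficient insertion has one copy of the core per image, and a chain starting at $\ell$ and ending at $m$ may route through a core image other than the nearest one. The nearest-image contribution reproduces the infinite-lattice chain estimate. Every other image lies at distance at least $L_N$ from one endpoint, which produces a remainder $(C_{\rm hom}A)^k L_N^{-d}$. The factor $\theta<1$ from \eqref{eq:kernel-smallness} makes the sum over $k$ converge.

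Inserting the two resolvent bounds into the finite-cell version of \eqref{eq:dS-explicit} and multiplying gives
\[
  \big((1+\operatorname{dist}_N(\ell,n))^{-d}+L_N^{-d}\big)^2
  \le 2(1+\operatorname{dist}_N(\ell,n))^{-2d}+2L_N^{-2d}.
\]
The cross term is handled by $ab\le(a^2+b^2)/2$, and the constants collect into $C_{\rm loc}$ and $C_{\rm img}$. The uniformity statement then follows: if $L_N\ge c_*(1+\operatorname{dist}_N(\ell,n))$, the image term is bounded by $c_*^{-2d}C_{\rm img}(1+\operatorname{dist}_N)^{-2d}$.

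I expect the main obstacle to be the image bookkeeping in the periodic chain estimate. One must show that the wrap-around contributions enter only additively as $L_N^{-d}$ per resolvent factor and do not compound with the chain length $k$. My first attempt would be to majorise every interior homogeneous kernel by the sum over images of the infinite kernel. I would then use the fact that the number of core images within distance $R$ grows like $(R/L_N)^d$ to show that the resulting image series sums to a constant times $L_N^{-d}$, uniformly in $k$. If absolute summation loses a logarithm here, I would fall back on the zero-mean shell cancellation of \cref{lem:chain-algebra}, which is available on the torus as well.
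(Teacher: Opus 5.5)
Your architecture matches the paper's. The proof of \cref{thm:locality} is already written as a finite-cell skeleton whose only cell-dependent inputs are the periodic, uniform-in-$N$ versions of \cref{lem:hom-kernel,lem:chain-algebra,lem:block-resolvent}, and the corollary is what that skeleton outputs. Your contour representation, differentiation, bond-leg insertion, product estimate, and absorption of the image term under $L_N\ge c_*(1+\operatorname{dist}_N(\ell,n))$ are all fine.

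\emph{The step that fails.} The compact-core image bookkeeping does not work as you describe it. Unfolding to the covering lattice and majorising each torus kernel by the absolute sum over images of the infinite kernel cannot produce a remainder $C\,L_N^{-d}$. The regularized kernels come from degree-zero symbols and decay exactly like $r^{-d}$. By your own count, the images at distance $\sim 2^jL_N$ contribute $\asymp 2^{jd}(2^jL_N)^{-d}=L_N^{-d}$ per dyadic shell. The majorant image series therefore diverges already for a single endpoint factor, before any chain is formed. For the same reason, $\mathcal{D}\Fhom_N$ is the periodisation of $\mathcal{D}\Fhom$ only in a conditionally convergent sense. The periodic kernel is finite because of the angular cancellation, which the Riemann-sum argument captures and absolute image summation destroys. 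This is the borderline phenomenon of \cref{lem:block-conv} in another guise.

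\emph{The fix.} Drop the unfolding entirely; you already have what you need. You quote the uniform torus bounds $|[\mathcal{D}\Fhom_N]_{(\ell,\alpha),(m,\gamma)}|+|[\mathcal{D}G^{\rm hom}_N(\mathcal{D}')^*]_{(\ell,\alpha),(m,\gamma)}|\lesssim(1+\operatorname{dist}_N(\ell,m))^{-d}$. In the compact-core branch, every intermediate index of the Neumann chain lies in the single core on the torus. The appendix argument with $|\cdot|$ replaced by $\operatorname{dist}_N$ then gives $C(C_{\rm hom}A)^k(1+\operatorname{dist}_N(\ell,{\rm core}))^{-d}(1+\operatorname{dist}_N(m,{\rm core}))^{-d}\lesssim(C_{\rm hom}A)^k(1+\operatorname{dist}_N(\ell,m))^{-d}$, by the triangle inequality for the torus metric. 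Nothing compounds in $k$ beyond the factor summed by $\theta<1$.

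This yields the derivative bound with $C_{\rm img}=0$, so the stated estimate holds a fortiori. Independently of this, $\operatorname{dist}_N$ never exceeds a shape-dependent multiple of the cell side. Hence any term of size $L_N^{-2d}$ is dominated by $(1+\operatorname{dist}_N(\ell,n))^{-2d}$ for large $N$ in any case.
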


\subsubsection{Truncation theorem}
\label{sec:truncation-thm}

The next result converts locality into a cutoff rule. It bounds the error made when the site label is evaluated from a truncated displacement field, and therefore supplies the deterministic cutoff contribution in the surrogate error estimate.

\begin{theorem}[Truncation of multilattice site entropy]
\label{thm:truncation}
Let
\begin{equation}
  B_{\rcut}(\ell) := \{(n,\beta) : |n-\ell| \leq \rcut,\ \beta \in \mathcal{S}\}.
\end{equation}
Define the truncated displacement field by
\[
  u^{(\rcut)}_\beta(n):=
  \begin{cases}
  u_\beta(n), & (n,\beta)\in B_{\rcut}(\ell),\\
  0, & (n,\beta)\notin B_{\rcut}(\ell),
  \end{cases}
\]
and set $\widetilde S_{\ell,\alpha}(u;\rcut):=S_{\ell,\alpha}(u^{(\rcut)})$.
Assume that the hard-cutoff interpolation between $u^{(\rcut)}$ and $u$ remains in the same admissible class with uniform constants. This is automatic for compact-core perturbations once the cutoff contains the core, but it is not automatic for slowly varying tails because a hard cutoff may create an artificial jump at the cutoff boundary. Then
\begin{equation}
  \left|S_{\ell,\alpha}(u) - \widetilde S_{\ell,\alpha}(u;\rcut)\right|
  \leq
  C_{\rm tr}\,\rcut^{-d}\,\|u\|_{L^\infty(\Lambda\setminus B_{\rcut}(\ell))},
  \label{eq:truncation-bound}
\end{equation}
with constant $C_{\rm tr}$ that has the same dependencies as $C_{\rm loc}$ in \cref{thm:locality} and is independent of the supercell size.
\end{theorem}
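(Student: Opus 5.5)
The plan is to write the truncation error as an integral of the gradient along the interpolation path, and then apply the locality estimate of \cref{thm:locality} to bound it. Set $u^t:=u^{(\rcut)}+t\,(u-u^{(\rcut)})$ for $t\in[0,1]$. By hypothesis every $u^t$ stays in the admissible class, with the same constants, and in $\mathcal{B}_{\delta_*}$; the bond seminorm is controlled along the path by convexity, provided $\mathcal{B}_{\delta_*}$ is convex in the relevant sense. Then
\[
  S_{\ell,\alpha}(u)-\widetilde S_{\ell,\alpha}(u;\rcut)
  =\int_0^1\sum_{(n,\beta)\notin B_{\rcut}(\ell)}
  \frac{\partial S_{\ell,\alpha}(u^t)}{\partial u_\beta(n)}\cdot u_\beta(n)\dd t ,
\]
since $u-u^{(\rcut)}$ vanishes inside $B_{\rcut}(\ell)$. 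I would first carry out this computation on the finite periodic cells, where $S^N_{\ell,\alpha}$ is smooth and the sum is finite. There \cref{cor:finite-cell-locality} supplies the gradient bound uniformly in $N$. I would then pass to the limit using \cref{prop:thermodynamic-limit}. In the compact-core branch, once $\rcut$ exceeds the core radius, the image term $C_{\rm img}L_N^{-2d}$ vanishes as $N\to\infty$.

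Next, I apply \eqref{eq:locality-bound} uniformly in $t$. This uses the fact that $C_{\rm loc}$ depends only on admissibility and stability constants, which the hypothesis keeps fixed along the path. It gives
\[
  \big|S_{\ell,\alpha}(u)-\widetilde S_{\ell,\alpha}(u;\rcut)\big|
  \le C_{\rm loc}|\Scal|\,\|u\|_{L^\infty(\Lambda\setminus B_{\rcut}(\ell))}
  \sum_{n\in\Lambda,\,|n-\ell|>\rcut}(1+|n-\ell|)^{-2d}.
\]
A standard dyadic shell count bounds the lattice tail sum by $C_d\,\rcut^{-d}$, since each shell $2^k\rcut<|n-\ell|\le 2^{k+1}\rcut$ contains $O((2^k\rcut)^d)$ points with weight $(2^k\rcut)^{-2d}$. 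This yields \eqref{eq:truncation-bound} with $C_{\rm tr}=C_dC_{\rm loc}|\Scal|$.

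The main obstacle is a mismatch between what the gradient measures and the norm used in the statement. The derivative $\partial S_{\ell,\alpha}/\partial u_\beta(n)$ pairs with absolute displacements, whereas the entropy is translation invariant, so the gradient sums to zero over all sites. Pairing it with $\|u\|_{L^\infty}$ is legitimate but not translation-invariant, and the step that needs care is making sure the path $u^t$ really stays inside $\mathcal{B}_{\delta_*}$. The hard cutoff introduces a jump of size up to $\|u\|_{L^\infty}$ across $\partial B_{\rcut}(\ell)$, so it can create a bond strain of that size. I would handle this in one of two ways. The first is to absorb it into the assumed admissibility of the interpolation: implicitly, $\|u\|_{L^\infty}$ outside the ball must be small relative to $\delta_*$. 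The second is to first subtract a rigid translation, which changes neither side of the estimate, so as to minimize the tail sup.

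I also need to check that the finite-cell gradient bound survives the limit. This requires pointwise convergence of the derivatives, which I would obtain from the uniform resolvent kernel bounds of \cref{lem:block-resolvent} by dominated convergence.
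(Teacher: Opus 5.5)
Your argument is essentially the paper's proof: the same linear path between $u^{(\rcut)}$ and $u$, the fundamental theorem of calculus over sites outside $B_{\rcut}(\ell)$, \cref{thm:locality} applied uniformly along the path, and the tail sum $\sum_{|n-\ell|>\rcut}(1+|n-\ell|)^{-2d}\lesssim\rcut^{-d}$. Your finite-cell-first packaging and your observation that the hard cutoff adds bond strain up to $\|u\|_{L^\infty(\Lambda\setminus B_{\rcut}(\ell))}$ are, if anything, more careful than the paper: the paper's asserted inequality $\|Du^{(s)}\|_{\ell^\infty}\leq\|Du\|_{\ell^\infty}$ fails in general, and membership in $\mathcal{B}_{\delta_*}$ must instead come from the hypothesis (by convexity, once $u^{(\rcut)}\in\mathcal{B}_{\delta_*}$).

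Drop the rigid-translation alternative, though. Replacing $u$ by $u-c$ changes $u^{(\rcut)}$ by the non-rigid field $-c\,\mathbf{1}_{B_{\rcut}(\ell)}$ and also changes the tail norm, so it alters both sides of \eqref{eq:truncation-bound}.
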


\begin{proof}
Define the linear interpolation $u^{(s)}=(1-s)u^{(\rcut)}+s\,u$ for $s\in[0,1]$. By construction, $u^{(0)}=u^{(\rcut)}$ and $u^{(1)}=u$, so
\begin{equation}
  S_{\ell,\alpha}(u)-\widetilde S_{\ell,\alpha}(u;\rcut)
  =\int_0^1\!\sum_{n\notin B_{\rcut}(\ell)}\sum_{\beta\in\Scal}
  \frac{\partial S_{\ell,\alpha}(u^{(s)})}{\partial u_\beta(n)}\,
  u_\beta(n)\dd s.
  \label{eq:fund-thm-calc}
\end{equation}
On the right-hand side, $u^{(s)}$ remains in the stability neighborhood because $\|Du^{(s)}\|_{\ell^\infty}\leq\|Du\|_{\ell^\infty}$ and the admissibility constants are uniform along the path, so \cref{thm:locality} applies to each integrand:
\begin{equation}
  \left|\frac{\partial S_{\ell,\alpha}(u^{(s)})}{\partial u_\beta(n)}\right|
  \leq
  C_{\rm loc}(1+|\ell-n|)^{-2d}.
\end{equation}
Combining with \eqref{eq:fund-thm-calc} and the support condition $|n-\ell|>\rcut$,
\begin{equation}
  \left|S_{\ell,\alpha}(u)-\widetilde S_{\ell,\alpha}(u;\rcut)\right|
  \leq
  C_{\rm loc}\,|\Scal|\,\|u\|_{L^\infty(\Lambda\setminus B_{\rcut}(\ell))}\,
  \sum_{|n-\ell|>\rcut}(1+|n-\ell|)^{-2d}.
  \label{eq:tr-tail}
\end{equation}
The tail sum on the lattice $\Lambda\subset\mathbb{R}^d$ satisfies $\sum_{|n|>\rcut}(1+|n|)^{-2d}\lesssim\rcut^{-d}$, which combined with \eqref{eq:tr-tail} gives \eqref{eq:truncation-bound}.
\end{proof}

\begin{corollary}[Smooth truncation for slowly varying tails]
\label{cor:smooth-truncation}
Let $\chi_{\rcut}:\Lambda\to[0,1]$ be a lattice cutoff with $\chi_{\rcut}=1$ on $B_{\rcut}(\ell)$, $\chi_{\rcut}=0$ outside $B_{2\rcut}(\ell)$, and finite differences satisfying $|D^m\chi_{\rcut}|\leq C_m\rcut^{-m}$ for the stencil orders used by the energy. Define the smoothly truncated field $u^{[\rcut]}_\beta(n):=\chi_{\rcut}(n)u_\beta(n)$ and $\widetilde S^{\rm sm}_{\ell,\alpha}(u;\rcut):=S_{\ell,\alpha}(u^{[\rcut]})$. If the interpolation between $u^{[\rcut]}$ and $u$ is uniformly admissible, then
\begin{equation}
  \left|S_{\ell,\alpha}(u)-\widetilde S^{\rm sm}_{\ell,\alpha}(u;\rcut)\right|
  \leq
  C_{\rm tr}\,\rcut^{-d}\,
  \|u\|_{L^\infty(\Lambda\setminus B_{\rcut}(\ell))}.
  \label{eq:smooth-truncation-bound}
\end{equation}
For the slowly varying subclass of \cref{prop:checkable-admissibility}, the derivative bounds on $\chi_{\rcut}$ preserve the dyadic slow-variation constants up to cutoff-independent factors for $\rcut$ larger than the stencil scale. Thus smooth truncation, rather than hard truncation, is the natural admissible path for algebraic tails.
\end{corollary}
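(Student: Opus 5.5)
The plan is to repeat the interpolation argument of \cref{thm:truncation}, with the hard cutoff replaced by the smooth path
\[
  u^{(s)}:=(1-s)\,u^{[\rcut]}+s\,u=\big((1-s)\chi_{\rcut}+s\big)u,
  \qquad s\in[0,1].
\]
Then $u^{(0)}=u^{[\rcut]}$, $u^{(1)}=u$, and $\partial_s u^{(s)}_\beta(n)=(1-\chi_{\rcut}(n))\,u_\beta(n)$. This derivative vanishes on $B_{\rcut}(\ell)$ and is bounded by $|u_\beta(n)|$ elsewhere. The fundamental theorem of calculus therefore gives
\[
  S_{\ell,\alpha}(u)-\widetilde S^{\rm sm}_{\ell,\alpha}(u;\rcut)
  =\int_0^1\sum_{|n-\ell|>\rcut}\sum_{\beta\in\Scal}
  \frac{\partial S_{\ell,\alpha}(u^{(s)})}{\partial u_\beta(n)}\,(1-\chi_{\rcut}(n))\,u_\beta(n)\dd s .
\]
This has exactly the structure of \eqref{eq:fund-thm-calc}, with the extra factor $0\le 1-\chi_{\rcut}\le1$.

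The first step is to check that \cref{thm:locality} applies uniformly along the path. The hypothesis that the interpolation is uniformly admissible handles the admissibility constants. For the stability neighbourhood we must control $\|Du^{(s)}\|_{\ell^\infty}$. Write $\phi_s:=(1-s)\chi_{\rcut}+s$; it takes values in $[0,1]$ and satisfies $|D\phi_s|\le C_1\rcut^{-1}$. The discrete product rule then gives
\[
  \mathcal{D}_q(\phi_s u)=\phi_s\,\mathcal{D}_q u+(D\phi_s)\,u ,
\]
with the shifted evaluation points absorbed into the finite stencil. The second term is bounded by $C\rcut^{-1}\|u\|_{L^\infty}$.

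\emph{Main obstacle.} This correction is only small if $\rcut^{-1}\|u\|_{L^\infty(B_{2\rcut}\setminus B_{\rcut})}$ is small compared with the margin in $\delta_*$. I would handle it in one of two ways:
\begin{itemize}
  \item impose it as part of the uniform-admissibility hypothesis, in which case one may need to shrink $\delta$ slightly so that $u^{(s)}\in\mathcal{B}_{\delta_*}$ for all $s$; or
  \item for the slowly varying subclass, note that the tail displacement grows sublinearly, so the correction is $o(1)$ once $\rcut$ exceeds the stencil scale.
\end{itemize}
In either case \cref{thm:locality}, or \cref{cor:finite-cell-locality} together with a limit, gives
\[
  \big|\partial S_{\ell,\alpha}(u^{(s)})/\partial u_\beta(n)\big|\le C_{\rm loc}(1+|\ell-n|)^{-2d},
\]
uniformly in $s$.

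Inserting this bound and using the lattice tail estimate $\sum_{|n-\ell|>\rcut}(1+|n-\ell|)^{-2d}\lesssim\rcut^{-d}$ yields \eqref{eq:smooth-truncation-bound} with $C_{\rm tr}$ as in \cref{thm:truncation}.

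The remaining claim is that the cutoff preserves the slow-variation constants. I would verify this from the H\"older increment
\[
  (\chi A)(p+h)-(\chi A)(p)=\chi(p+h)\big[A(p+h)-A(p)\big]+\big[\chi(p+h)-\chi(p)\big]A(p).
\]
The first term inherits \eqref{eq:slow-variation} directly. The second term is bounded by $C_1|h|\rcut^{-1}\|A\|$, which is at most $C(2^j/\rcut)^{\alpha}$ for $2^j\le\rcut$. This is of the same form with $L_{\rm sv}$ replaced by $\min\{L_{\rm sv},\rcut\}$, so the constants change only by cutoff-independent factors once $\rcut$ exceeds the stencil scale. Here I am assuming that the stiffness coefficients of $u^{(s)}$ depend smoothly on local bond differences, so that $\chi$-modulation passes through the composition $A_q(u)=\partial^2V(Du)-\partial^2V(0)$ by the chain rule with $\|\partial^{(\le3)}V\|$.
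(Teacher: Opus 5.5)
Your proposal is correct and follows the paper's proof: the fundamental-theorem interpolation from \cref{thm:truncation} along the smooth path (whose $s$-derivative $(1-\chi_{\rcut})u$ vanishes on $B_{\rcut}(\ell)$), then \cref{thm:locality} and the tail sum $\sum_{|n-\ell|>\rcut}(1+|n-\ell|)^{-2d}\lesssim\rcut^{-d}$, with the slow-variation claim obtained from the finite-difference product rule. Your explicit treatment of the product-rule term $(D\phi_s)u$ is more careful than the paper, which leaves membership of the path in $\mathcal{B}_{\delta_*}$ implicit in the uniform-admissibility hypothesis; your first option (building this into that hypothesis) is the right fix, whereas your second option needs $\|Du\|_{\ell^\infty}<\delta_*$ and $\rcut\geq C\|u\|_{L^\infty}/(\delta_*-\|Du\|_{\ell^\infty})$, not merely $\rcut$ beyond the stencil scale.
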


\begin{proof}
The difference $u-u^{[\rcut]}$ is supported outside $B_{\rcut}(\ell)$ and has $L^\infty$ norm bounded by $\|u\|_{L^\infty(\Lambda\setminus B_{\rcut}(\ell))}$. Applying the fundamental-theorem argument in the proof of \cref{thm:truncation} along the admissible smooth-cutoff path gives the same tail sum $\sum_{|n-\ell|>\rcut}(1+|n-\ell|)^{-2d}\lesssim\rcut^{-d}$, which proves \eqref{eq:smooth-truncation-bound}. The final assertion follows from the product rule for finite differences: the new coefficient variation is the sum of the original slow variation and terms containing $D^m\chi_{\rcut}$, each bounded by $C_m\rcut^{-m}$ and therefore uniform at the dyadic scales relevant to the cutoff transition.
\end{proof}

\begin{remark}[Practical choice of $\rcut$]
\label{rem:practical-rcut}
The localization on the right-hand side of \eqref{eq:truncation-bound} and \eqref{eq:smooth-truncation-bound} is the $L^\infty$ norm of the displacement \emph{outside} the truncation radius. For a compact-core model perturbation supported in $B_{R_{\rm core}}(\ell)$, the hard-cutoff bound is exact for $\rcut\geq R_{\rm core}$. For an admissible algebraic tail $|u(n)|\lesssim (1+|n|)^{-\beta}$, $\beta>0$, the smooth-cutoff path gives the sharper practical estimate $|S_{\ell,\alpha}-\widetilde S^{\rm sm}_{\ell,\alpha}|\lesssim \rcut^{-d-\beta}$, provided the admissibility constants stay uniform along the smooth truncation path. A practitioner targeting site RMSE $\eta$ should choose $\rcut\sim (C_{\rm tr}\|u\|_{L^\infty}/\eta)^{1/d}$ as a conservative rule; the constant worsens near optical softening, loss of full-block coercivity, or ill-conditioned acoustic--optical splitting.
\end{remark}

\subsection{Supercell Consistency and Surrogate Error Estimates}
\label{sec:supercell-error-estimates}

The preceding estimates are most useful for training when finite-supercell labels converge to a site label that does not depend on the chosen periodic box. The next lemma and proposition give this consistency for compact-core admissible perturbations. The argument deliberately keeps the homogeneous normalization, resolvent, and matrix logarithm in the comparison; the local bond stencil alone is not enough, because the normalized entropy label remains a nonlocal finite-cell quantity before the limit is taken.

\begin{lemma}[Derivative convergence in the compact-core limit]
\label{lem:derivative-limit}
Let $u$ be a compact-core admissible displacement satisfying the assumptions of \cref{thm:locality}, and let $u_N$ be periodic representatives whose image distance tends to infinity. For any fixed differentiated site $(n,\beta)$,
\begin{equation}
  \partial_{u_\beta(n)}S_{\ell,\alpha}(u)
  =
  \lim_{N\to\infty}
  \partial_{u_\beta(n)}S^N_{\ell,\alpha}(u_N),
  \label{eq:derivative-limit}
\end{equation}
where the derivative on the left is the derivative of the finite-cell-limit map along compactly supported perturbations.
\end{lemma}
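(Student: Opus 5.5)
The plan is to show that the finite-cell derivatives $\partial_{u_\beta(n)}S^N_{\ell,\alpha}(u_N)$ form a Cauchy sequence converging locally uniformly in the base point $u$, and then to identify the limit with the derivative of the limit map. The identification uses the standard theorem on interchanging limits and derivatives. Concretely, fix $(\ell,\alpha)$, $(n,\beta)$, and a compactly supported direction $v$. For $|t|$ small, $u+tv$ is again compact-core admissible and lies in $\mathcal{B}_{\delta_*}$, since $\|D(u+tv)\|_{\ell^\infty}$ stays below $\delta_*$ after shrinking $t$. The core radius is enlarged only by the fixed support of $v$. By \cref{prop:thermodynamic-limit}, $S^N_{\ell,\alpha}(u_N+tv)\to S_{\ell,\alpha}(u+tv)$ pointwise in $t$. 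If we also show that $t\mapsto\partial_t S^N_{\ell,\alpha}(u_N+tv)$ converges uniformly on a small interval, then the limit function is $C^1$ in $t$. Its derivative at $t=0$ equals the limit of the finite-cell derivatives, and taking $v=e_{(n,\beta)}$ gives \eqref{eq:derivative-limit}.

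For the uniform convergence we use the explicit representation \eqref{eq:dS-explicit}, written on the finite cell. There the derivative is a contour integral of
\[
\big[R^N_z(\Fhom_N^*\mathcal{D}_q^*)\big]_{(\ell,\alpha),(p,\sigma)}\,[M_q^N]_{(p,\sigma),(p',\sigma')}\,\big[(\mathcal{D}_q\Fhom_N)R^N_z\big]_{(p',\sigma'),(\ell,\alpha)},
\]
summed over finitely many legs and inner sites near $n$. The middle factor $M_q^N$ is a third-derivative block of the site potential evaluated at $u_N$ near $n$. Once the periodic cell is large enough, $u_N=u$ near $n$, so $M_q^N=M_q$ exactly for large $N$. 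The main task is therefore the entrywise convergence of the two regularized resolvent kernels, uniformly for $z\in C$ and for $t$ in the interval. The contour has finite length, so uniform convergence of the integrand suffices.

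To prove convergence of the resolvent kernels, expand them, as in the proof of \cref{lem:block-resolvent}, as Neumann series of chains of the form \eqref{eq:chain-algebra}. By \eqref{eq:kernel-smallness} and \cref{lem:chain-algebra}, the $k$-th term is bounded by $C(\theta r_0)^k\cdot r_0^{-k}$-type geometric factors, uniformly in $N$. This allows the series to be truncated at a finite order $K$ with an error $\varepsilon$ that is uniform in $N$. Each remaining chain has coefficient insertions restricted to the compact core, which is a fixed finite set independent of $N$. The chain is then a finite sum of products of homogeneous kernels: $\mathcal{D}G^{\rm hom}_N\mathcal{D}'^*$ evaluated at core pairs, and the endpoint kernels $\mathcal{D}\Fhom_N$ evaluated at the fixed sites $\ell$ and $p'$. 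It therefore suffices to show that the finite-cell homogeneous kernels converge entrywise to their infinite-lattice counterparts. These kernels are Riemann sums on the periodic Brillouin grid, with $\xi=0$ removed, of the symbols in \eqref{eq:block-marcinkiewicz}. By \cref{lem:block-sqrt-symbol} these symbols are bounded and continuous away from $0$, so the Riemann sums converge to the Brillouin-zone integrals. The excluded neighbourhood of $\xi=0$ contributes $O(L_N^{-d})$ because the symbols are bounded.

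The main obstacle is the interchange at the endpoints. The infinite-volume objects $\mathcal{D}\Fhom$ are defined only through the regularized multipliers, so the convergence of the Riemann sums must hold for kernel entries and not merely in operator norm. A second difficulty is that the image copies of the core in $u_N$ enter the finite-cell chains. I would handle the images first, using \cref{cor:finite-cell-locality} and the decay $(1+|\ell-m|)^{-d}$ of the chain kernels. This shows that insertions at periodic images contribute $O(L_N^{-d})$ per chain, again summably in $k$. After that, only the fixed-core insertions remain, and the Riemann-sum argument applies to finitely many fixed kernel entries. Choosing $K$ first and then $N$ yields the uniform Cauchy property and completes the proof.
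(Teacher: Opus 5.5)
Your proposal is correct and follows essentially the paper's route: the finite-cell formula \eqref{eq:dS-explicit}, convergence of the zero-mode-regularized homogeneous kernels and contour resolvents on a fixed local block, and an interchange of limit and derivative. Where you differ is that you spell out the resolvent convergence (Neumann truncation uniform in $N$, finitely many core insertions, Riemann sums) and do the interchange via uniform convergence in $t$ together with \cref{prop:thermodynamic-limit}, whereas the paper only asserts it via dominated convergence. You can drop the separate treatment of periodic images: on the torus $\Lambda_N$ the coefficients $A_q(u_N)$ sit on a single copy of the core, all image effects are already contained in the torus kernel entries whose convergence you prove, and \cref{cor:finite-cell-locality}, being a bound on $\partial S^N$, would not control chain insertions in any case.
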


\begin{proof}
Apply the differentiated contour formula \eqref{eq:dS-explicit} in the finite cell. The derivative kernel is a finite sum of products of the two non-commuted resolvent factors in \cref{lem:block-resolvent} and the compactly supported matrix $\partial_{u_\beta(n)}A_q(u)$. Uniform locality gives a summable dominating tail independent of $N$, while the zero-mode-regularized homogeneous kernels and contour resolvents converge on each fixed local block. Dominated convergence therefore permits the limit of the finite-cell derivatives to be identified with the derivative of the finite-cell-limit site entropy along compactly supported perturbations.
\end{proof}

\begin{proposition}[Supercell consistency of site labels]
\label{prop:thermodynamic-limit}
Let $u$ be a compact-core admissible displacement satisfying the assumptions of \cref{thm:locality}, and let $u_N$ be periodic representatives whose image distance $L_N$ tends to infinity. For each fixed $(\ell,\alpha)$, the finite-cell values $S^N_{\ell,\alpha}(u_N)$ form a Cauchy sequence. Hence the limit in \eqref{eq:Sla-infinite} exists and is independent of the approximating sequence. More quantitatively, for any cutoff $\rcut<L_N/4$,
\begin{equation}
  \left|S^N_{\ell,\alpha}(u_N)-S_{\ell,\alpha}(u)\right|
  \leq
  C\,\rcut^{-d}\,\|u\|_{L^\infty}
  +\varepsilon_N(\rcut),
  \label{eq:thermodynamic-limit}
\end{equation}
where $\varepsilon_N(\rcut)\to0$ for each fixed $\rcut$ as $N\to\infty$. If a periodic-image expansion for the homogeneous finite-cell kernels is used, then $\varepsilon_N(\rcut)$ is bounded by the corresponding image tail; in the present argument only its convergence to zero is needed.
The derivative convergence along compactly supported perturbations is stated separately in \cref{lem:derivative-limit}.
\end{proposition}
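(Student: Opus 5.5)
The plan is a three-term split: compare each finite-cell label with a truncated version of itself, and show that the truncated labels converge as $N\to\infty$. Throughout we work only with finite-cell objects. This avoids circularity, since $S_{\ell,\alpha}(u)$ is defined through the very limit being proved.

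Fix $(\ell,\alpha)$ and a cutoff $\rcut<L_N/4$. Let $u^{(\rcut)}$ be the hard truncation of \cref{thm:truncation}, taking $\rcut$ large enough that the ball contains the compact core. For $N$ large, $u^{(\rcut)}$ has a canonical periodic representative $u^{(\rcut)}_N$ on $\Lambda_N$, since its support fits inside one fundamental domain. We then split
\[
  S^N_{\ell,\alpha}(u_N)-S^M_{\ell,\alpha}(u_M)
  =\big[S^N_{\ell,\alpha}(u_N)-S^N_{\ell,\alpha}(u^{(\rcut)}_N)\big]
  +\big[S^N_{\ell,\alpha}(u^{(\rcut)}_N)-S^M_{\ell,\alpha}(u^{(\rcut)}_M)\big]
  +\big[S^M_{\ell,\alpha}(u^{(\rcut)}_M)-S^M_{\ell,\alpha}(u_M)\big].
\]

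\emph{Outer brackets.} We bound these by repeating the proof of \cref{thm:truncation} in the finite cell, with \cref{cor:finite-cell-locality} in place of \cref{thm:locality}. We interpolate linearly between $u^{(\rcut)}_N$ and $u_N$. Compact-core admissibility is preserved along the path, and the bond seminorm does not increase. The derivative bound then gives two contributions:
\begin{itemize}
\item a sum $\sum_{\operatorname{dist}_N(n,\ell)>\rcut}(1+\operatorname{dist}_N(\ell,n))^{-2d}\lesssim \rcut^{-d}$;
\item an image contribution $C_{\rm img}L_N^{-2d}\cdot|\Lambda_N|$, taken only over the support of $u_N-u^{(\rcut)}_N$.
\end{itemize}
Because $u_N$ agrees with $u$ on balls of growing radius, and $u$ is bounded, this gives $C\rcut^{-d}\|u\|_{L^\infty}$ plus a term that vanishes as $N\to\infty$ for fixed $\rcut$. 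This vanishing term is where care is needed about which sites carry nonzero difference. If necessary, I would use the uniformity regime $L_N\ge c_*(1+\operatorname{dist}_N(\ell,n))$ from the corollary to absorb the image term into the main locality bound.

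\emph{Middle bracket (main obstacle).} Here the displacement $u^{(\rcut)}$ is fixed and compactly supported, and only the periodic cell varies. I would again use the contour formula \eqref{eq:contour-log} to write
\[
  S^N_{\ell,\alpha}=-\tfrac{1}{4\pi i}\int_C\log z\,\big[R^N_z\big]_{(\ell,\alpha),(\ell,\alpha)}\dd z .
\]
Then we expand $R_z^N$ in the Neumann series around $\Id$, as in \cref{lem:block-resolvent}. The chain terms involve the homogeneous finite-cell kernels $\mathcal D_q G^{\rm hom}_N\mathcal D_{q'}^*$ and $\mathcal D_q F_N$, with coefficients $A_q$ supported on the fixed core.

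Each such kernel converges entrywise to its infinite-lattice counterpart. The reason is that its Fourier symbol is a bounded degree-zero multiplier (\cref{lem:block-sqrt-symbol}), and Riemann sums over the Brillouin grid with $\xi=0$ removed converge to the integral. The multiplier is bounded but singular at the origin, so this convergence needs the Marcinkiewicz bounds to control the cell near $0$. Since the core is finite, each chain term is a finite sum of products of convergent entries. \Cref{lem:chain-algebra} bounds the $k$-th term by $C(C_{\rm hom}A)^k$ uniformly in $N$, and the smallness condition \eqref{eq:kernel-smallness} makes this geometric. Dominated convergence in $k$ and in $z\in C$ then shows that $S^N_{\ell,\alpha}(u^{(\rcut)}_N)$ converges, so the middle bracket tends to zero. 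The delicate points are the uniformity near $\xi=0$ and the zero-mode projection $P_N$, which I expect to handle through the uniform-in-grid clause of \cref{lem:hom-kernel}.

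\emph{Conclusion.} Taking $\limsup_{N,M\to\infty}$ gives a bound $2C\rcut^{-d}\|u\|_{L^\infty}$ for every $\rcut$, so the sequence is Cauchy. For independence of the approximating sequence, interleave two sequences of representatives; the interleaved sequence is again admissible, so it also converges. Finally, letting $M\to\infty$ in the splitting yields \eqref{eq:thermodynamic-limit}, with $\varepsilon_N(\rcut)$ collecting the middle bracket and the image terms.
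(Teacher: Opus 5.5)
Your argument is correct. It uses the same three-term split as the paper: truncate at $\rcut$, compare truncated labels across cells, then undo the truncation. It also uses the same inputs, namely a truncation estimate uniform in $N$, Riemann-sum convergence of the zero-mode-regularized homogeneous kernels, and uniform resolvent bounds on the contour.

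The genuine difference is what you compare against. The paper bounds $|S^N_{\ell,\alpha}(u_N)-S_{\ell,\alpha}(u)|$ directly. It routes through $\widetilde S_{\ell,\alpha}(u;\rcut)$ and invokes the infinite-volume \cref{thm:truncation} for the third term. That theorem is derived from \cref{thm:locality}, which is stated for the very limit this proposition constructs.

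Your Cauchy comparison of two finite cells $N,M$ needs only \cref{cor:finite-cell-locality} and the convergence of $S^M_{\ell,\alpha}(u^{(\rcut)}_M)$ for one fixed compactly supported field. The estimate \eqref{eq:thermodynamic-limit} then follows by sending $M\to\infty$, with $2C$ in place of $C$. So yours is the non-circular ordering of the same argument, and your middle-bracket step spells out what the paper compresses into one sentence. The paper's version is shorter but presupposes the objects it is constructing.

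Some of your steps are heavier than necessary:
\begin{itemize}
\item The image contribution is at most $C_{\rm img}L_N^{-2d}|\Scal||\Lambda_N|\|u\|_{L^\infty}\lesssim L_N^{-d}\|u\|_{L^\infty}$, since $|\Lambda_N|\sim L_N^d$. So the uniformity regime of the corollary is not needed.
\item For domination in $k$, set $T_N(u):=\Fhom_N^*\Hess_N(u)\Fhom_N$. The $(\ell,\alpha)$ diagonal block of $(T_N(u)-\Id)^k$ is bounded by $\|T_N(u)-\Id\|^k\le2^{-k}$. With $|z-1|=3/4$ this gives a geometric majorant uniform in $N$.
\item That operator-norm bound is also the right justification. \cref{lem:chain-algebra} treats chains whose left endpoint is $\mathcal{D}G^{\rm hom}\mathcal{D}_{q_1}^*$, not the factor $(\mathcal{D}_{q_1}\Fhom_N)^*$ that occurs in this diagonal entry.
\item Riemann-sum convergence only needs the regularized symbols to be bounded and continuous away from $\xi=0$. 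The Marcinkiewicz bounds are not required for that step.
\end{itemize}
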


\begin{proof}
The proof compares finite cells only through objects whose local kernels have already been controlled. Fix $\rcut<L_N/4$ and split
\begin{equation}
  \left|S^N_{\ell,\alpha}(u_N)-S_{\ell,\alpha}(u)\right|
  \leq
  \left|S^N_{\ell,\alpha}(u_N)-\widetilde S^N_{\ell,\alpha}(u_N;\rcut)\right|
  +\left|\widetilde S^N_{\ell,\alpha}(u_N;\rcut)-\widetilde S_{\ell,\alpha}(u;\rcut)\right|
  +\left|\widetilde S_{\ell,\alpha}(u;\rcut)-S_{\ell,\alpha}(u)\right|.
\end{equation}
The first and third terms are bounded by the finite-cell and infinite-cell truncation estimates, uniformly in $N$. The middle term is not set to zero: even when the local bond stencil does not wrap around the periodic boundary, the homogeneous normalizations $\Fhom_N$, the resolvents, and the matrix logarithms remain nonlocal. Instead it is bounded by $\varepsilon_N(\rcut)$, the finite-cell convergence error of the zero-mode-regularized homogeneous kernels and contour resolvents restricted to $B_{\rcut}(\ell)$. For fixed $\rcut$, this error tends to zero by the periodic Riemann-sum convergence of the Fourier multipliers away from the removed acoustic point and by the uniform resolvent bound on the contour. Taking $N\to\infty$ for fixed $\rcut$ and then $\rcut\to\infty$ gives the Cauchy property and \eqref{eq:thermodynamic-limit}.

The argument gives convergence only for fixed cutoff before the limit $\rcut\to\infty$ is taken. The present theorem therefore does not claim an explicit finite-size convergence rate for $\varepsilon_N(\rcut)$; any image-tail rate must be proved separately from the chosen periodic-kernel expansion.
\end{proof}

\subsubsection{Surrogate error estimates}
\label{sec:error-decomp}

The final step is to express what the locality and truncation estimates buy in a fitted model. \Cref{thm:locality,thm:truncation} control the deterministic error of the truncated target $\widetilde S_{\ell,\alpha}$. The actual error of a fitted surrogate carries two further contributions: a basis approximation error that depends on the regularity of $\widetilde S_{\ell,\alpha}$ relative to the model class, and an estimation error that depends on the finite training set. To make this practical decomposition explicit, write the fitted surrogate as the Tikhonov-regularized empirical-risk minimizer
\begin{equation}
  S^*_{\ell,\alpha}(\cdot;\hat c) = \mathop{\arg\min}_{f\in\mathcal{H}_{\nu,p}}
  \frac{1}{M}\sum_{m=1}^{M}\bigl|f(\mathcal{E}_m)-\widetilde S_{\ell_m,\alpha_m}(u_m;\rcut)\bigr|^2
  +\lambda\,\|f\|_{\mathcal{H}_{\nu,p}}^2,
\end{equation}
where $\mathcal{H}_{\nu,p}\subset L^2$ is a finite-dimensional parametric site-model class indexed by body order $\nu$ and total degree $p$ (the multi-species ACE basis of \cref{sec:ace}), $\mathcal{E}_m$ is the truncated environment of the $m$-th training site, and $\lambda>0$ is a Tikhonov regularization parameter.

\begin{remark}[Surrogate error decomposition and sample complexity]
\label{rem:error-decomp}
Let $\mathcal{H}_{\nu,p}\subset L^2(\mu)$ be a fixed parametric site-model class with effective dimension $\dim\mathcal{H}_{\nu,p}$ (in particular, the multi-species ACE basis of body order $\nu$ and total degree $p$ used in \cref{sec:ace}, for which $\dim\mathcal{H}_{\nu,p}=O(C_{\Scal,\alpha}p^\nu)$ at fixed radial/angular channel family and fixed sublattice partition; see \cite{Drautz2019ACE,Bachmayr2022ACECompleteness}). Assume that the training environments are iid from $\mu$, the feature vectors and labels are uniformly bounded, and $\hat c$ is the Tikhonov-regularized empirical-risk minimizer with fixed $\lambda>0$. This sample-complexity statement is an iid-environment idealization; it is not a proof for correlated site labels extracted from the same supercell, for which configuration-level splits or block bootstrap checks are more appropriate. The constant $C_{\Scal,\alpha}$ accounts for the number of species and central-sublattice coefficient blocks. Combining \cref{thm:truncation} with the standard Rademacher-complexity bound for ridge regression on bounded-norm regressors \cite{Cucker2002LearningTheory,Wendland2004Scattered}, with probability at least $1-\delta$,
\begin{equation}
  \big\|S_{\ell,\alpha}-S^*_{\ell,\alpha}(\cdot;\hat c)\big\|_{L^2(\mu)}
  \leq
  \underbrace{C_{\rm tr}\,\rcut^{-d}\,\|u\|_{L^\infty}}_{\varepsilon_{\rm trunc}}
  +\underbrace{\inf_{f\in\mathcal{H}_{\nu,p}}\big\|\widetilde S_{\ell,\alpha}-f\big\|_{L^2(\mu)}}_{\varepsilon_{\rm approx}}
  +\underbrace{C_{\rm est}\!\left(\sqrt{\tfrac{\dim\mathcal{H}_{\nu,p}}{M}}+\sqrt{\tfrac{\log(1/\delta)}{M}}\right)}_{\varepsilon_{\rm est}}.
  \label{eq:error-decomposition}
\end{equation}
The first term is the genuinely multilattice contribution: \cref{thm:truncation} couples the cutoff radius $\rcut$ to the dimension $d$ in a way that is independent of basis order, species count, and training-set size. Optimising \eqref{eq:error-decomposition} for a target accuracy $\eta$ on the ACE class above yields the sample-complexity scaling
\begin{equation}
  \rcut\sim\eta^{-1/d},\qquad p\sim\eta^{-1/s},\qquad M\sim\eta^{-(\nu/s+2)},
  \label{eq:sample-complexity}
\end{equation}
where $s>0$ is the regularity exponent of the truncated site target $\widetilde S_{\ell,\alpha}(\cdot;\rcut)$ as a function of the local geometry. The finite-smoothness hypothesis in \cref{ass:smoothness} by itself gives only a finite regularity exponent, limited by the differentiability of the site potential and by the distance to the boundary of the stability neighbourhood $\mathcal{B}_{\delta_*}$. In the special case of analytic or polynomial test potentials, the finite-dimensional truncated Hessian depends analytically on the displacement as long as the spectrum stays in the stable sector; for the synthetic spring models of \cref{sec:numerics:locality,sec:numerics:truncation} this idealized setting gives $s=\infty$ and \eqref{eq:sample-complexity} reduces to the algebraic rate $M\sim\eta^{-2}$. For Stillinger--Weber and similar three-body potentials the angular cutoff function is only $C^k$ for finite $k$ on the sampled displacement geometry, so the effective $s$ in \eqref{eq:sample-complexity} is finite and produces a strictly larger formal sample-complexity exponent than in the analytic case. The cutoff exponent in $\rcut\sim\eta^{-1/d}$ inherits the dimension of the locality theorem; the constants in $C_{\rm tr}$ and $C_{\rm est}$ deteriorate with the stability margins, optical gap, block coupling, feature normalization, and ridge parameter, but the scaling exponents do not.
\end{remark}

\section{Multispecies Local Surrogates}
\label{sec:ace}

The theory above leads to a direct surrogate calculation. For a set of training supercells, the expensive step is performed once: assemble the finite-cell Hessian, diagonalize it on the positive-mode sector, and extract the site labels $S^N_{\ell,\alpha}$. Around each labelled site we then build a cutoff-local environment, fit a species- and sublattice-resolved site map, and deploy the fitted model on new supercells by summing the predicted site contributions. The purpose of this section is to specify this local representation and the corresponding training/deployment cost.

\subsection{Cutoff-Local Site Representation}
\label{sec:ace-environment}

For a site $(\ell,\alpha)$ and a cutoff radius $\rcut$, define the local environment
\begin{equation}
  \mathcal E_{\ell,\alpha}^{\rcut}(y)
  :=
  \left\{
  \big(y_\beta(n)-y_\alpha(\ell),\,\beta\big):
  |n-\ell|\leq \rcut,\ \beta\in\Scal
  \right\}.
  \label{eq:local-environment}
\end{equation}
The relative position removes global translations, while the neighbour label $\beta$ records the chemical species and crystallographic sublattice channel carried by the multilattice basis. A site model for harmonic entropy is a map
\[
  \mathcal E_{\ell,\alpha}^{\rcut}(y)\longmapsto S^*_{\ell,\alpha}(y;c),
\]
with the central label $\alpha$ selecting the appropriate coefficient block. The locality and truncation estimates of \cref{thm:locality,thm:truncation} provide the analytical justification for replacing the global entropy label by this cutoff-local input.

We use a species-resolved ACE basis~\cite{Drautz2019ACE,Bachmayr2022ACECompleteness,Bachmayr2020Symmetric}:
\begin{equation}
  S^*_{\ell,\alpha}(y;c)
  =
  \sum_j c_{\alpha j}\,
  B_{\alpha j}\!\left(\mathcal E_{\ell,\alpha}^{\rcut}(y)\right),
  \label{eq:ace-site}
\end{equation}
where $B_{\alpha j}$ is a rotation-invariant and species-wise permutation-invariant ACE basis function for central channel $\alpha$. The coefficient vector $c_{\alpha j}$ is shared across crystallographically equivalent sublattices, but split by sublattice label when the reference multilattice carries inequivalent sites. This distinction matters even when there is only one chemical element: in diamond Si, zinc-blende compounds, and B2 alloys, chemically identical or chemically paired atoms may occupy different basis positions and couple to different internal-shift modes. Suppressing the sublattice label would fold these environments together and lose part of the point-group and basis symmetry. For crystallographically equivalent basis sites, the coefficient block is shared and \eqref{eq:ace-site} reduces to the usual species-only ACE representation.

The total surrogate is the extensive sum
\begin{equation}
  S^*(y;c)=\sum_{\ell,\alpha}S^*_{\ell,\alpha}(y;c).
  \label{eq:ace-total}
\end{equation}
Thus the same fitted site map can be evaluated on any compatible supercell, provided the local environments remain within the geometric and stability range represented in the training data.

\subsection{Training, Cost, and Deployment}
\label{sec:ace-training}

The primary regression target is the site entropy label. For training sites indexed by $m=1,\ldots,M_{\rm site}$, with environments $\mathcal E_m$ and labels $S_m^{\rm site}$, the site-only ridge problem is
\begin{equation}
  \hat c
  =
  \mathop{\arg\min}_{c}
  \frac{1}{M_{\rm site}}\sum_{m=1}^{M_{\rm site}}
  \left|S_m^{\rm site}-S^*(\mathcal E_m;c)\right|^2
  +\lambda\|c\|^2 .
  \label{eq:ace-site-loss}
\end{equation}
Here $S^*(\mathcal E_m;c)$ denotes the site model \eqref{eq:ace-site} evaluated on the $m$-th local environment with its central channel. This is the loss used in the numerical experiments of \cref{sec:numerics}. It directly targets the object controlled by \cref{thm:locality}, while the total entropy is recovered by the sum \eqref{eq:ace-total}.

When total entropy or entropy-gradient labels are available, they can be added as auxiliary consistency information:
\begin{equation}
  \mathcal{L}(c)
  =
  w_{\rm site}\mathcal L_{\rm site}(c)
  +w_{\rm tot}\mathcal L_{\rm tot}(c)
  +w_{\rm grad}\mathcal L_{\rm grad}(c),
  \label{eq:ace-loss}
\end{equation}
where $\mathcal L_{\rm site}$ is \eqref{eq:ace-site-loss}, while $\mathcal L_{\rm tot}$ and $\mathcal L_{\rm grad}$ are the corresponding total-entropy and gradient losses. The present experiments use the site-only choice $w_{\rm site}=1$ and $w_{\rm tot}=w_{\rm grad}=0$; total-entropy and gradient blocks are left to future training-set design.

The computational gain is an amortization of the global harmonic-entropy calculation. Let $N_{\rm at}$ denote the number of atoms in a supercell. Reference-label generation requires one Hessian assembly and one dense eigendecomposition of a $dN_{\rm at}\times dN_{\rm at}$ matrix per training configuration, with cost $O((dN_{\rm at})^3)$ for the diagonalization. In our finite-difference implementation, Hessian assembly requires a number of force calls proportional to $dN_{\rm at}$, and each force call is $O(N_{\rm at})$ for finite-range potentials. Exact site-label generation through the spectral representation of $\log_+T$ remains a global matrix-function operation after this eigendecomposition.

After the labels have been generated, basis assembly is linear in the number of training environments and basis functions, $O(M_{\rm site}P)$ for $P=\dim\mathcal H_{\nu,p}$, whereas a direct dense ridge solve costs $O(M_{\rm site}P^2+P^3)$ unless an iterative or structured solver is used. Once trained, the surrogate evaluation is local: each site sees $O(\rcut^d)$ neighbours at fixed density, so evaluating and summing \eqref{eq:ace-site} over a supercell costs $O(N_{\rm at}\rcut^d)$ at fixed basis size and cutoff. The numerical scaling tests in \cref{sec:numerics:scaling} compare this deployment cost with direct dense diagonalisation.

\FloatBarrier

\section{Numerical Experiments}
\label{sec:numerics}

The numerical study is organized around the computational use of the theory. Controlled multilattice models first test whether site-entropy gradients and cutoff errors follow the rate envelopes needed for local regression. Finite-range Stillinger--Weber Si and CdTe benchmarks then test whether sublattice and species labels improve entropy-site prediction. Finally, scaling and transfer tests measure the practical gain after the expensive Hessian labels have been generated once. In all Hessian-label calculations the logarithm is evaluated on the finite-cell positive-mode sector of \cref{sec:finite-cell}; the experiments test the calculation strategy and rate behaviour, but do not attempt to infer the full set of stability or admissibility constants.

\subsection{Controlled Rate and Cutoff Tests}
\label{sec:numerics:locality}
\label{sec:numerics:truncation}

The first set of experiments isolates the analytical rate mechanism before moving to empirical potentials. We use three finite-range spring multilattices: a diatomic chain, a two-sublattice square lattice, and a CsCl-type three-dimensional lattice. Each model contains acoustic and optical branches, and each allows the optical separation to be reduced by weakening selected same-sublattice or intercell springs. This provides a controlled way to check whether the decay slope is stable while the prefactor worsens near optical softening.

\Cref{fig:locality} reports the envelope of $|\partial S^N_{\ell_0,\alpha}/\partial u_\beta(n)|$ as a function of distance from the reference site. The 1D and 2D tests show clear algebraic regimes matching the $(1+r)^{-2d}$ reference slopes predicted by \cref{thm:locality}. In both cases, reducing the optical separation raises the prefactor without changing the observed slope, which is the behaviour expected from the stability-dependent constants. The 3D CsCl-type test is limited by the $5^3$ dense-diagonalisation supercell; within that accessible range the envelope is consistent with the $(1+r)^{-6}$ reference after first-shell discretization effects are excluded.

The second part of the same controlled test asks whether the locality estimate translates into a usable cutoff rule. We compare the full-cell site entropy $S^N_{0,0}(u)$ with the truncated value $\widetilde S_{0,0}(u;\rcut)$, using periodic cells large enough to separate the cutoff error from the finite-cell image error. \Cref{fig:truncation} shows the resulting cutoff curves. The 1D and 2D localized perturbations decay faster than the conservative $\rcut^{-d}$ envelope because their displacement tails also decay. The 3D cosine modulation keeps $\|u\|_{L^\infty(\Lambda\setminus B_{\rcut})}$ essentially constant in $\rcut$ and therefore gives the closest comparison to the worst-case $\rcut^{-3}$ bound. These tests support the use of $\rcut$ as a controllable accuracy parameter for the local surrogate.

\begin{figure}[t]
\centering
\begin{subfigure}[b]{0.32\textwidth}
\centering
\includegraphics[width=\linewidth]{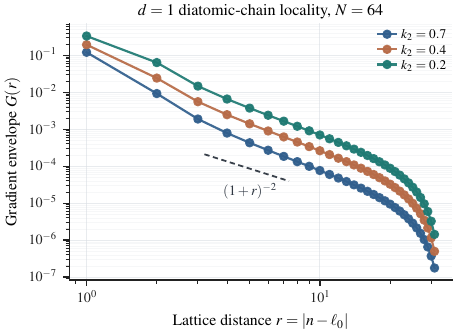}
\caption{$d=1$: diatomic chain.}
\end{subfigure}\hfill
\begin{subfigure}[b]{0.32\textwidth}
\centering
\includegraphics[width=\linewidth]{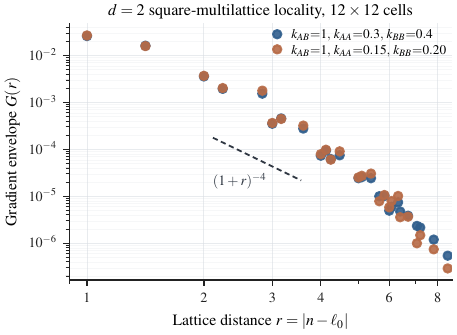}
\caption{$d=2$: square multilattice.}
\end{subfigure}\hfill
\begin{subfigure}[b]{0.32\textwidth}
\centering
\includegraphics[width=\linewidth]{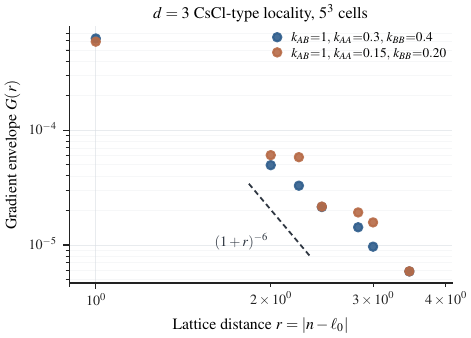}
\caption{$d=3$: CsCl-type multilattice.}
\end{subfigure}
\caption{Envelope of $|\partial S^N_{\ell_0,\alpha}/\partial u_\beta(n)|$ as a function of $|n-\ell_0|$ in $d=1,2,3$. The 1D and 2D panels show the median over independent random base configurations, with the 10\%--90\% interquantile band shaded; the 3D panel shows a single finite-difference run due to cost. Dashed reference slopes correspond to $(1+r)^{-2d}$ in each panel.}
\label{fig:locality}
\end{figure}

\begin{figure}[t]
\centering
\begin{subfigure}[b]{0.32\textwidth}
\centering
\includegraphics[width=\linewidth]{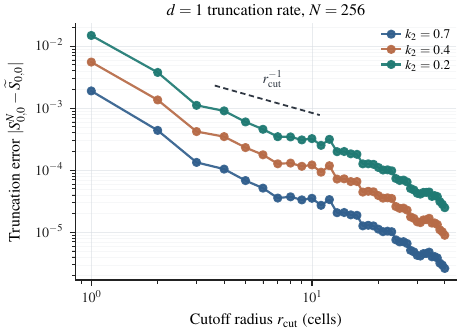}
\caption{$d=1$: diatomic chain.}
\end{subfigure}\hfill
\begin{subfigure}[b]{0.32\textwidth}
\centering
\includegraphics[width=\linewidth]{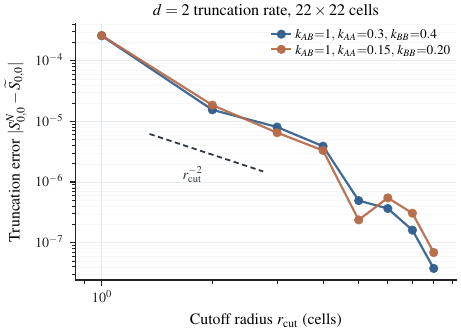}
\caption{$d=2$: square multilattice.}
\end{subfigure}\hfill
\begin{subfigure}[b]{0.32\textwidth}
\centering
\includegraphics[width=\linewidth]{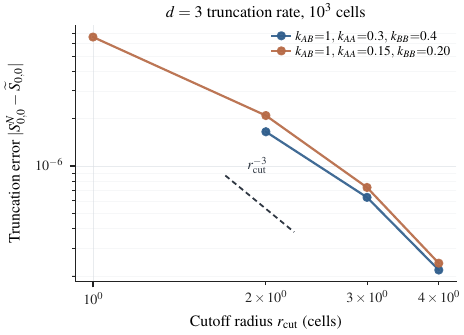}
\caption{$d=3$: CsCl-type multilattice.}
\end{subfigure}
\caption{Truncation error $|S^N_{0,0}(u)-\widetilde S_{0,0}(u;\rcut)|$ as a function of cutoff radius $\rcut$ in $d=1,2,3$. Dashed reference slopes $\rcut^{-d}$. The empirical decay sits below the upper bound; in 1D and 2D the extra decay reflects the localized displacement tails, while the 3D modulation is closer to the worst-case envelope.}
\label{fig:truncation}
\end{figure}

\FloatBarrier

\subsection{Sublattice- and Species-Resolved Entropy Regression}
\label{sec:numerics:si-sw}
\label{sec:numerics:cdte-sw}

The next question is whether the site label can be learned by a local representation that keeps the multilattice labels visible. Diamond silicon under the Stillinger--Weber three-body potential~\cite{StillingerWeber1985Silicon} is a useful first benchmark because it has one chemical element but two inequivalent basis sites coupled by an optical internal-shift mode. The Hessians are computed with ASE~\cite{LarsenASE2017} and LAMMPS~\cite{ThompsonLAMMPS2022}; the benchmark is used as a finite-range, stable small-perturbation test of the surrogate calculation rather than as a reconstruction of all theorem constants.

For Si, we generate $30$ localized perturbations of a $2\times2\times2$ conventional cubic supercell, giving $1920$ site labels. A sublattice-resolved ACE model from \texttt{ACEpotentials.jl}~\cite{Witt2023ACEpotentials,Lysogorskiy2021PerformantImplementationACE} with body order $\nu=3$, total degree $p=8$, cutoff $\rcut=5.5$\,\AA{}, and basis dimension $62$ captures most of the site-entropy variance. The randomized site split measures interpolation across local environments, while the configuration-level split trains on $24$ full configurations and tests on $6$ held-out configurations. Both splits remain accurate, showing that the fit is not only reusing sites from the same distorted cell. The Cartesian baseline in \cref{fig:si-ace} is also sublattice-resolved, so the improvement comes from the richer symmetry-adapted ACE basis rather than from adding the sublattice label alone.

\begin{figure}[t]
\centering
\includegraphics[width=\linewidth]{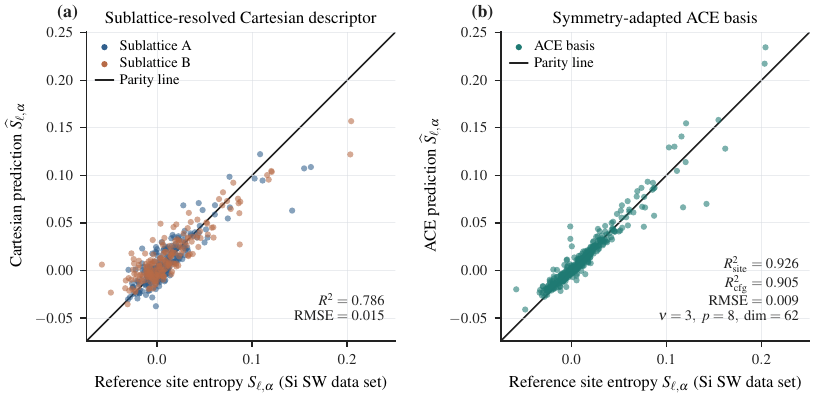}
\caption{Multispecies ACE site model on Si SW. Held-out parity at body order $\nu=3$, total degree $p=8$, cutoff $\rcut=5.5$\,\AA{}; basis dimension $62$. The ACE fit reaches $R^2=0.926$ on a randomized site split and $R^2=0.905$ on a stricter configuration-level split. On the same randomized split, a sublattice-resolved Cartesian descriptor baseline attains $R^2=0.79$, consistent with the approximation term in \cref{rem:error-decomp}.}
\label{fig:si-ace}
\end{figure}

\FloatBarrier

The error scale is consistent with the decomposition in \cref{rem:error-decomp}. At the chosen cutoff the estimated truncation contribution is below $10^{-2}$, and the iid finite-dimensional estimate gives the scale $\sigma_y\sqrt{\dim\mathcal{H}_{\nu,p}/M_{\rm site}}\approx 6\times10^{-3}$ after rescaling by the site-label standard deviation. This is the same order as the randomized-site RMSE and below the configuration-level RMSE reported in \cref{fig:si-ace}. As in \cref{rem:error-decomp}, this estimate is a formal iid guide; the configuration split is the relevant check against correlations among site labels from the same supercell.

CdTe tests the complementary case in which the two sublattices also carry different chemical species. We use zinc-blende CdTe under a finite-range SW-family parameterization as a two-species benchmark, not as a calibrated CdTe thermodynamics model; the parameter source and LAMMPS potential file will be included in the reproducibility archive. A body-order-$2$ species-resolved Cartesian polynomial is fitted to site labels from $24$ localized perturbations of a $2\times2\times2$ conventional cubic supercell. The purpose is a label-ablation test: when Cd and Te are collapsed into one chemical type, both the central-site and neighbour-species channels are removed from the descriptor. \Cref{fig:cdte-sw} shows that this collapse loses a large fraction of the explainable variance. Retaining the species channels restores the accuracy, supporting the multilattice requirement that inequivalent basis sites remain separated in the local entropy map.

\begin{figure}[t]
\centering
\includegraphics[width=0.95\linewidth]{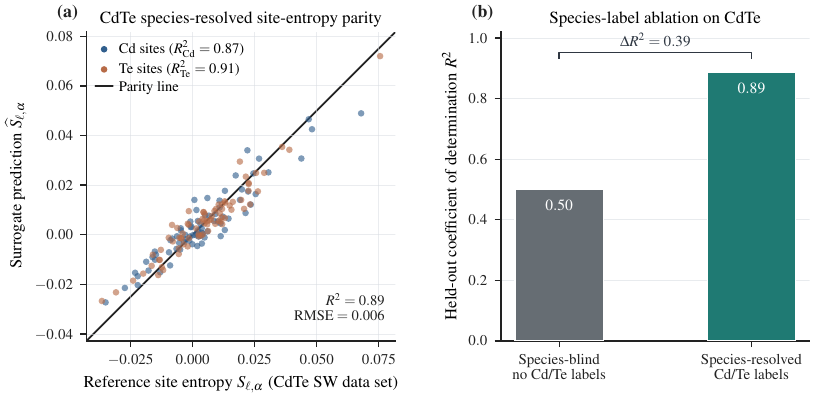}
\caption{Species-resolved local-polynomial test on CdTe SW. (a) Held-out site parity for a body-order-$2$ Cartesian polynomial fit with species channels, with per-sublattice $R^2_{\rm Cd}=0.87$, $R^2_{\rm Te}=0.91$ (overall $R^2=0.89$). (b) Species-label ablation on the same held-out split: collapsing Cd and Te into a single chemical type reduces the overall $R^2$ to about $0.50$, whereas retaining the species-resolved channels gives $R^2=0.89$ ($\Delta R^2\approx0.39$).}
\label{fig:cdte-sw}
\end{figure}

\FloatBarrier

\subsection{Linear-Scaling Evaluation and Supercell Transfer}
\label{sec:numerics:scaling}

\Cref{sec:ace} separates a one-time label-generation cost from repeated local evaluation. The Hessian labels require dense spectral calculations on the training configurations, but a fitted site map evaluates with fixed cutoff and can be summed over a new supercell. We test this deployment regime in two ways: a one-dimensional timing sweep where direct diagonalisation is available over a broad size range, and a three-dimensional Si test where the trained local map is evaluated on larger cells.

\Cref{fig:scaling} reports the one-dimensional timing sweep for $N\in\{32,\dots,2048\}$ cells. Across this factor-$64$ increase in system size, direct evaluation grows by approximately $5\times10^4$, while surrogate evaluation grows by approximately $65$, following the expected linear envelope. The direct curve is below the ideal cubic envelope at smaller sizes because finite-difference Hessian assembly contributes before dense eigendecomposition dominates. At fixed $\rcut=4$, the total-entropy error remains uniformly below about $3\times10^{-3}$ across the sweep, consistent with the cutoff estimate.

\Cref{fig:si-scaling-3d}(b) compares measured full direct Si entropy evaluation with the trained local ACE site model on the displayed supercell sizes $N\in\{216,512,1000,1728,2744\}$ atoms. The direct timings include finite-difference Hessian assembly and the normalized log-determinant evaluation. The local timings include ACE site-descriptor evaluation and the fitted ridge map at fixed cutoff, with model construction and compilation excluded by warm-up. Across the displayed range, the full direct cost grows from $5.90$\,s to $593$\,s, while the local ACE evaluation grows from $5.0\times10^{-3}$\,s to $6.2\times10^{-2}$\,s. At $N=1000$ the trained site map is about $2.8\times10^3$ times faster per entropy evaluation, and at $N=2744$ the measured speedup is about $9.6\times10^3$.

\Cref{fig:si-scaling-3d}(a) tests whether the learned ACE site map transfers across supercell size. The model trained on the $2\times2\times2$ Si data set is evaluated without retraining on an independent $3\times3\times3$ test set generated from the same localized-perturbation distribution. On $940$ held-out site labels from $20$ configurations, the transfer accuracy remains close to the within-size split, with comparable performance on the two sublattices. This supports the intended calculation: dense Hessian diagonalisation supplies reusable site labels for training, after which the local map can be deployed on larger supercells at fixed cutoff.

\begin{figure}[t]
\centering
\includegraphics[width=\linewidth]{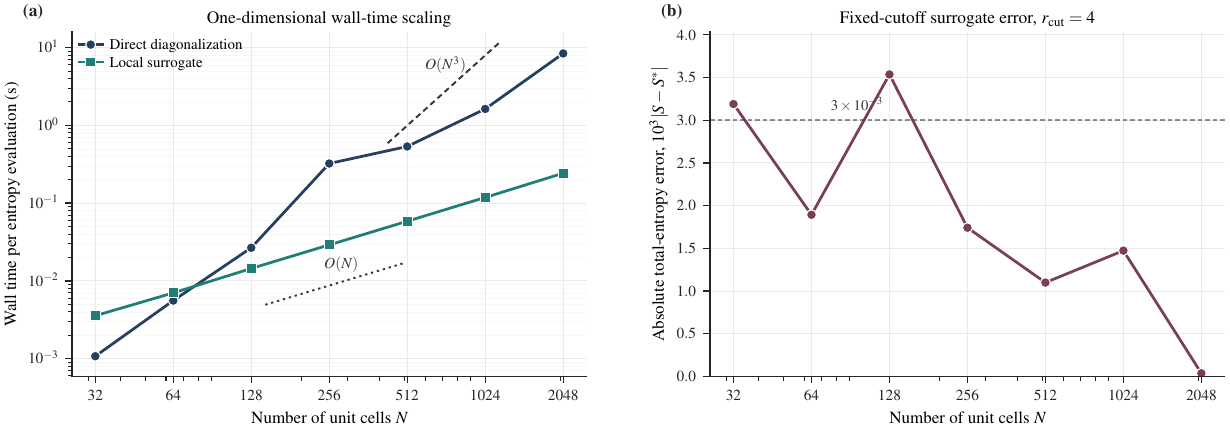}
\caption{(a) Wall-time scaling of direct diagonalization versus surrogate evaluation on the 1D diatomic chain for $N\in\{32,\dots,2048\}$ cells. Reference slopes $O(N^3)$ and $O(N)$ are shown; the direct curve approaches the cubic envelope only in the upper part of the range, with an $O(N^2)$ Hessian-assembly contribution dominating at small $N$. (b) Total-entropy error of the surrogate versus cell size at fixed $\rcut=4$, bounded uniformly by $\approx 3\times 10^{-3}$ in agreement with \cref{thm:truncation}.}
\label{fig:scaling}
\end{figure}

\begin{figure}[t]
\centering
\includegraphics[width=\linewidth]{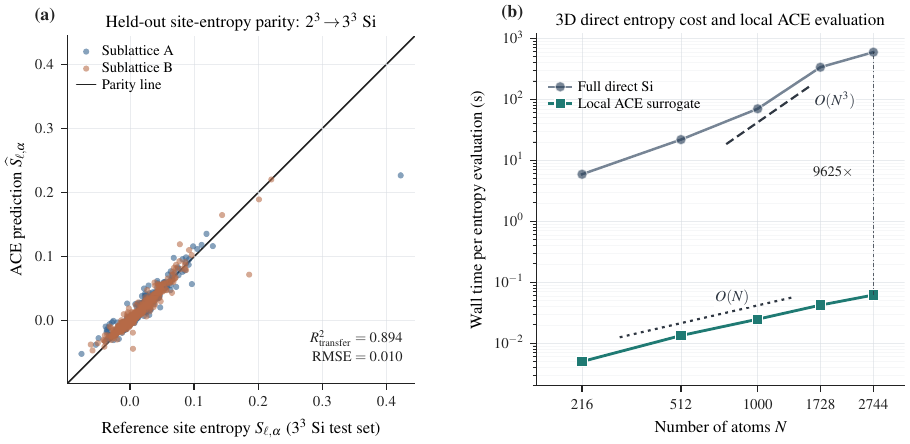}
\caption{Si SW: cross-supercell transfer and 3D timing. (a) Held-out site-entropy parity for the ACE site model on an independent $3\times 3\times 3$ Si test set ($940$ held-out site labels from $20$ configurations, $R^2_{\rm transfer}=0.894$ with $95\%$ block-bootstrap confidence interval $[0.867,0.943]$), trained on the $2\times 2\times 2$ data set of \cref{sec:numerics:si-sw}. (b) Measured full direct Si timings and fixed-cutoff local ACE evaluation for $N\in\{216,512,1000,1728,2744\}$ atoms. The dashed and dotted segments show cubic and linear reference envelopes, respectively.}
\label{fig:si-scaling-3d}
\end{figure}

\section{Conclusions}
\label{sec:conclusions}

This paper addresses the cost of repeated harmonic vibrational-entropy evaluations in multilattices. The central result is that, for finite-range or screened atomistic models satisfying full block stability and the admissible perturbation condition, the global spectral entropy can be decomposed into sublattice-resolved site labels with controlled spatial locality. Compact-core finite-cell-limit site entropies have entropy-gradient decay $(1+|\ell-n|)^{-2d}$ and the associated cutoff-local site model has truncation error of order $\rcut^{-d}$. For non-compact dyadically admissible fields, the same finite-cell kernel estimate applies when the additional periodic-kernel convergence is available. The analytical step that makes the multilattice case possible is the block resolvent estimate, which transfers the acoustic, mixed, and optical Green-function hierarchy through the homogeneous normalization without losing the Bravais entropy-locality rate.

The numerical experiments support the resulting surrogate calculation. Controlled 1D and 2D multilattice tests show the predicted locality and cutoff behaviour with stability-dependent prefactors, and the 3D test follows the predicted envelope over the dense-diagonalisation sizes that can be computed directly. On Stillinger--Weber Si, a sublattice-resolved ACE site model learns harmonic site-entropy labels on both randomized-site and configuration-level splits and transfers to an independent larger supercell. On zinc-blende CdTe, the species-label ablation shows that collapsing inequivalent chemical sublattices removes information needed for accurate local entropy prediction. The timing tests demonstrate the intended computational transition: after one-time Hessian label generation on training configurations, fixed-cutoff surrogate evaluation scales linearly in the number of atoms and is orders of magnitude faster than repeated dense diagonalisation in the tested range.

These results make harmonic entropy a practical local quantity for repeated finite-temperature defect calculations, rather than a global spectral calculation that must be redone from scratch for every configuration. The most immediate applications are formation entropies of vacancies and anti-sites in ordered alloys, attempt-frequency ratios along migration paths in Vineyard transition-state theory~\cite{Vineyard1957FrequencyFactors}, and harmonic finite-temperature corrections to dislocation core or interface energies. In these settings, a trained site model can be reused across supercell sizes and sampled configurations, while the cutoff and regression errors remain explicit parts of the calculation.

The scope is intentionally bounded. The present theorem is harmonic and finite-range, or screened-effective, in character; fully anharmonic free energies require quasi-harmonic, SCHA-type, or thermodynamic-integration extensions~\cite{Hellman2013SCAILD,Errea2014SCHA,KirkwoodTI1935,Frenkel1984FreeEnergy,Grabowski2009AnharmonicFreeEnergy}. The constants deteriorate near soft optical modes, loss of full block stability, or rough coefficient fields outside the admissible class. Unscreened Coulomb interactions, dipolar tails, and non-analytic polar phonon effects are also outside the finite-range theorem and call for a separate long-range splitting or screened-operator analysis. Extending the same locality-and-surrogate approach to those settings is the natural next step toward entropy-aware defect calculations in broader multispecies materials.

\section*{Declaration of Competing Interest}
The authors declare that they have no known competing financial interests or personal relationships that could have appeared to influence the work reported in this paper.

\section*{Data and Code Availability}
The implementation uses ASE~\cite{LarsenASE2017}, \texttt{LAMMPS}~\cite{ThompsonLAMMPS2022}, and \texttt{ACEpotentials.jl}~\cite{Witt2023ACEpotentials,Lysogorskiy2021PerformantImplementationACE}. The code, data sets, potential/parameter files, and run instructions needed to reproduce the numerical experiments will be released in a versioned GitHub/Zenodo archive before acceptance. Wall-time measurements (\cref{sec:numerics:scaling}) were performed on an Apple M3 CPU using NumPy~1.26 with Accelerate BLAS, ASE~3.23, and Julia~1.11 with \texttt{ACEpotentials.jl}~v0.9.

\appendix
\section{Supporting Estimates for the Locality Result}
\label{app:proofs}

This appendix provides the technical estimates used by the locality and truncation results in the main text. The proofs are organized around five steps: regularizing the homogeneous acoustic singularity, recording the borderline lattice convolution estimate, closing the admissible dyadic chain, separating compact-core and far-field stiffness perturbations, and applying the resulting chain bound to the normalized resolvent. The main text states these estimates with proof ideas so that the computational argument can be read independently; the details below make the decay constants, zero-mode removal, and admissibility requirements explicit.

\Cref{tab:bravais-vs-multi} gives a short roadmap of where the multilattice proof differs from the Bravais argument of~\cite{TorabiWangOrtner2024Entropy}. The final algebraic rates of $\partial S/\partial u$ and of the truncation error are unchanged; the additional work is in the block structure and in the constants.

\begin{table}[h]
\centering
\caption{Proof roadmap comparing the Bravais analysis of~\cite{TorabiWangOrtner2024Entropy} with the multilattice estimates used here. The additional estimates enter through the acoustic--optical block structure and the normalized block resolvent.}
\label{tab:bravais-vs-multi}
\small
\begin{tabular}{@{}>{\raggedright\arraybackslash}p{0.27\textwidth}>{\raggedright\arraybackslash}p{0.30\textwidth}>{\raggedright\arraybackslash}p{0.36\textwidth}@{}}
\toprule
Object & Bravais (\cite{TorabiWangOrtner2024Entropy}) & Multilattice (this work) \\
\midrule
Homogeneous symbol $\widehat\Hhom(\xi)$ & scalar; $\widehat\Hhom(\xi)\sim A_{\rm a}|\xi|^2$, $\xi\to 0$ &
$|\Scal|d\times|\Scal|d$ matrix; block form
$\bigl(\begin{smallmatrix}\mathbb{C}_{\rm a}|\xi|^2 & \mathbb{C}_{\rm ap}|\xi| \\ \mathbb{C}_{\rm ap}^*|\xi| & \mathbb{C}_{\rm p}\end{smallmatrix}\bigr)$ \\[0.2em]
Acoustic--optical coupling & none & $O(|\xi|)$ off-diagonal block $\mathbb{C}_{\rm ap}$ \\[0.2em]
Inverse-square-root symbol $\widehat\Fhom(\xi)$ near $\xi=0$ & scalar $|\xi|^{-1}$ &
block:
$O(|\xi|^{-1})$ acoustic, $O(1)$ mixed, $O(1)$ optical; the three orders must be regularized separately \\[0.2em]
Atomistic Green's function decay & single rate $|G(\ell)|\lesssim(1+|\ell|)^{-(d-2)}$ &
hierarchy~\cite[Thm.~3.6]{OlsonOrtnerWangZhang2023MultilatticeDislocations}: $G_{00}\!\sim\!(1+|\ell|)^{-(d-2)}$, $G_{0p}\!\sim\!(1+|\ell|)^{-(d-1)}$, $G_{pp}\!\sim\!(1+|\ell|)^{-d}$ \\[0.2em]
Regularization by $\mathcal{D}$ & single $|\xi|$ factor cancels the scalar $|\xi|^{-1}$ pole &
intra-sublattice differences are $O(|\xi|)$ on all blocks; sublattice-shift legs are $O(|\xi|)$ on $P_0$ and bounded on $P_p$; \emph{which} bound applies depends on the block \\[0.2em]
Kernel of $\mathcal{D}\Fhom^*$ & $(1+|\ell-m|)^{-d}$ scalar &
$(1+|\ell-m|)^{-d}$ uniformly across all four blocks (\cref{lem:hom-kernel}) \\[0.2em]
Resolvent of $T(u)$ along contour & scalar Neumann series with $r^{-d}$ kernel &
matrix-valued Neumann series; block resolvent estimate needed (\cref{lem:block-resolvent}) \\[0.2em]
Locality rate of $\partial S/\partial u$ & $(1+r)^{-2d}$ & $(1+r)^{-2d}$ (\textbf{unchanged})\\[0.2em]
Truncation rate of $S_{\ell,\alpha}$ & $\rcut^{-d}$ & $\rcut^{-d}$ (\textbf{unchanged})\\[0.2em]
Stability constants entering $C_{\rm loc}$ & $\gamma_a$ & block margin, optical gap, coupling/projector conditioning, $|\Scal|$, and admissibility constants \\
\bottomrule
\end{tabular}
\end{table}

\subsection{Proof of \texorpdfstring{\cref{lem:hom-kernel}}{Lemma (Homogeneous normalized kernel)}}
\label{app:hom-kernel}

The lattice Fourier transform on the index $\ell\in\Lambda$ gives the matrix-valued symbol $\widehat\Hhom(\xi)$ of \eqref{eq:fourier-symbol}--\eqref{eq:hom-symbol-block} on the Brillouin zone $\BZ$. The block square-root analysis in \cref{lem:block-sqrt-symbol} gives the acoustic/optical expansion of $\widehat\Fhom(\xi)=\widehat\Hhom(\xi)^{-1/2}$ and the Marcinkiewicz estimates for all finite-difference-regularized products used below.

Each finite-difference leg $\mathcal{D}$ entering the Hessian stencil decomposes (cf.\ \cref{sec:setting}) into two types of building blocks:
\begin{enumerate}
  \item an intra-sublattice difference $D_j\psi(\ell)=\psi(\ell+j)-\psi(\ell)$, with Fourier symbol $\widehat{D}_j(\xi)=e^{-i\xi\cdot j}-1=O(|\xi|)$ as $\xi\to 0$;
  \item a sublattice-shift difference $\widetilde D_{j;\alpha\beta}=T_jE_{\alpha\beta}-E_{\alpha\alpha}$, where $E_{\alpha\beta}$ is the cell-internal matrix unit. Its acoustic restriction is $(e^{-i\xi\cdot j}-1)E_{\alpha\beta}P_0=O(|\xi|)$, and its internal-shift restriction to $P_p$ is uniformly bounded.
\end{enumerate}
Multiplying $\widehat{\mathcal{D}}(\xi)$ on the left of $\widehat\Fhom(\xi)$ regularizes the acoustic singularity in \eqref{eq:Fhat-block}: every Hessian leg contributes the needed $|\xi|$ factor on $P_0$, while the $P_p$ columns of $\widehat\Fhom$ are already bounded. The product symbol $\sigma(\xi):=\widehat{\mathcal{D}}(\xi)\,\widehat\Fhom(\xi)^*$ is therefore bounded on $\BZ$ and $C^\infty$ on $\BZ\setminus\{0\}$. Near $\xi=0$, $\sigma$ is homogeneous of degree zero up to smooth corrections: its acoustic part has leading form $(i\xi\cdot j)|\xi|^{-1}A_{\rm a}(\hat\xi)^{-1/2}$ with $\hat\xi=\xi/|\xi|$, and its optical part is bounded. As a consequence,
\begin{equation}
  |\partial^\rho_\xi\sigma(\xi)|\leq C_\rho\,|\xi|^{-|\rho|}
  \qquad\text{on }\BZ\setminus\{0\},
  \label{eq:sigma-deg-zero}
\end{equation}
for every multi-index $\rho\in\mathbb{N}_0^d$, with constants depending on $\gamma_a$, $\gamma_{\rm p}$, $\rcut^{(\rm hom)}$, $|\Scal|$, and $d$. At $\xi=0$ the symbol $\sigma$ is bounded but typically discontinuous: the ratio $\widehat{\mathcal{D}}(\xi)/|\xi|$ has a unit-vector limit along radial directions, so $\sigma$ has a Riesz-type conical singularity rather than a smooth extension. The standard integration-by-parts estimate therefore loses control at the origin and must be combined with a near-zero scaling argument.

Fix $r:=|\ell-m|\geq 1$ and split the Brillouin-zone integral at the scale $\eta=1/r$. With a smooth cutoff $\chi_\eta\in C_c^\infty(\BZ)$ satisfying $\chi_\eta=1$ on $B_\eta(0)$, $\chi_\eta=0$ off $B_{2\eta}(0)$, and $|\partial^\rho_\xi\chi_\eta|\leq C\eta^{-|\rho|}$,
\begin{equation}
  [\mathcal{D}\Fhom^*]_{(\ell,\alpha),(m,\gamma)}
  =\frac{1}{(2\pi)^d}\!\int_\BZ\! e^{i\xi\cdot(\ell-m)}\sigma(\xi)\dd\xi
  =I_{\rm near}(\ell-m)+I_{\rm far}(\ell-m),
  \label{eq:near-far-split}
\end{equation}
with $I_{\rm near}$ carrying the cutoff $\chi_\eta\sigma$ and $I_{\rm far}$ carrying $(1-\chi_\eta)\sigma$.

For $I_{\rm near}$, the symbol $\chi_\eta\sigma$ is supported in $B_{2\eta}(0)$ where $|\sigma|\leq C$ (the regularization absorbed the singularity), so the trivial bound gives
\begin{equation}
  |I_{\rm near}(\ell-m)|\leq C\,\mathrm{vol}(B_{2\eta})\leq C'\eta^d=C'r^{-d}.
  \label{eq:near}
\end{equation}

For $I_{\rm far}$, $(1-\chi_\eta)\sigma$ is supported away from $\xi=0$, so $d+1$ successive integrations by parts in the direction $\hat e:=(\ell-m)/r$ are admissible and produce
\begin{equation}
  |I_{\rm far}(\ell-m)|
  \leq
  C\,r^{-(d+1)}\!\int_{|\xi|\geq\eta}\!\big|\partial^{d+1}_\xi\big[(1-\chi_\eta(\xi))\sigma(\xi)\big]\big|\dd\xi.
  \label{eq:far-ibp}
\end{equation}
We bound the integrand by a dyadic decomposition. Write the support $\{\eta\leq|\xi|\leq c_\BZ\}$ as a disjoint union of shells $A_j:=\{2^{-j-1}c_\BZ<|\xi|\leq 2^{-j}c_\BZ\}$ for $j=0,\dots,J$ with $2^{-J}c_\BZ\sim\eta$. On $A_j$ the regularized symbol satisfies $|\partial^{d+1}_\xi\sigma(\xi)|\leq C\,2^{j(d+1)}c_\BZ^{-(d+1)}$ by \eqref{eq:sigma-deg-zero}, and derivatives hitting $\chi_\eta$ are supported only on $A_J$ with bound $C\eta^{-(d+1)}\mathbf{1}_{A_J}$. Since $\mathrm{vol}(A_j)\sim 2^{-jd}c_\BZ^d$,
\begin{equation}
  \int_{A_j}|\partial^{d+1}_\xi[(1-\chi_\eta)\sigma]|\dd\xi
  \leq C\,2^{j(d+1)}c_\BZ^{-(d+1)}\cdot 2^{-jd}c_\BZ^d
  = C\,c_\BZ^{-1}\,2^{j},
\end{equation}
and the geometric series in $j$ from $0$ to $J$ telescopes to $C\,c_\BZ^{-1}\,2^{J}\leq C'\,\eta^{-1}=C'\,r$. The boundary term on $A_J$ contributes $C\eta^{-(d+1)}\mathrm{vol}(A_J)\leq C'\eta^{-1}=C'\,r$ likewise. Substituting into \eqref{eq:far-ibp},
\begin{equation}
  |I_{\rm far}(\ell-m)|\leq C\,r^{-(d+1)}\cdot r = C\,r^{-d}.
  \label{eq:far}
\end{equation}
This is the standard symbol-of-degree-zero estimate, in which performing $d+1$ rather than $d$ integrations by parts closes the dyadic geometric series into a single power of $r$ without a logarithmic factor. Combining \eqref{eq:near} and \eqref{eq:far} gives \eqref{eq:hom-kernel} with $C=C(\gamma_a,\gamma_{\rm p},\rcut^{(\rm hom)},|\Scal|,d)$ independent of $\ell,m$ and the supercell size; the estimate is uniform for $r\geq 1$, and the case $r=0$ is the trivial bound $|\sigma|_{L^1(\BZ)}<\infty$. The estimate for $\Fhom\mathcal{D}^*$ is the adjoint statement and follows from $\widehat\Fhom(\xi)^*=\widehat\Fhom(\xi)$. The finite-cell version is the periodic Riemann sum approximation of the same Fourier integral, with the singular point $\xi=0$ excluded by the finite-cell zero-mode projection.\hfill$\square$

\subsection{Proof of \texorpdfstring{\cref{lem:block-conv}}{Lemma (Block lattice convolution)}}
\label{app:block-conv}

The block matrix product is
\begin{equation}
  [K_1K_2]_{(\ell,\alpha),(m,\gamma)}
  =\sum_{p\in\Lambda}\sum_{\delta\in\Scal}[K_1]_{(\ell,\alpha),(p,\delta)}\,[K_2]_{(p,\delta),(m,\gamma)},
\end{equation}
each summand bounded by $C_1C_2(1+|\ell-p|)^{-a_1}(1+|p-m|)^{-a_2}$ by \eqref{eq:block-conv-input}. The sublattice sum produces a factor $|\Scal|$, absorbed into the constant. The remaining lattice sum is split into the regions $|p-\ell|\leq r/2$, $|p-m|\leq r/2$, and the complement, where $r=|\ell-m|$. The two endpoint regions give the integrable-side contributions $(1+r)^{-a_2}\sum_{|q|\leq r/2}(1+|q|)^{-a_1}$ and $(1+r)^{-a_1}\sum_{|q|\leq r/2}(1+|q|)^{-a_2}$; the middle region gives $(1+r)^{d-a_1-a_2}$. Combining these three bounds yields \eqref{eq:block-conv-output}, with the logarithm in the critical case $a_1=a_2=d$. The estimate transfers verbatim to finite periodic cells provided the lattice distance is replaced by its periodic counterpart and the supercell separation $L_N$ exceeds $|\ell-m|$.\hfill$\square$

\subsection{Proof of \texorpdfstring{\cref{lem:chain-algebra}}{Lemma (Cancellation-sensitive chain algebra)}}
\label{app:chain-algebra}

The homogeneous kernels in \cref{lem:chain-algebra} are translation-invariant before the coefficient factors $A_{q_j}$ are inserted. Their Fourier symbols satisfy the Marcinkiewicz bounds
\begin{equation}
  |\partial_\xi^\rho \sigma(\xi)|\leq C_\rho |\xi|^{-|\rho|},
  \qquad \xi\neq0,
  \label{eq:chain-marcinkiewicz}
\end{equation}
after the finite-difference cancellations have been included; this is exactly the estimate proved in \eqref{eq:sigma-deg-zero} for $\mathcal{D}\Fhom^*$ and in \eqref{eq:block-marcinkiewicz} for $\mathcal{D}G^{\rm hom}\mathcal{D}^*$. In the revised resolvent order all endpoint factors in \cref{lem:chain-algebra} are degree-zero regularized kernels: $L_{q_1}=\mathcal{D}G^{\rm hom}\mathcal{D}_{q_1}^*$, the interior factors are $\mathcal{D}_{q_j}G^{\rm hom}\mathcal{D}_{q_{j+1}}^*$, and $R_{q_k}=\mathcal{D}_{q_k}\Fhom$.

We first record the compact-core branch because it requires no frequency cancellation. If each $A_q$ is supported in a fixed core $B_{R_{\rm core}}$, all intermediate summation indices in the chain lie in a fixed finite set. The left endpoint gives $(1+\dist(\ell,{\rm core}))^{-d}$, the right endpoint gives $(1+\dist(m,{\rm core}))^{-d}$, and every interior core-to-core factor is a finite matrix with operator norm bounded by the homogeneous constants and $A$. Hence the chain is bounded by
\[
  C(C_{\rm core}A)^k
  (1+\dist(\ell,{\rm core}))^{-d}
  (1+\dist(m,{\rm core}))^{-d}
  \leq C(C_{\rm core}A)^k(1+|\ell-m|)^{-d},
\]
with constants independent of the periodic cell. This proves \eqref{eq:chain-algebra} for compact-core perturbations, after absorbing $C_{\rm core}$ into $C_{\rm hom}$.

It remains to treat the non-compact dyadically admissible branch. Choose a smooth dyadic partition of unity $\{\psi_j\}_{j\geq0}$ on $\BZ\setminus\{0\}$, where shell $j$ has frequency scale $2^{-j}$ and spatial scale $2^j$, and write each homogeneous factor $K$ as $K=\sum_{j\ge0}K_j$ with Fourier symbol $\psi_j\sigma$. For the finite periodic grid the same decomposition is used on the nonzero Brillouin points; the zero point is absent because all operators are restricted to the stable subspace. We use three properties of these shells. First, for a degree-zero regularized factor $K$, the shell kernel $K_j$ satisfies
\begin{equation}
  |K_j(n)|\leq C\,2^{-jd}\big(1+2^{-j}|n|\big)^{-(d+1)},
  \qquad
  \sum_{n\in\Lambda}K_j(n)=0
  \label{eq:dyadic-cz}
\end{equation}
on the zero-frequency-regularized sector; in the finite-cell setting the zero-frequency contribution is absent after the translational zero-mode removal. The zero moment follows because $\psi_j$ is supported away from $\xi=0$ and hence the Fourier value at $\xi=0$ is zero. Second, the finite number of block and bond labels is absorbed into a geometric factor depending on $|\Scal|$ and the stencil but not on $k$. Third, the coefficient factors are used through the admissibility estimate \eqref{eq:dyadic-admissibility}; this is the additional hypothesis that replaces the invalid assertion that arbitrary Schur-bounded multipliers preserve dyadic scale.

The cancellation in \eqref{eq:dyadic-cz}, together with \eqref{eq:dyadic-admissibility}, gives the almost-orthogonality estimate underlying this chain bound,
\begin{equation}
  \|K_i A K_j'\|_{\ell^1\to\ell^\infty}
  \leq C A\,2^{-c|i-j|}\,2^{-d\max\{i,j\}},
  \label{eq:almost-orthog}
\end{equation}
for any two adjacent regularized homogeneous pieces in the chain, with constants uniform over the finitely many block labels and over the admissible coefficient class. This is exactly the place where \cref{def:admissible-perturbation} is used; no comparable estimate is asserted for arbitrary lattice-scale oscillatory multipliers.

Expanding a chain of length $k$ gives a sum over dyadic indices $i_0,\ldots,i_k$ associated with the left endpoint, the $k-1$ interior kernels, and the right endpoint. The product of adjacent estimates contains
\[
  \prod_{j=0}^{k-1}2^{-c|i_j-i_{j+1}|}.
\]
The matrix $S_{ij}:=2^{-c|i-j|}$ has uniformly bounded row and column sums on $\ell^1(\mathbb{N}_0)$ and on $\ell^\infty(\mathbb{N}_0)$. Therefore repeated summation over $i_1,\ldots,i_{k-1}$ contributes at most $\|S\|^{k}$, a geometric factor that is absorbed into $(C_{\rm hom}A)^k$, rather than a factor depending on the number of dyadic shells. This proves the almost-orthogonal scale summation uniformly in the chain length and in the finite periodic grids, where the number of shells grows with the supercell.

After the scale indices have been summed, one top-scale kernel remains. Summing its off-diagonal envelope from \eqref{eq:dyadic-cz} gives
\[
  \sum_{j\geq0}2^{-jd}\big(1+2^{-j}|\ell-m|\big)^{-(d+1)}
  \lesssim (1+|\ell-m|)^{-d},
\]
which proves \eqref{eq:chain-algebra}. The finite periodic version is obtained by the same dyadic decomposition on the discrete Brillouin grid, with the zero mode removed by the stable-subspace projection and constants independent of the number of nonzero grid shells.\hfill$\square$

\subsection{Proof of \texorpdfstring{\cref{lem:localized-perturbation}}{Lemma (Localized normalized perturbation)}}
\label{app:localized-perturbation}

By the assumed finite-range structure of the multilattice site potential, the harmonic Hessian decomposes as
\begin{equation}
  \Hess(u)=\Hhom+\delta\Hess(u),
  \qquad
  \delta\Hess(u)=\sum_q \mathcal{D}_q^*\,A_q(u)\,\mathcal{D}_q,
\end{equation}
where the index $q$ runs over the finite set of bond legs and $A_q(u)$ is the bond-stiffness perturbation associated with leg $q$, i.e. the corresponding stiffness block at displacement $u$ minus its homogeneous value. By construction, $\Fhom^*\Hhom\Fhom=\Id$ on the zero-frequency-regularized sector, hence
\begin{equation}
  T(u)-\Id
  =\Fhom^*\delta\Hess(u)\Fhom
  =\sum_q (\mathcal{D}_q\Fhom)^*\,A_q(u)\,(\mathcal{D}_q\Fhom),
\end{equation}
which is \eqref{eq:localized-perturbation}.

\paragraph{Compact-core coefficient perturbation.}
If $A_q(u)$ is supported within a finite radius $R_{\rm core}$ of the defect core, \cref{lem:hom-kernel} bounds the kernel of each leg $\mathcal{D}_q\Fhom$ uniformly,
\begin{equation}
  \big|[\mathcal{D}_q\Fhom]_{(p,\sigma),(\ell,\alpha)}\big|\leq C(1+|p-\ell|)^{-d}.
  \label{eq:DqF-bound}
\end{equation}
Composing two such kernels with the localized middle factor $A_q(u)$ gives, for any $(\ell,\alpha),(m,\gamma)$,
\begin{align}
  \big|[T(u)-\Id]_{(\ell,\alpha),(m,\gamma)}\big|
  &\leq \sum_q\sum_{p_1,p_2} \big|[\mathcal{D}_q\Fhom]_{(p_1,\cdot),(\ell,\alpha)}\big|\,\|A_q(u)\|\,\big|[\mathcal{D}_q\Fhom]_{(p_2,\cdot),(m,\gamma)}\big|
  \nonumber\\
  &\leq C^2\|A(u)\|_\infty
       \sum_{p_1,p_2\in B_{R_{\rm core}}({\rm core})}(1+|p_1-\ell|)^{-d}(1+|p_2-m|)^{-d}
  \nonumber\\
  &\lesssim (1+\dist(\ell,{\rm core}))^{-d}(1+\dist(m,{\rm core}))^{-d},
\end{align}
where the last step uses that for $\ell,m$ outside $B_{2R_{\rm core}}({\rm core})$ each factor $(1+|p_i-\ell|)^{-d}\lesssim(1+\dist(\ell,{\rm core}))^{-d}$ and the inner sum over $p_1,p_2$ is bounded by $|B_{R_{\rm core}}|^2$, a constant; for $\ell$ inside $B_{2R_{\rm core}}({\rm core})$ the bound $(1+\dist(\ell,{\rm core}))^{-d}\geq c>0$ is trivial. This proves \eqref{eq:localized-perturbation-decay} in the compact-core regime.

\paragraph{Algebraic far-field coefficient perturbation.}
For a relaxed defect with an elastic far field, $u$ is defined on all of $\Lambda$ and the bond-stiffness perturbation is no longer compactly supported: along each bond leg $q$ at base point $p_q$,
\begin{equation}
  \|A_q(u)\|\lesssim\|\partial^3 V\|\,|Du(p_q)|,
\end{equation}
and applying \eqref{eq:DqF-bound} on both legs of the representation \eqref{eq:localized-perturbation} gives
\begin{equation}
  \big|[T(u)-\Id]_{(\ell,\alpha),(m,\gamma)}\big|
  \leq C\,\|\partial^3 V\|\sum_{q\in\Lambda}|Du(p_q)|\,(1+|p_q-\ell|)^{-d}(1+|p_q-m|)^{-d}.
  \label{eq:elastic-step2}
\end{equation}
Suppose the elastic far field decays as $|Du(p)|\lesssim(1+\dist(p,{\rm core}))^{-(d-1+\beta)}$ for some $\beta>0$. Then \eqref{eq:elastic-step2} is precisely the convolution bound \eqref{eq:farfield-conv} with $a(p)\lesssim |Du(p)|$. This is the bound used for finite-cell far-field estimates and for the conditional non-compact regime described in the main text. It should not be simplified to the product \eqref{eq:localized-perturbation-decay}: if $\ell$ and $m$ are both far from the core and $p$ is near $\ell\approx m$, the two kernel factors are $O(1)$ and the contribution scales like $|Du(\ell)|$, which is generally only $(1+\dist(\ell,{\rm core}))^{-(d-1+\beta)}$. Thus the compact-core product estimate and the far-field convolution estimate are separate statements.\hfill$\square$

\subsection{Proof of \texorpdfstring{\cref{lem:block-resolvent}}{Lemma (Block normalized resolvent)}}
\label{app:block-resolvent}

At the homogeneous configuration $T(0)=\Id$ on the zero-frequency-regularized sector, so the unperturbed resolvent is $R^{\rm hom}_z=(z-1)^{-1}\Id$. We choose a contour $C$ with
\begin{equation}
  \dist(C,1)= r_0,\qquad r_0>\|T(u)-\Id\|_{\ell^2\to\ell^2},
  \label{eq:contour-condition}
\end{equation}
which is admissible whenever $u\in\mathcal{B}_{\delta_*}$ by \cref{def:stab-nbhd}. On $C$, the Neumann series
\begin{equation}
  R_z=\sum_{k=0}^{\infty}(z-1)^{-(k+1)}\big(T(u)-\Id\big)^k
  \label{eq:neumann}
\end{equation}
converges in operator norm with summable bounds.

We estimate $(\mathcal{D}\Fhom)R_z$; the second bound in \eqref{eq:block-resolvent} follows by adjoint symmetry. The term $k=0$ is $\mathcal{D}\Fhom$, which is controlled by \cref{lem:hom-kernel}. For $k\geq1$, substituting \eqref{eq:localized-perturbation} gives
\begin{align}
  (\mathcal{D}\Fhom)\big(T(u)-\Id\big)^k
  ={}&\sum_{q_1,\dots,q_k}
  \big(\mathcal{D}G^{\rm hom}\mathcal{D}_{q_1}^*\big)A_{q_1}
  \Big[\prod_{j=1}^{k-1}
  \big(\mathcal{D}_{q_j}G^{\rm hom}\mathcal{D}_{q_{j+1}}^*\big)A_{q_{j+1}}\Big]
  \big(\mathcal{D}_{q_k}\Fhom\big).
  \label{eq:chain}
\end{align}
Indeed, the adjacent homogeneous factors combine through
\begin{equation}
  (\mathcal{D}_{q_j}\Fhom)(\mathcal{D}_{q_{j+1}}\Fhom)^*
  =\mathcal{D}_{q_j}G^{\rm hom}\mathcal{D}_{q_{j+1}}^*.
  \label{eq:DGD-block}
\end{equation}
The Marcinkiewicz estimate \eqref{eq:block-marcinkiewicz} and the near/far argument of \cref{app:hom-kernel} give, uniformly over the finite set of bond and block labels,
\begin{equation}
  \big|[\mathcal{D}_{q_j}G^{\rm hom}\mathcal{D}_{q_{j+1}}^*]_{(\ell,\alpha),(m,\gamma)}\big|
  \leq C_{\rm DGD}(1+|\ell-m|)^{-d},
  \label{eq:DGD-rate}
\end{equation}
and the same rate holds for the endpoint $\mathcal{D}G^{\rm hom}\mathcal{D}_{q_1}^*$, while $\mathcal{D}_{q_k}\Fhom$ is controlled by \cref{lem:hom-kernel}. Thus \eqref{eq:chain} is exactly the chain covered by \cref{lem:chain-algebra}. If $A:=\max_q\|A_q(u)\|_{\ell^2\to\ell^2}$, then
\begin{equation}
  \big|[(\mathcal{D}\Fhom)(T(u)-\Id)^k]_{(\ell,\alpha),(m,\gamma)}\big|
  \leq C_{\rm ch}(C_{\rm hom}A)^k(1+|\ell-m|)^{-d}.
  \label{eq:Neumann-term}
\end{equation}
The admissibility of the coefficients is used only through \cref{lem:chain-algebra}; without it, the critical absolute convolution would produce a logarithmic loss.

Inserting \eqref{eq:Neumann-term} into \eqref{eq:neumann} and using $|z-1|=r_0$ on $C$ gives the scalar majorant
\[
  C_{\rm ch}\sum_{k\geq0}r_0^{-(k+1)}(C_{\rm hom}A)^k.
\]
This is the extra smallness requirement that is not supplied by operator-norm convergence alone. By the definition of the stability neighbourhood, see \eqref{eq:kernel-smallness}, $C_{\rm hom}A\leq\theta r_0$ with $\theta<1$, and hence the majorant is bounded by $C_{\rm ch}r_0^{-1}(1-\theta)^{-1}$. Therefore
\begin{equation}
  \big|[(\mathcal{D}\Fhom)R_z]_{(\ell,\alpha),(m,\gamma)}\big|
  \leq C_{\rm res}\,(1+|\ell-m|)^{-d},
\end{equation}
uniformly in $z\in C$. The adjoint relation
\[
  \big(R_z(\Fhom^*\mathcal{D}^*)\big)^*
  =(\mathcal{D}\Fhom)R_{\bar z}
\]
gives the second estimate in \eqref{eq:block-resolvent}. The finite-cell version is obtained by replacing the Brillouin-zone integrals in \cref{lem:hom-kernel} by their periodic Riemann-sum analogues, with the zero mode removed by the finite-cell projection. This is the resolvent input used in the contour formula for \cref{thm:locality}.\hfill$\square$

\bibliographystyle{cas-model2-names}
\bibliography{refs}

\end{document}